\def\qed{\rule{2mm}{2mm}}
\def\indep{\perp \!\!\! \perp}
\newtheorem{theorem}{Theorem}[section]
\newtheorem{lemma}{Lemma}[section]
\newtheorem{corollary}{Corollary}[section]
\theoremstyle{definition}
\newtheorem{example}{Example}[section]
\newtheorem{remark}{Remark}[section]
\newtheorem{assumption}{Assumption}[section]
\DeclareMathOperator*{\argmax}{argmax}
\DeclareMathOperator*{\argmin}{argmin}
\begin{document}

\title{On the Identifying Power of Generalized Monotonicity \\ for Average Treatment Effects\thanks{We thank the associate editor and three anonymous referees, as well as Sukjin Han, Sokbae Lee, Derek Neal, Kirill Ponomarev, Vitor Possebom, Bernard Salani\'e and Panos Toulis for helpful comments.  Shaikh acknowledges financial support from National Science Foundation Grant SES-2419008.  Vytlacil acknowledges financial support from the Tobin Center for Economic Policy at Yale. Moon acknowledges financial support from the National Science Foundation Graduate Research Fellowship under Grant No. 2141064.}}

\author{Yuehao Bai \\
Department of Economics \\
University of Southern California \\
\url{yuehao.bai@usc.edu}
\and
Shunzhuang Huang \\
Booth School of Business \\
University of Chicago \\
\url{shunzhuang.huang@chicagobooth.edu}
\and
Sarah Moon\\
Department of Economics \\
MIT\\
\url{sarahmn@mit.edu}
\and
Azeem M.\ Shaikh \\
Department of Economics \\
University of Chicago \\
\url{amshaikh@uchicago.edu}
\and
Edward J.\ Vytlacil \\
Department of Economics \\
Yale University \\
\url{edward.vytlacil@yale.edu}
}

\bigskip

\maketitle

\vspace{-0.3in}

\begin{spacing}{1.2}
\begin{abstract}
In the context of a binary outcome, treatment, and instrument, \cite{balke1993nonparametric,balke1997bounds} establish that the monotonicity condition of \cite{imbens1994identification} has no identifying power beyond instrument exogeneity for average potential outcomes and average treatment effects in the sense that adding it to instrument exogeneity does not decrease the identified sets for those parameters whenever those restrictions are consistent with the distribution of the observable data. This paper shows that this phenomenon holds in a broader setting with a multi-valued outcome, treatment, and  instrument, under an extension of the  monotonicity condition  that we refer to as generalized monotonicity. We further show that this phenomenon holds for any restriction on treatment response that is stronger than generalized monotonicity provided that these stronger restrictions do not restrict potential outcomes. Importantly, many models of potential treatments previously considered in the literature imply generalized monotonicity, including the types of monotonicity restrictions considered by \cite{kline2016evaluating}, \cite{kirkeboen2016field}, and \cite{heckman2018unordered}, and the restriction that treatment selection is determined by particular classes of additive random utility models.  We show through a series of examples that restrictions on potential treatments can provide identifying power beyond instrument exogeneity for average potential outcomes and average treatment effects when the restrictions imply that the generalized monotonicity condition is violated.  
In this way, our results shed light on the types of restrictions required for help in identifying average potential outcomes and average treatment effects.

\end{abstract}
\end{spacing}

\noindent KEYWORDS: Multi-valued Treatments, Average Treatment Effects, Endogeneity, Instrumental Variables.

\noindent JEL classification codes: C31, C35, C36

\thispagestyle{empty} 
\newpage
\setcounter{page}{1}

\section{Introduction} \label{sec:intro}

In their analysis of a setting with a binary outcome, treatment, and instrument, \cite{balke1993nonparametric,balke1997bounds} establish that the monotonicity condition of \cite{imbens1994identification} has no identifying power beyond instrument exogeneity for average potential outcomes and the average treatment effect (ATE).  Here, by no identifying power beyond instrument exogeneity, we mean that adding the monotonicity condition of \cite{imbens1994identification} to instrument exogeneity does not decrease the identified sets for those parameters whenever those restrictions are consistent with the distribution of the observable data.\footnote{For settings with possibly non-binary outcomes, \cite{kitagawa2021identification} shows this phenomenon continues to hold for any parameter that is a function of the marginal distributions of potential outcomes.} In this way, their results contrast with the analysis of  \cite{imbens1994identification}, who showed that their monotonicity condition and instrument exogeneity permitted identification of the local average treatment effect (LATE).  This paper studies the extent to which this phenomenon holds in the broader context of a multi-valued outcome, treatment, and instrument. 

We show that a generalization of the monotonicity condition of \cite{imbens1994identification} to this richer setting also has no identifying power beyond instrument exogeneity for average potential outcomes and ATEs. We hereafter refer to this condition more succinctly as generalized monotonicity.  We show further that this result remains true for any restriction that is in fact \emph{stronger} than generalized monotonicity provided that these stronger restrictions do not restrict potential outcomes in a sense that we will make precise later.  This feature of our results is remarkable because one might expect stronger restrictions, possibly with very complicated restrictions on potential treatments that are difficult to fully characterize, to reduce the identified set at least in some instances, but we show that this is not the case.  Using this result, our analysis accommodates many examples of restrictions on potential treatments that have been previously considered in the literature.  In particular, we show that encouragement designs, the types of monotonicity restrictions considered by \cite{kline2016evaluating}, \cite{kirkeboen2016field}, and \cite{heckman2018unordered}, and certain additive random utility models, including some studied in \cite{lee2023treatment}, all satisfy generalized monotonicity.

In establishing our results, we derive the identified sets for average potential outcomes under any such restriction and instrument exogeneity while maintaining the assumption that these restrictions are consistent with the distribution of the observable data.  Our derivations reveal that the form of the resulting identified sets parallels the form of those derived by \cite{balke1993nonparametric,balke1997bounds} for a binary outcome, treatment, and instrument.  An implication of the form of the identified sets is that average potential outcomes and ATEs are \emph{only} identified under an identification-at-infinity-type condition when imposing instrument exogeneity and any such restriction.  In our analysis, we also derive the identified sets for average potential outcomes and ATEs when imposing instrument exogeneity alone whenever the distribution of the observable data is consistent with instrument exogeneity and generalized monotonicity.  As we explain in Example \ref{ex:rct}, this consistency is necessarily satisfied, for example, in the context of a multi-arm randomized controlled trial with one-sided non-compliance when defining the instrument to be random assignment to a given treatment arm.

Our results further provide necessary conditions on restrictions on potential treatments to help in identifying average potential outcomes and ATEs. See Theorem 3.3 and the subsequent discussion for details. We illustrate this phenomenon through a series of examples of models that need not satisfy generalized monotonicity and have identifying power for average potential outcomes and ATEs.

Our paper differs from the closely related literature that, in the context of a binary outcome, treatment, and instrument, considers the identifying power of the monotonicity condition of \cite{imbens1994identification} and instrument exogeneity for the distribution (as opposed to the average) of potential outcomes, or considers the identifying power of these conditions when combined with additional restrictions on potential outcomes.   In particular,  \cite{kamat2019identifying}  shows that the monotonicity condition of \cite{imbens1994identification} does have identifying power beyond instrument exogeneity for the (joint) distribution of potential outcomes.   \cite{machado2019instrumental} show that the monotonicity condition of \cite{imbens1994identification} does have additional identifying power for the ATE beyond instrument exogeneity if one additionally imposes an assumption that requires potential outcomes to vary monotonically with the treatment.  Thus, the phenomenon we explore is sensitive to both the choice of parameter and to whether one imposes  assumptions on potential outcomes.

The remainder of the paper is organized as follows. Section \ref{sec:setup} introduces our formal setup, notation and assumptions, including our generalized monotonicity condition.  Our main identification results are presented in  Section \ref{sec:main}.   In Section \ref{sec:ex}, we provide several examples of restrictions on potential treatments that imply our generalized monotonicity condition, and are thus examples of restrictions that have no identifying power beyond instrument exogeneity for average potential outcomes or ATEs. In contrast, in Section \ref{sec:ex_not}, we provide several examples of restrictions on potential treatments that imply that our generalized monotonicity condition does not hold, and further show that some of these restrictions in fact have identifying power beyond instrument exogeneity for average potential outcomes and ATEs. Proofs of all results can be found in the Appendix.

\section{Setup and Notation}
\label{sec:setup}

Denote by $Y \in \mathcal Y$ a multi-valued outcome of interest, by $D \in \mathcal D$ a multi-valued endogenous regressor, and by $Z \in \mathcal Z$ a multi-valued instrumental variable.\footnote{Our restriction to a multi-valued $Y$ facilitates exposition, but is not essential.  At the expense of slightly more complicated arguments, we can accommodate more generally any real-valued $Y$.}  To rule out degenerate cases, we assume throughout that $2 \leq |\mathcal Y| < \infty$, $2 \leq |\mathcal D| < \infty$, and $2 \leq |\mathcal Z| < \infty$.  Further denote by $Y_d \in \mathcal Y$ the potential outcome if $D = d \in \mathcal D$ and by $D_z \in \mathcal D$ the potential treatment if $Z = z \in \mathcal Z$.  We impose the usual consistency assumption, 
\begin{equation}\label{eq:potential}
Y = \sum_{d \in \mathcal D} Y_d \mathbbm 1\{D = d\} \text{~~and~~} D = \sum_{z \in \mathcal Z} D_z \mathbbm 1\{Z = z\}~. 
\end{equation}
Let $P$ denote the distribution of $(Y, D, Z)$ and $Q$ denote the distribution of $((Y_d : d \in \mathcal D),(D_z : z \in \mathcal Z), Z)$. Note that \eqref{eq:potential} defines a mapping $T$ through
\begin{equation} \nonumber
(Y, D, Z) = T((Y_d : d\in \mathcal D),(D_z : z \in \mathcal Z), Z)~,
\end{equation}
and therefore $P = QT^{-1}$. In what follows, we will say that a given $Q$ rationalizes a given $P$ if $P = QT^{-1}$.

Below we will require that $Q \in \mathbf Q$, where $\mathbf Q$ is a class of distributions satisfying assumptions that we will specify.   Different choices of $\mathbf Q$ represent different assumptions that we impose on the distribution of potential outcomes and potential treatments.  In this sense, $\mathbf Q$ may be viewed as a model for potential outcomes and potential treatments.

Given $P$ and a model $\mathbf Q$, the set of $Q \in \mathbf Q$ that can rationalize $P$ is
\[ \mathbf Q_0(P, \mathbf Q) = \{Q \in \mathbf Q: P = Q T^{-1}\}~, \] i.e., the pre-image of $P$ under $T$.
We say $\mathbf Q$ is consistent with $P$ if and only if $\mathbf Q_0(P, \mathbf Q ) \ne \emptyset$. We will start by considering models $\mathbf Q$ for which every $Q \in \mathbf Q$ satisfies
\begin{assumption}[Instrument Exogeneity] \label{as:exog}
$((Y_d :  d \in \mathcal D),(D_z : z \in \mathcal Z)) \indep Z$ under $Q$.
\end{assumption}
\noindent Our final result on the identifying power of generalized monotonicity will also apply to the weaker exogeneity restriction in \cite{richardson2013single} that avoids the ``cross-world'' restrictions of Assumption \ref{as:exog}; see Assumption \ref{as:marginal-exog} and Corollary \ref{cor:marginal} below in Section \ref{sec:ifconsistent}. If $Q$ satisfies Assumption \ref{as:exog}, then
\begin{equation} \label{eq:p-q}
p_{yd|z} := P \{Y = y, D = d \mid Z = z\}= Q \{Y_d = y, D_z = d \mid Z = z\} = Q \{Y_d = y, D_z = d\}~.
\end{equation}
Since the marginal distribution of $Z$ under $P$ and $Q$ are the same, i.e., for all $z \in \mathcal Z$,
\[ P\{Z = z\} = Q\{Z = z\}~, \] $P = QT^{-1}$ if and only if \eqref{eq:p-q} holds. Thus, if all $Q \in \mathbf Q$ satisfies Assumption \ref{as:exog}, then $\mathbf Q_0(P, \mathbf Q)$ can be simplified as
\begin{equation} \label{eq:Q0PQ}
\mathbf Q_0(P, \mathbf Q ) = \bigg\{Q \in \mathbf Q: ~p_{yd|z} = Q \{Y_d = y, D_z = d\} \text{ for all }  y \in \mathcal Y, d \in \mathcal D, z \in \mathcal Z \bigg\}~.
\end{equation}

Let $\theta(Q) = ( \mathbb{E}_Q[Y_d] : d \in \mathcal{D} )$ denote the vector of average potential outcomes.  For fixed $P$ and $\mathbf Q$, the identified set for $\theta(Q)$ under $P$ relative to $\mathbf{Q}$ is given by 
\begin{equation*}
\Theta_0(P, \mathbf Q) := \{ \theta(Q) : Q \in \mathbf Q_0(P, \mathbf Q)\} ~. 
\end{equation*}
$\Theta_0(P, \mathbf Q) $ is nonempty  whenever $\mathbf Q_0(P, \mathbf Q)$ is nonempty.  By construction, this set is ``sharp'' in the sense that for any value in the set there exists $Q \in \mathbf Q_0(P, \mathbf Q)$ for which $\theta(Q)$ equals the prescribed value. The identified set for $\theta(Q)$ immediately implies that the identified set for any parameter $\lambda = \lambda(\theta)$ is given by $\lambda(\Theta_0(P, \mathbf Q))$. An important example is $\mathbb{E}_Q[Y_j]- \mathbb{E}_Q[Y_k]$, the ATE for treatment $j$ versus treatment $k$.  
 
In the next section, we consider identification in any model of potential treatments that implies the following restriction on all $Q \in \mathbf{Q}$:

\begin{assumption}[Generalized Monotonicity] \label{as:encouragement}
For each $d \in \mathcal{D}$, there exists $z^\ast = z^{\ast}(d, Q) \in \mathcal{Z}$ such that 
\begin{equation}\label{eq:encouragement} Q\{D_{z^\ast} \ne d ,   ~D_{z^{\prime}}=d ~ \mbox{for some}  ~ z^{\prime} \ne z^\ast \}=0~.\end{equation}
\end{assumption}

In what follows, we refer to Assumption \ref{as:encouragement} as generalized monotonicity. It states that under $Q$, for each treatment status $d \in \mathcal D$, there exists a value (possibly depending on $d$ and $Q$) of the instrument $z^\ast \in \mathcal Z$ that maximally encourages all individuals to $d$. Here, by ``maximally encourage'', we mean that if an individual does not choose $d$ when $Z = z^\ast$, then they never choose $d$ for any other value of $Z$.  Equivalently, if an individual chooses $d$ when $Z$ is equal to any value other than $z^\ast$, then they have to choose $d$ when $Z = z^\ast$.  When $\mathcal{D}= \mathcal Z = \{0,1\}$, Assumption \ref{as:encouragement} is equivalent to the monotonicity assumption of \cite{imbens1994identification}.

We emphasize that Assumption \ref{as:encouragement} only requires, for each possible value of the treatment, that there exists a value of the instrument that maximally encourages that treatment; it does not require that the value of the instrument is unique. For a given distribution $Q$ and given treatment $d \in \mathcal D$, let $\mathcal{Z}^\ast(d, Q)$ denote the set of $z^\ast$ that satisfy \eqref{eq:encouragement}.  In this notation, Assumption \ref{as:encouragement} can be restated as $\mathcal{Z}^\ast(d, Q) \neq \emptyset$ for each $d \in \mathcal{D}$. In the statement of Assumption \ref{as:encouragement}, $z^\ast(d, Q)$ is allowed to change across $Q$. The following lemma shows that $\mathcal{Z}^\ast(d, Q)$ is identified from $P$ and is hence the same for all $Q$ that rationalizes $P$ and satisfies Assumptions \ref{as:exog} and \ref{as:encouragement}. In what follows, we will therefore write $\mathcal Z^\ast(d)$ and $z^\ast(d)$ whenever the given distribution $Q$ rationalizes $P$ and satisfies Assumptions \ref{as:exog} and \ref{as:encouragement}. This result generalizes the corresponding result in  \cite{imbens1994identification}.

\begin{lemma} \label{lem:maxp}
    Suppose $Q$ satisfies Assumptions \ref{as:exog} and \ref{as:encouragement} and $P = Q T^{-1}$. Then, $z \in \mathcal Z^\ast(d, Q)$ if and only if
    \begin{equation} \label{eq:zd-id}
    P \{ D = d \mid Z=z \} \ge P \{ D = d \mid Z=z'  \} \text{ for all } z' \in \mathcal Z~.
    \end{equation}
\end{lemma}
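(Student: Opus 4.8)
The plan is to show that membership in $\mathcal Z^\ast(d,Q)$ is equivalent to maximizing $P\{D=d\mid Z=z\}$ by exploiting the key observation that, under Assumption \ref{as:encouragement}, if $z^\ast \in \mathcal Z^\ast(d,Q)$ then the event $\{D_{z^\ast}=d\}$ contains (up to $Q$-null sets) the event $\{D_{z'}=d\}$ for every $z' \in \mathcal Z$. Concretely, \eqref{eq:encouragement} says $Q\{D_{z^\ast}\ne d,\ D_{z'}=d\ \text{for some }z'\}=0$, so for each fixed $z'$ we have $Q\{D_{z'}=d\} = Q\{D_{z'}=d,\ D_{z^\ast}=d\} \le Q\{D_{z^\ast}=d\}$. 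Combining this with Assumption \ref{as:exog} and the identity \eqref{eq:p-q} — summing \eqref{eq:p-q} over $y \in \mathcal Y$ gives $P\{D=d\mid Z=z\} = Q\{D_z=d\}$ — immediately yields that any $z \in \mathcal Z^\ast(d,Q)$ satisfies \eqref{eq:zd-id}.

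For the converse direction, I would first note that Assumption \ref{as:encouragement} guarantees $\mathcal Z^\ast(d,Q)\ne\emptyset$, so fix some $z^\ast \in \mathcal Z^\ast(d,Q)$. By the forward direction, $z^\ast$ attains the maximum of $P\{D=d\mid Z=\cdot\}$, i.e. $Q\{D_{z^\ast}=d\} = \max_{z'}Q\{D_{z'}=d\}$. Now suppose $z$ also attains this maximum, so $Q\{D_z=d\} = Q\{D_{z^\ast}=d\}$. I need to deduce $z \in \mathcal Z^\ast(d,Q)$, i.e. $Q\{D_z\ne d,\ D_{z'}=d\ \text{for some }z'\ne z\}=0$. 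The idea is that for any $z'$, the containment $\{D_{z'}=d\}\subseteq\{D_{z^\ast}=d\}$ (mod $Q$-null) together with $Q\{D_{z'}=d\}=Q\{D_z=d\}=Q\{D_{z^\ast}=d\}$ forces $\{D_{z'}=d\} = \{D_z=d\} = \{D_{z^\ast}=d\}$ all modulo $Q$-null sets; hence whenever $D_{z'}=d$ for some $z'\ne z$ we also have $D_z=d$ outside a null set, which is exactly \eqref{eq:encouragement} with $z$ in the role of $z^\ast$.

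**The main obstacle** is handling the "for some $z'\ne z$" existential carefully across a finite but arbitrary index set $\mathcal Z$: the containment argument must be assembled over all $z'\in\mathcal Z$ simultaneously, using that a finite union of $Q$-null sets is $Q$-null. I would write $\{D_z \ne d,\ D_{z'}=d\ \text{for some }z'\ne z\} = \bigcup_{z'\ne z}\{D_z\ne d,\ D_{z'}=d\}$ and bound each term by $Q\{D_z\ne d,\ D_{z'}=d\} = Q\{D_{z'}=d\} - Q\{D_{z'}=d,\ D_z=d\}$; the second equality of \eqref{eq:p-q} is not needed here, only the set-theoretic manipulation. One then shows $Q\{D_{z'}=d,\ D_z=d\} = Q\{D_z=d\}$ using that $\{D_z=d\}\subseteq\{D_{z^\ast}=d\}$ and $\{D_{z'}=d\}\subseteq\{D_{z^\ast}=d\}$ mod null together with equality of the three probabilities, which pins down $\{D_z=d\}$ and $\{D_{z'}=d\}$ as the same subset of $\{D_{z^\ast}=d\}$ up to null sets. **The remaining step** is purely bookkeeping: invoke \eqref{eq:p-q} to translate $P$-conditional probabilities into $Q$-marginals, which is where Assumption \ref{as:exog} enters, and note that since the right-hand side of \eqref{eq:zd-id} depends only on $P$, the set $\mathcal Z^\ast(d,Q)$ is the same for all such $Q$, justifying the notation $\mathcal Z^\ast(d)$.
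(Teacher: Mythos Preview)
Your forward direction is correct and matches the paper's argument in the main text. The converse direction, however, contains an error: you repeatedly assert that $Q\{D_{z'}=d\} = Q\{D_z=d\} = Q\{D_{z^\ast}=d\}$ for \emph{every} $z' \in \mathcal Z$, and from this conclude that the three events $\{D_{z'}=d\}$, $\{D_z=d\}$, $\{D_{z^\ast}=d\}$ all coincide modulo $Q$-null sets. But nothing in the hypotheses forces $Q\{D_{z'}=d\}$ to equal the maximum for a generic $z'$; you only know this for $z' \in \{z, z^\ast\}$. Consequently your claimed identity $Q\{D_{z'}=d,\, D_z=d\} = Q\{D_z=d\}$ (equivalently $\{D_z=d\} \subseteq \{D_{z'}=d\}$ mod null) is false in general---think of any $z'$ with $Q\{D_{z'}=d\}$ strictly below the maximum.

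The fix is easy and is essentially the paper's argument. From $\{D_z=d\}\subseteq\{D_{z^\ast}=d\}$ mod null together with $Q\{D_z=d\}=Q\{D_{z^\ast}=d\}$, you get $\{D_z=d\}=\{D_{z^\ast}=d\}$ mod null; equivalently, $Q\{D_z\ne d,\, D_{z^\ast}=d\}=0$. Then for any $z'$ the containment $\{D_{z'}=d\}\subseteq\{D_{z^\ast}=d\}=\{D_z=d\}$ mod null gives $Q\{D_{z'}=d,\, D_z=d\}=Q\{D_{z'}=d\}$ (not $Q\{D_z=d\}$), hence $Q\{D_z\ne d,\, D_{z'}=d\}=0$, and the finite union bound finishes. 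The paper's appendix reaches the same conclusion via the decomposition
\[
Q\{D_z\ne d,\ D_{z'}=d\ \text{for some } z'\} = Q\{D_z\ne d,\, D_{z^\ast}=d\} + Q\{D_z\ne d,\, D_{z^\ast}\ne d,\ D_{z'}=d\ \text{for some } z'\ne z^\ast\},
\]
with the first term zero by the step above and the second bounded by $Q\{D_{z^\ast}\ne d,\ D_{z'}=d\ \text{for some } z'\ne z^\ast\}=0$ from Assumption~\ref{as:encouragement}. Once the erroneous ``three-way equality'' is removed, your route and the paper's coincide.
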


Below we prove the necessity of \eqref{eq:zd-id}; sufficiency is established in the appendix. Note that, for any $d \in \mathcal{D}$, $z' \in \mathcal{Z}$, and any  $z \in \mathcal Z^\ast(d, Q)$,
\begin{align*} P \{ D = d \mid Z=z \}  =& ~  Q \{ D_z=d\}\\
=&  ~ Q \{D_z=d, D_{z'}=d\} + Q \{D_z=d, D_{z'} \ne d\}\\
=&  ~ Q \{D_{z'}=d\} + Q \{D_z=d, D_{z'} \ne d\}\\
\ge & ~ Q \{D_{z'}=d\}\\
= & ~ P \{ D = d \mid Z=z'  \}~,
\end{align*}
where the first and last equalities exploit Assumption  \ref{as:exog}, and the third equality uses Assumption  \ref{as:encouragement}.

\section{Main Result} 
\label{sec:main}

In order to describe our main result, we first introduce some further notation.  Denote by $\mathbf Q_E^\ast$ (where $E$ stands for exogeneity) the set of all distributions that satisfy Assumption \ref{as:exog} and by $\mathbf Q_{E,M}^\ast$ (where $M$ stands for generalized monotonicity) the set of all distributions that satisfy Assumptions \ref{as:exog} and \ref{as:encouragement}. We will further require that the model does not restrict potential outcomes in the following sense:
\begin{assumption}[Unrestricted Potential Outcomes] \label{as:norestrictY}
Let $Q \in \mathbf Q$ and $Q' \in \mathbf Q_E^\ast$. If the distributions of $(D_z: z \in \mathcal Z)$ under $Q$ and $Q'$ are the same, then $Q' \in \mathbf Q$.
\end{assumption}

In terms of this notation, our main result can be stated as follows:

\begin{theorem} \label{theorem:idpower}
Suppose $\mathbf Q \subseteq \mathbf Q_{E,M}^\ast$ and $\mathbf Q$ satisfies Assumption \ref{as:norestrictY}.
Then, for any $P$ such that $\mathbf Q_0(P, \mathbf Q) \neq \emptyset$, we have $\Theta_0(P, \mathbf Q) = \Theta_0(P, \mathbf Q_{E,M}^\ast) = \Theta_0(P, \mathbf Q_E^\ast)$.
\end{theorem}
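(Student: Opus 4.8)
The plan is to establish the two equalities $\Theta_0(P,\mathbf Q) = \Theta_0(P,\mathbf Q_{E,M}^\ast) = \Theta_0(P,\mathbf Q_E^\ast)$ by a chain of inclusions, exploiting the fact that $\mathbf Q \subseteq \mathbf Q_{E,M}^\ast \subseteq \mathbf Q_E^\ast$ gives ``$\subseteq$'' for free in each case (a smaller model yields a smaller identified set, since $\mathbf Q_0(P,\mathbf Q) \subseteq \mathbf Q_0(P,\mathbf Q_{E,M}^\ast) \subseteq \mathbf Q_0(P,\mathbf Q_E^\ast)$). So the real content is the reverse inclusion $\Theta_0(P,\mathbf Q_E^\ast) \subseteq \Theta_0(P,\mathbf Q)$, which, combined with the free inclusions, collapses all three sets. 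Fix $P$ with $\mathbf Q_0(P,\mathbf Q) \neq \emptyset$; by hypothesis this means there is some $\tilde Q \in \mathbf Q$ rationalizing $P$, and since $\mathbf Q \subseteq \mathbf Q_{E,M}^\ast$, the distribution $P$ is consistent with Assumptions \ref{as:exog} and \ref{as:encouragement}, so Lemma \ref{lem:maxp} applies and the sets $\mathcal Z^\ast(d)$ are well-defined from $P$.

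The heart of the argument is a \textbf{construction}: given an arbitrary $Q \in \mathbf Q_0(P,\mathbf Q_E^\ast)$ (so $Q$ satisfies only exogeneity and rationalizes $P$), I will build a new distribution $Q'$ that (i) rationalizes $P$, (ii) satisfies Assumption \ref{as:exog}, (iii) has the \emph{same} marginal distribution of $(Y_d : d\in\mathcal D)$ as $Q$, so that $\theta(Q') = \theta(Q)$, and (iv) has a degenerate-enough joint law of $(D_z : z \in \mathcal Z)$ that it satisfies generalized monotonicity — in fact I would make the potential-treatment vector take, for each realization, a form where $D_z = d$ for $z = z^\ast(d)$ forces the pattern required by \eqref{eq:encouragement}. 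The natural move, paralleling Balke--Pearl, is to decouple the joint distribution of potential outcomes across $d$ from the joint distribution of potential treatments across $z$: specify $Q'$ so that, conditional on the vector $(D_z : z\in\mathcal Z)$, the potential outcomes $(Y_d : d \in \mathcal D)$ are drawn from the conditional law under $Q$ given the observed $(Y_d, D_z)$ pairs that $p_{yd|z}$ pins down — the key is that exogeneity makes $p_{yd|z} = Q\{Y_d = y, D_z = d\}$ depend only on the pair $(d,z)$, so one has freedom to reshuffle the ``cross-$d$'' and ``cross-$z$'' dependence while preserving every such bivariate margin. Then Assumption \ref{as:norestrictY} is exactly what lets us conclude $Q' \in \mathbf Q$ once $Q' \in \mathbf Q_E^\ast$ and $Q'$ has the same law of $(D_z : z \in \mathcal Z)$ as some member of $\mathbf Q$ — here that member would be one whose potential-treatment law we match by design, e.g. a member whose existence we can assert because $\mathbf Q \subseteq \mathbf Q_{E,M}^\ast$ contains distributions with the requisite ``maximally-encouraging'' treatment-response structure.

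The main obstacle I anticipate is verifying that the constructed $Q'$ simultaneously rationalizes $P$ \emph{and} has a treatment-response law that both satisfies generalized monotonicity and coincides with the treatment-response law of some $Q^\dagger \in \mathbf Q$ (needed to invoke Assumption \ref{as:norestrictY}). Rationalizing $P$ requires $p_{yd|z} = Q'\{Y_d = y, D_z = d\}$ for all $(y,d,z)$; designing the joint law of $(D_z)$ to be ``monotone'' while keeping each marginal $Q'\{D_z = d\} = P\{D = d \mid Z = z\}$ fixed is a combinatorial feasibility question — essentially, can one always couple the $|\mathcal Z|$ marginals $\{P\{D=d\mid Z=z\}\}_d$ into a joint law supported on response types consistent with $\mathcal Z^\ast(d)$? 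I expect the answer is yes precisely because Lemma \ref{lem:maxp} guarantees $P\{D = d \mid Z = z^\ast(d)\} \ge P\{D = d \mid Z = z\}$ for all $z$, which is the inequality needed to nest the ``$D_{z^\ast(d)}=d$'' event over the ``$D_z = d$'' events type-by-type; and once the $(D_z)$-law is fixed, matching the bivariate margins $Q'\{Y_d=y,D_z=d\}=p_{yd|z}$ and getting the correct univariate law $Q'\{Y_d=y\}=\sum_d' $-free quantity reduces to a transportation/Birkhoff-type argument on each fiber. I would carry this out by first constructing the $(D_z : z\in\mathcal Z)$-marginal of $Q'$ as an explicit coupling keyed to $\mathcal Z^\ast(\cdot)$, then defining $Q'$ as that coupling ``glued'' to the $Y$-side margins via a fiberwise product that reproduces each $p_{yd|z}$, and finally checking Assumptions \ref{as:exog} and \ref{as:encouragement} and applying Assumption \ref{as:norestrictY}.
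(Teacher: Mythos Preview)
Your strategy --- prove the reverse inclusion $\Theta_0(P,\mathbf Q_E^\ast)\subseteq\Theta_0(P,\mathbf Q)$ by constructing, for each $Q\in\mathbf Q_0(P,\mathbf Q_E^\ast)$, some $Q'\in\mathbf Q_0(P,\mathbf Q)$ with $\theta(Q')=\theta(Q)$ --- is sound in outline, but the specific construction you propose cannot work. The central problem is your requirement (iii), that $Q'$ share the full joint marginal of $(Y_d:d\in\mathcal D)$ with $Q$. This is provably impossible in general: the paper itself cites \cite{kamat2019identifying} for the fact that monotonicity \emph{does} have identifying power beyond exogeneity for the joint distribution of potential outcomes, so there exist $P$ and $Q\in\mathbf Q_0(P,\mathbf Q_E^\ast)$ whose $(Y_d)$-joint cannot be matched by any $Q'\in\mathbf Q_0(P,\mathbf Q_{E,M}^\ast)$, hence not by any $Q'\in\mathbf Q_0(P,\mathbf Q)$. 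Your ``fiberwise product'' glue therefore cannot exist for such $Q$. A secondary gap: to invoke Assumption~\ref{as:norestrictY} you need the $(D_z)$-law of $Q'$ to agree with that of an element of $\mathbf Q$, and since $\mathbf Q$ may be strictly smaller than $\mathbf Q_{E,M}^\ast$, an arbitrary monotone coupling you build need not qualify --- the containment $\mathbf Q\subseteq\mathbf Q_{E,M}^\ast$ tells you elements of $\mathbf Q$ are monotone, not that $\mathbf Q$ contains every monotone law. You must reuse $\tilde Q$'s $(D_z)$-law, so your ``explicit coupling keyed to $\mathcal Z^\ast(\cdot)$'' is the wrong object.

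The paper's route reverses the roles: it fixes the $(D_z)$-law (namely $\tilde Q$'s) and perturbs only the potential outcomes $Y_d$ for treatments $d$ that a given response type never takes. This automatically keeps $Q'\in\mathbf Q$ via Assumption~\ref{as:norestrictY} and automatically preserves every $p_{yd|z}$, since those margins involve $Y_d$ only on $\{D_z=d\}$. Varying the never-taker coordinates sweeps out the box \eqref{eq:bounds1} (Theorem~\ref{theorem:ev}); that box is then shown to coincide with the Manski--Robins mean-independence bounds (Lemmas~\ref{lem:mean_iv} and~\ref{lemma:ev2}), and a sandwich argument yields $\Theta_0(P,\mathbf Q)=\Theta_0(P,\mathbf Q_E^\ast)$ (Theorem~\ref{theorem:same}). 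In short, Assumption~\ref{as:norestrictY} is a licence to change \emph{outcomes} while holding treatments fixed, not the reverse; the paper's construction exploits exactly that, whereas yours works against it.
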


\noindent Theorem \ref{theorem:idpower} describes the sense in which restrictions on potential treatments stronger than generalized monotonicity have no identifying power for average potential outcomes and ATEs provided that these stronger restrictions do not restrict potential outcomes.  This result is established through Theorems \ref{theorem:ev} and \ref{theorem:same} below.  Theorem \ref{theorem:ev}, developed in Section \ref{sec:idsets}, characterizes $\Theta_0(P, \mathbf Q)$ for any model $\mathbf Q$ that is stronger than Assumptions \ref{as:exog} and \ref{as:encouragement}, i.e., $\mathbf Q \subseteq \mathbf Q_{E,M}^\ast$, and does not restrict potential outcomes in the sense of Assumption \ref{as:norestrictY}.  The result shows, in particular, that $\Theta_0(P, \mathbf Q) = \Theta_0(P, \mathbf Q_{E,M}^\ast)$ for any such model $\mathbf Q$ whenever $\mathbf Q_0(P, \mathbf Q) \neq \emptyset$.  Remarkably, this result holds even if the model $\mathbf Q$ is \emph{strictly} more restrictive than Assumptions \ref{as:exog} and \ref{as:encouragement} in the sense that  $\mathbf Q \subsetneqq \mathbf Q_{E,M}^\ast$.  On the other hand, Theorem \ref{theorem:same} and Corollary \ref{cor:exog}, developed in Section \ref{sec:ifconsistent}, show that $\Theta_0(P, \mathbf Q_{E,M}^\ast) = \Theta_0(P, \mathbf Q_E^\ast)$ whenever $\mathbf Q_0(P, \mathbf Q_{E,M}^\ast) \neq \emptyset$.  Together, these results immediately imply Theorem \ref{theorem:idpower}. In fact, Theorem \ref{theorem:same} shows the stronger result that if a submodel of instrument exogeneity and generalized monotonicity is consistent with $P$, then any model sandwiched between this submodel and the model that only assumes mean independence leads to the same identified set for average potential outcomes and ATEs.  This observation  allows us to establish that generalized monotonicity also has no identifying power for average potential outcomes and ATEs beyond the weaker exogeneity restriction of \cite{richardson2013single}.

\subsection{Identified Sets for $\mathbf Q \subseteq \mathbf Q_{E,M}^\ast$} \label{sec:idsets}
For $d \in \mathcal D$ and $z \in \mathcal Z$, define $\beta_{d|z} = \mathbb{E}_P[ Y \mathbbm{1}\{D=d\} \mid Z=z]$. In addition, define $y^L = \min(\mathcal{Y})$ and $y^U = \max(\mathcal{Y})$.
The following theorem derives the identified set for $\theta(Q)$, relative to any model that assumes instrument exogeneity and generalized monotonicity but does not restrict potential outcomes. Note in particular that the assumptions allow for $\mathbf Q \subsetneqq \mathbf Q_{E,M}^\ast$, in which case the model assumes \emph{strictly} more than instrument exogeneity and generalized monotonicity.

\begin{theorem}\label{theorem:ev}
Suppose $\mathbf Q \subseteq \mathbf Q_{E,M}^\ast$ and $\mathbf Q$ satisfies Assumption \ref{as:norestrictY}.
Then, for any $P$ such that $\mathbf Q_0(P, \mathbf Q) \neq \emptyset$, 
\begin{equation} \label{eq:bounds1}
\Theta_0(P, \mathbf Q) = \prod_{d \in \mathcal{D}} \left[ \beta_{d|z^\ast(d)} + y^L (1- \sum_{y \in \mathcal{Y}} p_{yd|z^\ast(d)}) )   ,~ \beta_{d|z^\ast(d)} + y^U (1- \sum_{y \in \mathcal{Y}} p_{yd|z^\ast(d)}) )  \right]~.
\end{equation}
\end{theorem}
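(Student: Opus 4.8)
The plan is to show the two inclusions $\Theta_0(P,\mathbf Q) \subseteq \Theta_0(P,\mathbf Q_{E,M}^\ast)$ and the reverse-plus-explicit-box separately, but in fact it is cleaner to prove directly that $\Theta_0(P,\mathbf Q)$ equals the product of intervals on the right-hand side of \eqref{eq:bounds1}, since the sandwiched-model conclusion $\Theta_0(P,\mathbf Q)=\Theta_0(P,\mathbf Q_{E,M}^\ast)$ then follows by applying the same formula with $\mathbf Q=\mathbf Q_{E,M}^\ast$. First I would fix a $d\in\mathcal D$ and use Lemma \ref{lem:maxp} to identify $z^\ast(d)$ from $P$; this is legitimate because $\mathbf Q_0(P,\mathbf Q)\neq\emptyset$ and every $Q\in\mathbf Q\subseteq\mathbf Q_{E,M}^\ast$ satisfies Assumptions \ref{as:exog} and \ref{as:encouragement}. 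Then for any $Q\in\mathbf Q_0(P,\mathbf Q)$, I would decompose
\begin{equation*}
\mathbb E_Q[Y_d] = \mathbb E_Q[Y_d\,\mathbbm 1\{D_{z^\ast(d)}=d\}] + \mathbb E_Q[Y_d\,\mathbbm 1\{D_{z^\ast(d)}\neq d\}]~.
\end{equation*}
The first term equals $\beta_{d|z^\ast(d)}$ by \eqref{eq:p-q} and the consistency relation \eqref{eq:potential}: indeed $\mathbb E_Q[Y_d\mathbbm 1\{D_{z^\ast(d)}=d\}] = \sum_{y\in\mathcal Y} y\, Q\{Y_d=y, D_{z^\ast(d)}=d\} = \sum_{y} y\, p_{yd|z^\ast(d)} = \mathbb E_P[Y\mathbbm 1\{D=d\}\mid Z=z^\ast(d)]$. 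The second term is bounded: since $Y_d\in[y^L,y^U]$ and $Q\{D_{z^\ast(d)}\neq d\} = 1-\sum_{y}p_{yd|z^\ast(d)}$ (again by \eqref{eq:p-q}), it lies in $[y^L(1-\sum_y p_{yd|z^\ast(d)}),\, y^U(1-\sum_y p_{yd|z^\ast(d)})]$. This establishes that $\Theta_0(P,\mathbf Q)$ is contained in the product box, coordinate by coordinate.

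For the reverse inclusion — sharpness — I need to show that every point of the product box is attained by some $Q\in\mathbf Q_0(P,\mathbf Q)$. Here is where Assumption \ref{as:norestrictY} does the work: it suffices to construct, for any prescribed target vector $(t_d:d\in\mathcal D)$ in the box, a single distribution $Q^\star$ that (i) rationalizes $P$, (ii) satisfies Assumptions \ref{as:exog} and \ref{as:encouragement}, and (iii) has $\mathbb E_{Q^\star}[Y_d]=t_d$ for all $d$; Assumption \ref{as:norestrictY} then guarantees that whatever distribution over $(D_z:z\in\mathcal Z)$ this $Q^\star$ induces, it can be paired with the potential-outcome structure we chose while remaining in $\mathbf Q$, so $Q^\star\in\mathbf Q_0(P,\mathbf Q)$. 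The natural construction is a ``maximally encouraged'' selection rule: take the potential treatment vector to be $D_z\equiv d$ whenever the individual is of a type that ever takes $d$, and otherwise route through the $z^\ast$'s; more concretely, one builds $Q^\star$ so that for each $d$ the event $\{D_{z}=d\text{ for some }z\}$ coincides with $\{D_{z^\ast(d)}=d\}$, which forces Assumption \ref{as:encouragement}, and so that the conditional law of $Y_d$ given $\{D_{z^\ast(d)}=d\}$ matches $p_{\cdot d|z^\ast(d)}$ (to rationalize $P$) while the conditional law of $Y_d$ given $\{D_{z^\ast(d)}\neq d\}$ is a two-point mixture on $\{y^L,y^U\}$ with weights chosen to hit $t_d$. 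Since $t_d$ ranges over exactly the interval whose endpoints correspond to putting all the residual mass at $y^L$ or at $y^U$, such weights exist.

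The main obstacle is the joint construction across $d$: the events $\{D_{z^\ast(d)}=d\}$ for different $d$ are not independent of one another — they are all functions of the same potential-treatment vector $(D_z:z\in\mathcal Z)$ — so I must exhibit one coherent distribution on $(D_z:z\in\mathcal Z)$ that simultaneously reproduces the $|\mathcal Y|\cdot|\mathcal D|\cdot|\mathcal Z|$ numbers $p_{yd|z}$, satisfies generalized monotonicity, and leaves enough freedom in the potential outcomes to hit an arbitrary point of the box. I expect to handle this by starting from \emph{some} $Q\in\mathbf Q_0(P,\mathbf Q)$ (which exists by hypothesis), keeping its distribution over $(D_z:z\in\mathcal Z)$ intact — this automatically preserves both rationalization of the $D$-margins $p_{d|z}=P\{D=d\mid Z=z\}$ and Assumption \ref{as:encouragement}, and by Assumption \ref{as:norestrictY} keeps us in $\mathbf Q$ — and then re-specifying only the conditional distribution of $(Y_d:d\in\mathcal D)$ given $(D_z:z\in\mathcal Z)$ and $Z$, independently across the relevant cells, so as to match $p_{yd|z}$ on the cells where $D_z=d$ and to place the residual mass on $\{y^L,y^U\}$ with the weight that delivers $t_d$. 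The only point requiring care is checking that this re-specification is consistent across the different conditioning events a given $Y_d$ appears in (it appears in the cell $\{D_{z^\ast(d)}=d\}$, which pins down part of its law, and in the complementary cell, which we are free to choose), and that the resulting $Q^\star$ still satisfies Assumption \ref{as:exog}; both follow because we only touch the $Y$-conditionals and leave the $(D_z)$-and-$Z$ structure — hence the independence in Assumption \ref{as:exog} and the monotonicity in Assumption \ref{as:encouragement} — untouched.
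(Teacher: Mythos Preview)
Your proposal is correct and follows essentially the same route as the paper: the containment direction via the identical decomposition of $\mathbb E_Q[Y_d]$, and sharpness by starting from some $Q\in\mathbf Q_0(P,\mathbf Q)$, retaining its $(D_z:z\in\mathcal Z)$-distribution (so Assumption~\ref{as:norestrictY} keeps the new distribution in $\mathbf Q$), and re-specifying only the conditional law of each $Y_d$ on the never-taker event $\{D_{z^\ast(d)}\neq d\}$ as a two-point mixture on $\{y^L,y^U\}$. The paper packages the step you flag as ``requiring care''---that this modification still rationalizes all of the $p_{yd|z}$, not just those with $z=z^\ast(d)$---as a standalone lemma in treatment-response-type notation and carries out the re-specification via an explicit iteration over the never-taken treatments for each response type, but the substance is the same as what you outline.
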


We now describe some intuition for Theorem \ref{theorem:ev}.  Note that the distribution of the data, $P$,  only contains information on the distribution of $Y_d$  for those individuals who would take treatment $d$ for some value of the instrument. On the other hand, it contains no information on the distribution of $Y_d$ for those individuals who would not take that treatment for any value of the instrument.  Assumption \ref{as:encouragement} implies that individuals would take treatment $d$ at some value of the instrument if and only if $D_{z^{\ast}(d)}=d$, i.e., when maximally encouraged to do so.  Assumption \ref{as:exog} implies $ \beta_{d|z^\ast(d)} =    \mathbb{E}_Q[ \mathbbm{1}\{D_{z^{\ast}(d)}=d\} Y_d  ] $, and thus captures all the information from $P$ relevant to $\mathbb{E}_Q[Y_d]$, which is the first part of the lower and upper bounds in  \eqref{eq:bounds1}. In contrast, Assumptions \ref{as:exog} and \ref{as:encouragement} imply that the probability that an individual would not take treatment $d$ for any value of $Z$ is identified from $P$ to be $Q\{D_{z^{\ast}(d)}\ne d\} = 1- \sum_{y \in \mathcal{Y}} p_{yd|z^\ast(d)},$ but $P$ contains no information on the distribution of $Y_d$ for such individuals.  Furthermore, that $\mathbf Q$ satisfies Assumption \ref{as:norestrictY}  implies that the model does not restrict the distribution of $Y_d$ for such individuals beyond $y \in \mathcal{Y}$, so that we can set $Y_d$ to be any value between $y^L$ and $y^U$ for these individuals, which constitutes the second part of the upper and lower bounds in \eqref{eq:bounds1}.

\begin{remark}
Under the instrument exogeneity and  monotonicity assumptions of  \cite{imbens1994identification},
\cite{balke1993nonparametric,balke1997bounds} found the same form of the identified set for $\theta(Q)$ as \eqref{eq:bounds1} when $\mathcal{Y}=\mathcal{D}=\mathcal{Z}=\{0,1\}$. Theorem \ref{theorem:ev} therefore generalizes the result of \cite{balke1993nonparametric,balke1997bounds} to more than two treatment arms and instrument values,  to outcomes taking more than two values, and, more surprisingly, to show that the same identified set holds when imposing possibly \emph{stronger} restrictions on potential treatments than generalized monotonicity.
\end{remark}

Theorem \ref{theorem:ev} immediately implies the following result on the identified sets for the ATE of treatment $j$ versus $k$:

\begin{corollary}\label{cor:ev2}
Under the assumptions of Theorem \ref{theorem:ev},  the identified set for $\mathbb E_Q[Y_j -Y_k]$ is  given by:
  \begin{multline} \label{eq:bounds2} 
  \biggl[
(\beta_{j|z^\ast(j)}- \beta_{k|z^\ast(k)}) + (y^L -y^U) 
  + y^U \sum_{y \in \mathcal{Y}}  p_{yk|z^\ast(k)} - y^L \sum_{y \in \mathcal{Y}} p_{yj|z^\ast(j)}~, \\
 ~ 
    (\beta_{j|z^\ast(j)}- \beta_{k|z^\ast(k)}) + (y^U-y^L) 
  + y^L \sum_{y \in \mathcal{Y}} p_{yk|z^\ast(k)} - y^U \sum_{y \in \mathcal{Y}} p_{yj|z^\ast(j)}
  \biggr]~. \end{multline}
\end{corollary}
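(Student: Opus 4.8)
The plan is to read off the identified set for $\mathbb{E}_Q[Y_j-Y_k]$ directly from Theorem \ref{theorem:ev}, by pushing the product-of-intervals description of $\Theta_0(P,\mathbf Q)$ through the linear functional $\lambda(\theta)=\theta_j-\theta_k$. As observed in Section \ref{sec:setup}, the identified set for any $\lambda=\lambda(\theta)$ equals $\lambda(\Theta_0(P,\mathbf Q))$, so it suffices to compute the image under $\lambda$ of the box appearing in \eqref{eq:bounds1} (here, as is implicit in speaking of the ATE of ``treatment $j$ versus treatment $k$,'' I take $j\neq k$).

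First I would introduce shorthand for the endpoints of the $d$-th factor interval in \eqref{eq:bounds1}, writing $\ell_d=\beta_{d|z^\ast(d)}+y^L(1-\sum_{y\in\mathcal Y}p_{yd|z^\ast(d)})$ and $u_d=\beta_{d|z^\ast(d)}+y^U(1-\sum_{y\in\mathcal Y}p_{yd|z^\ast(d)})$, so that Theorem \ref{theorem:ev} reads $\Theta_0(P,\mathbf Q)=\prod_{d\in\mathcal D}[\ell_d,u_d]$. Because this is a Cartesian product, the $j$-th and $k$-th coordinates of $\theta$ range over $[\ell_j,u_j]$ and $[\ell_k,u_k]$ freely and independently of one another, so
\[ \lambda(\Theta_0(P,\mathbf Q)) = \{\, a-b : a\in[\ell_j,u_j],\ b\in[\ell_k,u_k] \,\}~. \]
Next I would note that this set is exactly the interval $[\ell_j-u_k,\ u_j-\ell_k]$: the map $(a,b)\mapsto a-b$ is continuous on the connected domain $[\ell_j,u_j]\times[\ell_k,u_k]$, so its image is an interval, with minimum attained at $(a,b)=(\ell_j,u_k)$ and maximum at $(a,b)=(u_j,\ell_k)$. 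Substituting the definitions of $\ell_j,u_j,\ell_k,u_k$ and collecting terms — the $\beta$ terms combine to $\beta_{j|z^\ast(j)}-\beta_{k|z^\ast(k)}$, the constants to $\pm(y^L-y^U)$, and the rest rearranges into $\pm\big(y^U\sum_{y}p_{yk|z^\ast(k)}-y^L\sum_{y}p_{yj|z^\ast(j)}\big)$ and its counterpart — yields precisely \eqref{eq:bounds2}.

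There is essentially no hard step here: the corollary is an immediate consequence of the fact that \eqref{eq:bounds1} is a genuine Cartesian product, which is what permits the $j$- and $k$-coordinates to be optimized separately. The only points that deserve a word of care are (i) making this independence explicit, and (ii) the routine but slightly lengthy algebraic simplification of $\ell_j-u_k$ and $u_j-\ell_k$ into the displayed form \eqref{eq:bounds2}.
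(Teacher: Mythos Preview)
Your proposal is correct and follows the same approach as the paper: the paper's proof is a single line noting that the result follows immediately from Theorem \ref{theorem:ev} because $\mathbb{E}_Q[Y_j]-\mathbb{E}_Q[Y_k]$ is a function of $\theta(Q)$. You have simply spelled out the details of pushing the Cartesian product through the difference functional and the algebraic simplification, which the paper leaves implicit.
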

e

\begin{remark}
From Corollary \ref{cor:ev2}, the width of the identified set for $\mathbb E_Q[Y_j -Y_k]$ under the assumptions of Theorem \ref{theorem:ev} is given by $$(y^U-y^L) \left(P\{D \ne j \mid Z= z^\ast(j)\} + P\{D \ne k \mid Z= z^\ast(k)\} \right)~,$$ which, following Lemma \ref{lem:maxp}, equals $$(y^U-y^L) \left( \min_{z \in \mathcal{Z}} P\{D \ne j \mid Z= z\} +\min_{z \in \mathcal{Z}} P\{D \ne k \mid Z= z\} \right)~. $$ Therefore, when imposing generalized monotonicity, as well as when imposing any restriction implying generalized monotonicity, the ATE of $j$ versus $k$ is \emph{only} identified ``at infinity'' \citep{heckman1990varieties,andrews1998semiparametric} in the sense that identification requires
\begin{equation}\label{eq:ident_infinity}
    \min_{z \in \mathcal{Z}} P\{D \ne j \mid Z= z\}= \min_{z \in \mathcal{Z}} P\{D \ne k \mid Z= z\}=0~.
\end{equation} In other words, identification of the the ATE of $j$ versus $k$ under generalized monotonicity or under any restriction implying generalized monotonicity requires that there is some value of the instrument such that everyone takes treatment $j$ at that value of the instrument, and some value of the instrument such that everyone takes treatment $k$ at that value of the instrument.  In contrast, by imposing restrictions on potential treatments that imply that generalized monotonicity is violated, the ATE can sometimes be identified without \eqref{eq:ident_infinity} even when potential outcomes are unrestricted; see Example \ref{eg:alt} in Section \ref{sec:ex_not} below.
\end{remark}
 
\subsection{Identifying Power of Generalized Monotonicity} \label{sec:ifconsistent}
Theorem \ref{theorem:ev} above establishes that, for possibly multi-valued $Y$, $D$, and $Z$, and any model $\mathbf Q \subseteq \mathbf Q_{E,M}^\ast$ such that $\mathbf Q$ does not restrict the potential outcomes, if $\mathbf Q_0(P, \mathbf Q) \neq \emptyset$, then $\Theta_0(P, \mathbf Q)$ equals \eqref{eq:bounds1}. We now show that $\Theta_0(P, \mathbf Q_{E,M}^\ast) = \Theta_0(P, \mathbf Q_E^\ast)$ as long as $\mathbf Q_0(P, \mathbf Q_{E,M}^\ast) \neq \emptyset$, so that the identified set for $\theta(Q)$ assuming instrument exogeneity and generalized monotonicity coincides with the identified set assuming instrument exogeneity alone, as long as both assumptions are consistent with the distribution of the observed data. Our result therefore generalizes \cite{balke1993nonparametric,balke1997bounds}, which study the case of binary $Y$, $Z$, and $D$.

In order to do so, we consider a mean independence assumption even weaker than Assumption \ref{as:exog}, and show the identified set under this even weaker assumption is also \eqref{eq:bounds1}. A sandwich argument will then lead to our desired result. In particular, we first establish that the identified set for $\theta(Q)$ in \eqref{eq:bounds1} coincides with the identified set under the weaker mean independence assumption considered by 
\cite{robins1989analysis} and \cite{manski1990nonparametric}:
\begin{assumption}[Mean Independence] \label{as:mean_iv}
$\mathbb{E}_Q[Y_d \mid Z=z ]= \mathbb{E}_Q[Y_d]$ for all $d \in \mathcal{D}$ and $z \in \mathcal{Z}$.
\end{assumption}
Note Assumption \ref{as:mean_iv} is weaker than instrument exogeneity in Assumption \ref{as:exog}, and does not imply  \eqref{eq:p-q}. Let $\mathbf Q_{\it MI}^\ast$ denote the set of all $Q$ that satisfies Assumption \ref{as:mean_iv} (where $MI$ stands for mean independence). Following \cite{robins1989analysis} and \cite{manski1990nonparametric}, the following lemma derives the identified set for $\theta(Q)$ under mean independence:

\begin{lemma} \label{lem:mean_iv}
    Suppose $\mathbf Q_0(P, \mathbf Q_{\it MI}^\ast) \neq \emptyset$. Then,
    \begin{equation}\label{eq:meaniv2}
    \Theta_0(P, \mathbf Q_{\it MI}^\ast) = \prod_{d \in \mathcal{D} } \left[  \max_{z \in \mathcal{Z}}  \{ \beta_{d|z}
  + y^L (1- \sum_{y \in \mathcal{Y}} p_{yd \mid z} )   \} , ~ \min_{z \in \mathcal{Z}} \{ \beta_{d|z}
  + y^U (1- \sum_{y \in \mathcal{Y}} p_{yd \mid z} )  \}  \right].
\end{equation}
\end{lemma}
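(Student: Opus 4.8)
The plan is to prove Lemma \ref{lem:mean_iv} by establishing the two inclusions separately, using the product structure of the claimed set and the fact that under mean independence the only constraint linking the instrument values to $\mathbb{E}_Q[Y_d]$ is the family of scalar identities implied by $P$ together with $\mathbb{E}_Q[Y_d \mid Z=z] = \mathbb{E}_Q[Y_d]$. First I would record the key decomposition: for any $Q$ rationalizing $P$, consistency and the definition of $\beta_{d|z}$ give
\begin{equation*}
\mathbb{E}_Q[Y_d \mid Z=z] = \mathbb{E}_Q[Y_d \mathbbm 1\{D_z = d\} \mid Z=z] + \mathbb{E}_Q[Y_d \mathbbm 1\{D_z \ne d\} \mid Z=z] = \beta_{d|z} + \mathbb{E}_Q[Y_d \mathbbm 1\{D_z \ne d\} \mid Z = z]~,
\end{equation*}
where the first term is pinned down by $P$ because $Y\mathbbm 1\{D=d\}$ is observed and $D = D_z$ when $Z = z$, while $Q\{D_z \ne d \mid Z=z\} = 1 - \sum_{y \in \mathcal Y} p_{yd \mid z}$ is likewise pinned down. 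Since $Y_d \in [y^L, y^U]$ always, the residual term lies in $[y^L(1 - \sum_y p_{yd\mid z}),\, y^U(1 - \sum_y p_{yd\mid z})]$, and Assumption \ref{as:mean_iv} forces this to hold simultaneously for every $z$ with the common value $\mathbb{E}_Q[Y_d]$; intersecting over $z \in \mathcal Z$ gives exactly the $d$-th factor of \eqref{eq:meaniv2}. This yields the ``$\subseteq$'' inclusion, and it also shows the $d$-th interval is nonempty for each $d$ whenever $\mathbf Q_0(P, \mathbf Q_{\it MI}^\ast) \ne \emptyset$.

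For the reverse inclusion, I would construct, for any target vector $(t_d : d \in \mathcal D)$ with $t_d$ in the $d$-th interval of \eqref{eq:meaniv2}, a distribution $Q^\ast \in \mathbf Q_0(P, \mathbf Q_{\it MI}^\ast)$ with $\mathbb{E}_{Q^\ast}[Y_d] = t_d$. The natural construction is to start from any $\bar Q \in \mathbf Q_0(P, \mathbf Q_{\it MI}^\ast)$ (which exists by hypothesis) and modify only the conditional law of $Y_d$ on the event $\{D_z \ne d\}$ for each $(d,z)$, leaving the joint law of $((D_z : z), Z)$ and of $(Y_d \mathbbm 1\{D_z = d\})$ untouched so that $P = Q^\ast T^{-1}$ is preserved. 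Concretely, on $\{Z = z, D_z \ne d\}$ I would set $Y_d$ to be a two-point mixture of $y^L$ and $y^U$ chosen so that $\mathbb{E}_{Q^\ast}[Y_d \mathbbm 1\{D_z \ne d\} \mid Z=z] = t_d - \beta_{d|z}$; this is feasible precisely because $t_d$ lies between $\max_z\{\beta_{d|z} + y^L(1-\sum_y p_{yd\mid z})\}$ and $\min_z\{\beta_{d|z} + y^U(1-\sum_y p_{yd\mid z})\}$, so the required conditional mean sits in the achievable range $[y^L, y^U]\cdot Q\{D_z\ne d \mid Z=z\}$ for every $z$. By construction $\mathbb{E}_{Q^\ast}[Y_d \mid Z=z] = t_d$ for all $z$, so Assumption \ref{as:mean_iv} holds, and $Q^\ast$ still rationalizes $P$.

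The main obstacle — and the point that needs the most care — is making the construction of $Q^\ast$ a \emph{joint} construction across all $d \in \mathcal D$ and all $z \in \mathcal Z$ simultaneously, rather than a sequence of one-at-a-time modifications that might interfere with each other. Because mean independence imposes no cross-$d$ or cross-world restrictions and because we never touch the observed-data-relevant pieces, the modifications for different $(d,z)$ pairs act on disjoint ``coordinates'' of the potential-outcome vector or on disjoint conditioning events, so they can be carried out independently; I would make this explicit by specifying $Q^\ast$ as a well-defined law on the full product space $\mathcal Y^{|\mathcal D|} \times \mathcal D^{|\mathcal Z|} \times \mathcal Z$, e.g.\ by describing the conditional law of $(Y_d : d \in \mathcal D)$ given $((D_z : z), Z)$ as a product of the per-$d$ conditional laws just described (conditionally independent across $d$), and then verifying $P = Q^\ast T^{-1}$ via \eqref{eq:p-q}-type identities. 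A secondary point to check is the degenerate edge case $Q\{D_z \ne d \mid Z=z\} = 0$ for some $z$, in which the $z$-th constraint pins $\mathbb{E}_Q[Y_d]$ exactly to $\beta_{d|z}$; this is automatically consistent with the stated interval since then $\beta_{d|z} + y^L\cdot 0 = \beta_{d|z} = \beta_{d|z} + y^U \cdot 0$, and the construction on that event is vacuous. With these points handled, the two inclusions combine to give \eqref{eq:meaniv2}.
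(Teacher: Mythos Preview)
Your proposal is correct and follows essentially the same approach as the paper's proof: both establish the outer bound via the decomposition $\mathbb{E}_Q[Y_d\mid Z=z]=\beta_{d|z}+\mathbb{E}_Q[Y_d\mathbbm 1\{D\ne d\}\mid Z=z]$ combined with $Y_d\in[y^L,y^U]$ and mean independence, and both establish sharpness by redefining $Y_d$ on $\{D\ne d, Z=z\}$ as a two-point mixture of $y^L$ and $y^U$ calibrated to hit the target mean. Your treatment of the joint construction across $d$ and $z$ (via conditional independence of the $Y_d$'s given $((D_z:z),Z)$) and of the degenerate edge case is more explicit than the paper's, which handles these points in a single sentence, but the underlying argument is the same.
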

  
The following lemma, which relies on the observation in Lemma \ref{lem:maxp}, establishes the equivalence between the identified sets in \eqref{eq:bounds1} and \eqref{eq:meaniv2} when $\mathbf Q_0(P, \mathbf Q_{E,M}^\ast) \neq \emptyset$.
 
\begin{lemma}\label{lemma:ev2}
   Suppose $\mathbf Q \subseteq \mathbf Q_{E, M}^\ast$, $\mathbf Q$ satisfies Assumption \ref{as:norestrictY}, and $P$ is such that $\mathbf Q_0(P, \mathbf Q) \neq \emptyset$. Then, the sets in \eqref{eq:bounds1} and \eqref{eq:meaniv2}  coincide.
\end{lemma}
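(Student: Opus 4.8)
The plan is to reduce the asserted set equality to one-dimensional comparisons of interval endpoints. Both \eqref{eq:bounds1} and \eqref{eq:meaniv2} are Cartesian products over $d \in \mathcal D$ of closed intervals, so it suffices to show, for each fixed $d \in \mathcal D$, that
\[
\max_{z \in \mathcal Z}\Big\{\beta_{d|z} + y^L\big(1 - \sum_{y \in \mathcal Y} p_{yd|z}\big)\Big\} = \beta_{d|z^\ast(d)} + y^L\big(1 - \sum_{y \in \mathcal Y} p_{yd|z^\ast(d)}\big)
\]
together with the analogous identity obtained by replacing $\max$ with $\min$ and $y^L$ with $y^U$. Since $\mathbf Q_0(P,\mathbf Q) \neq \emptyset$ and $\mathbf Q \subseteq \mathbf Q_{E,M}^\ast$, I would fix a distribution $Q \in \mathbf Q_0(P,\mathbf Q)$; this $Q$ rationalizes $P$ and satisfies Assumptions \ref{as:exog} and \ref{as:encouragement}, so by Lemma \ref{lem:maxp} the set $\mathcal Z^\ast(d)$ is nonempty and depends only on $P$, and $z^\ast(d)$ denotes an arbitrary element of it.

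The key step is to rewrite the generic endpoint in terms of $Q$ and then exploit generalized monotonicity. Using $\beta_{d|z} = \sum_{y \in \mathcal Y} y\, p_{yd|z}$, the identity $1 - \sum_{y \in \mathcal Y} p_{yd|z} = P\{D \neq d \mid Z = z\}$, and the relation \eqref{eq:p-q}, one obtains
\[
\beta_{d|z} + y^L\big(1 - \sum_{y \in \mathcal Y} p_{yd|z}\big) = y^L + \mathbb{E}_Q\big[(Y_d - y^L)\mathbbm{1}\{D_z = d\}\big]
\]
and
\[
\beta_{d|z} + y^U\big(1 - \sum_{y \in \mathcal Y} p_{yd|z}\big) = y^U - \mathbb{E}_Q\big[(y^U - Y_d)\mathbbm{1}\{D_z = d\}\big]~.
\]
Generalized monotonicity (Assumption \ref{as:encouragement}) applied with $z^\ast = z^\ast(d)$ gives the $Q$-almost-sure inclusion $\{D_z = d\} \subseteq \{D_{z^\ast(d)} = d\}$ for every $z \in \mathcal Z$, i.e., $\mathbbm{1}\{D_z = d\} \leq \mathbbm{1}\{D_{z^\ast(d)} = d\}$ $Q$-a.s.; the case $z = z^\ast(d)$ is trivial, and for $z \neq z^\ast(d)$ this is exactly \eqref{eq:encouragement}. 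Because $Y_d \in \mathcal Y \subseteq [y^L, y^U]$, both $Y_d - y^L$ and $y^U - Y_d$ are nonnegative, so multiplying the domination by each of them and taking $\mathbb{E}_Q$ yields, for every $z \in \mathcal Z$,
\[
\mathbb{E}_Q\big[(Y_d - y^L)\mathbbm{1}\{D_z = d\}\big] \leq \mathbb{E}_Q\big[(Y_d - y^L)\mathbbm{1}\{D_{z^\ast(d)} = d\}\big]
\]
and the same inequality with $Y_d - y^L$ replaced by $y^U - Y_d$. Combining these with the two displayed rewritings shows that $z \mapsto \beta_{d|z} + y^L(1 - \sum_y p_{yd|z})$ is maximized at $z^\ast(d)$ while $z \mapsto \beta_{d|z} + y^U(1 - \sum_y p_{yd|z})$ is minimized there, which are precisely the two endpoint identities. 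Taking the product over $d \in \mathcal D$ gives the claim.

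I do not anticipate a substantive difficulty; the argument is short once the $Q$-representation of the endpoints is written down. The only points that require a little care are bookkeeping: translating \eqref{eq:encouragement}, stated as a zero-probability event, into the almost-sure inclusion of events used above; and noting that the endpoint value at $z^\ast(d)$ is well defined even when $\mathcal Z^\ast(d)$ is not a singleton --- this follows because the displayed inequalities show that the relevant endpoint function is constant on $\mathcal Z^\ast(d)$ and attains its extremum over $\mathcal Z$ there.
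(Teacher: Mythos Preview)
Your proof is correct and follows essentially the same approach as the paper's: both fix $Q \in \mathbf Q_0(P,\mathbf Q)$, translate the endpoints into $Q$-expectations via Assumption \ref{as:exog}, and then use the $Q$-a.s.\ domination $\mathbbm{1}\{D_z=d\}\leq\mathbbm{1}\{D_{z^\ast(d)}=d\}$ from Assumption \ref{as:encouragement} together with the sign of $Y_d-y^L$ (resp.\ $y^U-Y_d$) to conclude that the extrema in \eqref{eq:meaniv2} are attained at $z^\ast(d)$. The paper phrases the comparison by computing the difference $\mathbb{E}_Q[(Y_d-y^U)(\mathbbm{1}\{D_{z^\ast(d)}=d\}-\mathbbm{1}\{D_z=d\})]$ and decomposing the indicator difference, whereas you use the domination of indicators directly; these are equivalent formulations of the same argument.
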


Using Lemma \ref{lemma:ev2}, we are able to establish our desired result, which asserts that, maintaining Assumption \ref{as:exog}, additionally imposing Assumption \ref{as:encouragement} either causes the identified set for $\theta(Q)$ to become empty (if those assumptions are not consistent with $P$) or leaves the identified set for $\theta(Q)$ unchanged (if those assumptions are consistent with $P$). In fact, we will establish a stronger result, that if a submodel of instrument exogeneity and generalized monotonicity is consistent with $P$, then any model sandwiched between this submodel and the model that only assumes mean independence leads to the same identified set for $\theta(Q)$. 

\begin{theorem}\label{theorem:same}
Suppose $\mathbf Q \subseteq \mathbf Q_{E, M}^\ast$ and $\mathbf Q$ satisfies Assumption \ref{as:norestrictY}. Further suppose $\mathbf Q'$ satisfies $$\mathbf Q \subseteq \mathbf Q' \subseteq \mathbf Q_{\it MI}^\ast~.$$ Then, for any $P$ such that $\mathbf Q_0(P, \mathbf Q) \neq \emptyset$, $$\Theta_0(P, \mathbf Q) = \Theta_0(P, \mathbf Q') = \Theta_0(P, \mathbf Q_{\it MI}^\ast)~.$$
\end{theorem}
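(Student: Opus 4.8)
The plan is to establish Theorem \ref{theorem:same} by a ``sandwich'' argument that leverages the previously stated lemmas. The key observation is that we have a chain of model inclusions $\mathbf Q \subseteq \mathbf Q' \subseteq \mathbf Q_{\it MI}^\ast$, and also $\mathbf Q \subseteq \mathbf Q_{E,M}^\ast \subseteq \mathbf Q_E^\ast \subseteq \mathbf Q_{\it MI}^\ast$. Since larger models can only enlarge (weakly) the set of rationalizing distributions and hence the identified set, monotonicity of $\Theta_0(P, \cdot)$ with respect to model inclusion gives $\Theta_0(P, \mathbf Q) \subseteq \Theta_0(P, \mathbf Q') \subseteq \Theta_0(P, \mathbf Q_{\it MI}^\ast)$. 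So it suffices to show the two ends coincide, i.e., $\Theta_0(P, \mathbf Q) = \Theta_0(P, \mathbf Q_{\it MI}^\ast)$, and the middle term is then squeezed in between.

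First I would note that since $\mathbf Q_0(P, \mathbf Q) \neq \emptyset$ and $\mathbf Q \subseteq \mathbf Q_{E,M}^\ast$, we have $\mathbf Q_0(P, \mathbf Q_{E,M}^\ast) \neq \emptyset$; in particular there exists a $Q$ rationalizing $P$ that satisfies both Assumptions \ref{as:exog} and \ref{as:encouragement}, so $z^\ast(d)$ is well-defined via Lemma \ref{lem:maxp}, and moreover $\mathbf Q_0(P, \mathbf Q_{\it MI}^\ast) \neq \emptyset$ as well since $\mathbf Q_{E,M}^\ast \subseteq \mathbf Q_{\it MI}^\ast$. This lets me invoke Lemma \ref{lem:mean_iv} to conclude $\Theta_0(P, \mathbf Q_{\it MI}^\ast)$ equals the product set in \eqref{eq:meaniv2}. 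Next, since $\mathbf Q$ satisfies the hypotheses of Theorem \ref{theorem:ev} (namely $\mathbf Q \subseteq \mathbf Q_{E,M}^\ast$, Assumption \ref{as:norestrictY}, and $\mathbf Q_0(P,\mathbf Q) \neq \emptyset$), that theorem gives $\Theta_0(P, \mathbf Q)$ equals the product set in \eqref{eq:bounds1}. Finally, Lemma \ref{lemma:ev2}, whose hypotheses are exactly these same conditions, asserts that the sets \eqref{eq:bounds1} and \eqref{eq:meaniv2} coincide. Stringing these together: $\Theta_0(P, \mathbf Q) = \eqref{eq:bounds1} = \eqref{eq:meaniv2} = \Theta_0(P, \mathbf Q_{\it MI}^\ast)$, and combined with the sandwiching inclusions from model monotonicity, all three identified sets in the statement are equal.

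The one point requiring a little care is the monotonicity step: if $\mathbf A \subseteq \mathbf B$ then $\mathbf Q_0(P, \mathbf A) \subseteq \mathbf Q_0(P, \mathbf B)$ directly from the definition $\mathbf Q_0(P, \mathbf Q) = \{Q \in \mathbf Q : P = QT^{-1}\}$, and hence $\Theta_0(P, \mathbf A) = \{\theta(Q) : Q \in \mathbf Q_0(P,\mathbf A)\} \subseteq \{\theta(Q) : Q \in \mathbf Q_0(P,\mathbf B)\} = \Theta_0(P, \mathbf B)$; this is immediate and just needs to be stated. I expect the main substantive work to have already been done in establishing Theorem \ref{theorem:ev} and Lemmas \ref{lem:mean_iv} and \ref{lemma:ev2}; the proof of Theorem \ref{theorem:same} itself is essentially bookkeeping, with the only genuine ``idea'' being the recognition that consistency of the smallest model $\mathbf Q$ with $P$ propagates upward to guarantee nonemptiness for the larger models, so that Lemma \ref{lem:mean_iv} is applicable. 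I would therefore keep the write-up short, making explicit the chain of equalities and inclusions and citing each earlier result at the appropriate link.
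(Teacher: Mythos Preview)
Your proposal is correct and follows essentially the same sandwich argument as the paper's proof: the chain of inclusions from model monotonicity, combined with $\Theta_0(P,\mathbf Q)=\eqref{eq:bounds1}$ via Theorem~\ref{theorem:ev}, $\Theta_0(P,\mathbf Q_{\it MI}^\ast)=\eqref{eq:meaniv2}$ via Lemma~\ref{lem:mean_iv}, and their equality via Lemma~\ref{lemma:ev2}. If anything, your write-up is slightly more explicit than the paper's, which compresses the invocation of Theorem~\ref{theorem:ev} and Lemma~\ref{lem:mean_iv} into its citation of Lemma~\ref{lemma:ev2}.
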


\begin{remark} \label{remark:contradict}
Theorem \ref{theorem:same} implies that in order for a model to have identifying power for average potential outcomes, it has to be the case that the model does not contain a submodel of instrument exogeneity and generalized monotonicity that is consistent with $P$. In other words, the model has to contradict Assumption \ref{as:exog} or \ref{as:encouragement}. We illustrate this observation in Example \ref{eg:alt} below.
\end{remark}

\begin{corollary} \label{cor:exog}
For any $P$ such that $\mathbf Q_0(P, \mathbf Q_{E,M}^\ast) \neq \emptyset$, $\Theta_0(P, \mathbf Q_{E,M}^\ast) = \Theta_0(P, \mathbf Q_E^\ast)$.
\end{corollary}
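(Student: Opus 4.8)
The plan is to deduce this corollary directly from Theorem \ref{theorem:same} by choosing the models in that theorem appropriately. Recall that Theorem \ref{theorem:same} says: whenever $\mathbf Q \subseteq \mathbf Q_{E,M}^\ast$ satisfies Assumption \ref{as:norestrictY} and $\mathbf Q \subseteq \mathbf Q' \subseteq \mathbf Q_{\it MI}^\ast$, then $\mathbf Q_0(P,\mathbf Q)\neq\emptyset$ implies $\Theta_0(P,\mathbf Q) = \Theta_0(P,\mathbf Q') = \Theta_0(P,\mathbf Q_{\it MI}^\ast)$. I want to apply this with $\mathbf Q = \mathbf Q_{E,M}^\ast$ itself and with $\mathbf Q' = \mathbf Q_E^\ast$.

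First I would verify the hypotheses of Theorem \ref{theorem:same} for this choice. We have $\mathbf Q_{E,M}^\ast \subseteq \mathbf Q_{E,M}^\ast$ trivially, and $\mathbf Q_{E,M}^\ast$ satisfies Assumption \ref{as:norestrictY}: if $Q\in\mathbf Q_{E,M}^\ast$ and $Q'\in\mathbf Q_E^\ast$ have the same distribution of $(D_z:z\in\mathcal Z)$, then $Q'$ inherits Assumption \ref{as:exog} by construction and Assumption \ref{as:encouragement} because \eqref{eq:encouragement} is a statement about the distribution of $(D_z:z\in\mathcal Z)$ alone, so $Q'\in\mathbf Q_{E,M}^\ast$. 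Next, the chain $\mathbf Q_{E,M}^\ast \subseteq \mathbf Q_E^\ast \subseteq \mathbf Q_{\it MI}^\ast$ holds because adding generalized monotonicity only restricts the model, and because full independence (Assumption \ref{as:exog}) implies mean independence (Assumption \ref{as:mean_iv}). Finally, the hypothesis of Corollary \ref{cor:exog} is exactly that $\mathbf Q_0(P,\mathbf Q_{E,M}^\ast)\neq\emptyset$, which is the nonemptiness condition required by Theorem \ref{theorem:same}.

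Applying Theorem \ref{theorem:same} with $\mathbf Q = \mathbf Q_{E,M}^\ast$ and $\mathbf Q' = \mathbf Q_E^\ast$ then yields immediately that $\Theta_0(P,\mathbf Q_{E,M}^\ast) = \Theta_0(P,\mathbf Q_E^\ast) = \Theta_0(P,\mathbf Q_{\it MI}^\ast)$, and in particular the first equality is the claim of the corollary. The argument is essentially a bookkeeping step, so there is no serious obstacle here; the only point requiring a moment's care is checking that $\mathbf Q_{E,M}^\ast$ genuinely satisfies Assumption \ref{as:norestrictY}, i.e., that generalized monotonicity is a restriction on the marginal law of the potential treatments only and hence is preserved under the replacement allowed by that assumption — which is clear from the form of \eqref{eq:encouragement}.
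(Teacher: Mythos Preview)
Your proposal is correct and follows exactly the same approach as the paper: apply Theorem \ref{theorem:same} with $\mathbf Q = \mathbf Q_{E,M}^\ast$ and $\mathbf Q' = \mathbf Q_E^\ast$, using that $\mathbf Q_E^\ast \subseteq \mathbf Q_{\it MI}^\ast$ since Assumption \ref{as:exog} implies Assumption \ref{as:mean_iv}. Your explicit verification that $\mathbf Q_{E,M}^\ast$ satisfies Assumption \ref{as:norestrictY} is a detail the paper leaves implicit, but it is correct and worth noting.
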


\begin{remark} \label{remark:implication}
An implication of Corollary \ref{cor:exog} is that the identified set for $\theta(Q)$ under Assumption \ref{as:exog} alone will be \eqref{eq:bounds1}, regardless of whether Assumption \ref{as:encouragement} is imposed, as long as the distribution  of the data is consistent with Assumptions \ref{as:exog} and \ref{as:encouragement}.   
\end{remark}

Theorem \ref{theorem:same} further implies that under the following weaker exogeneity assumption, generalized monotonicity also has no identifying power for average potential outcomes and ATEs:

\begin{assumption}[Weak Instrument Exogeneity] \label{as:marginal-exog}
Under $Q$, $(Y_d, D_z ) \indep Z$ for all $  d \in \mathcal D, z \in \mathcal Z$.
\end{assumption}
\noindent See \cite{richardson2013single} for an analysis of alternative exogeneity restrictions, and in particular how Assumption \ref{as:marginal-exog} avoids the ``cross-world'' restrictions of the stronger joint independence in Assumption \ref{as:exog}. Denote by $\mathbf Q_{\it WE}^\ast$ the set of all distributions that satisfy Assumption \ref{as:marginal-exog} and $\mathbf Q_{\it WE, M}^\ast$ the set of all distributions that satisfy Assumptions \ref{as:marginal-exog} and \ref{as:encouragement}.

\begin{corollary} \label{cor:marginal}
For any $P$ such that $\mathbf Q_0(P, \mathbf Q_{\it WE,M}^\ast) \neq \emptyset$, $\Theta_0(P, \mathbf Q_{\it WE,M}^\ast) = \Theta_0(P, \mathbf Q_{\it WE}^\ast)$.
\end{corollary}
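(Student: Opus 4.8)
\textbf{Proof proposal for Corollary \ref{cor:marginal}.}

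The plan is to mirror the sandwich strategy that yields Corollary \ref{cor:exog} from Theorem \ref{theorem:same}, but now anchored at the weak exogeneity model $\mathbf Q_{\it WE}^\ast$ rather than at the joint exogeneity model $\mathbf Q_E^\ast$. The key structural observation is that Assumption \ref{as:marginal-exog} is strong enough to recover exactly the same pieces of the data used in Theorem \ref{theorem:ev} and Lemma \ref{lem:mean_iv}: under weak exogeneity we still have $p_{yd|z} = Q\{Y_d = y, D_z = d\}$ and hence $\beta_{d|z} = \mathbb E_Q[Y_d \mathbbm 1\{D_z = d\}]$ for each $y,d,z$, since these identities only involve the pair $(Y_d, D_z)$, not the full cross-world vector. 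At the same time, Assumption \ref{as:marginal-exog} implies Assumption \ref{as:mean_iv} (mean independence), so $\mathbf Q_{\it WE}^\ast \subseteq \mathbf Q_{\it MI}^\ast$. These two facts together are what make the sandwich go through.

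First I would fix $P$ with $\mathbf Q_0(P, \mathbf Q_{\it WE,M}^\ast) \neq \emptyset$ and take $\mathbf Q = \mathbf Q_{\it WE,M}^\ast$. I need to check the hypotheses of Theorem \ref{theorem:same} with this choice. The inclusion $\mathbf Q_{\it WE,M}^\ast \subseteq \mathbf Q_{\it MI}^\ast$ follows because Assumption \ref{as:marginal-exog} implies Assumption \ref{as:mean_iv}; in particular $\mathbf Q_{\it WE,M}^\ast \subseteq \mathbf Q_{\it WE}^\ast \subseteq \mathbf Q_{\it MI}^\ast$, so we may take $\mathbf Q' = \mathbf Q_{\it WE}^\ast$ in the statement of Theorem \ref{theorem:same}. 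The remaining hypotheses to verify are $\mathbf Q_{\it WE,M}^\ast \subseteq \mathbf Q_{E,M}^\ast$ and that $\mathbf Q_{\it WE,M}^\ast$ satisfies Assumption \ref{as:norestrictY}. Here is the one genuine wrinkle: $\mathbf Q_{\it WE,M}^\ast$ is \emph{not} contained in $\mathbf Q_{E,M}^\ast$, because weak exogeneity does not imply joint (cross-world) exogeneity. So Theorem \ref{theorem:same} cannot be applied verbatim to $\mathbf Q = \mathbf Q_{\it WE,M}^\ast$.

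The fix is to route through a suitable intermediate model rather than through $\mathbf Q_{\it WE,M}^\ast$ itself. Choose any $Q_0 \in \mathbf Q_0(P, \mathbf Q_{\it WE,M}^\ast)$ and let $Q_0^\perp$ be the ``independent coupling'' that has the same marginal law for $(D_z : z \in \mathcal Z)$ as $Q_0$ and for which $(Y_d : d \in \mathcal D)$ is generated so that joint exogeneity (Assumption \ref{as:exog}) holds while the pairwise laws of $(Y_d, D_z)$ are preserved; concretely, condition on $(D_z : z\in\mathcal Z)$, draw each $Y_d$ from its $Q_0$-conditional law given $D_z=d$ when $D_z = d$ and from an arbitrary fixed distribution on $\mathcal Y$ otherwise, independently across $d$ and independently of $Z$. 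This $Q_0^\perp$ lies in $\mathbf Q_E^\ast$, satisfies Assumption \ref{as:encouragement} (same potential-treatment law as $Q_0$), and rationalizes $P$ (the pairwise laws $Q\{Y_d = y, D_z = d\}$ are unchanged, and by \eqref{eq:p-q} that is all that matters). Hence the submodel $\mathbf Q := \mathbf Q_{E,M}^\ast$ already satisfies $\mathbf Q_0(P, \mathbf Q_{E,M}^\ast) \ni Q_0^\perp$, and clearly $\mathbf Q_{E,M}^\ast$ satisfies Assumptions \ref{as:norestrictY} and the inclusion $\mathbf Q_{E,M}^\ast \subseteq \mathbf Q_{E,M}^\ast$ trivially. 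Now apply Theorem \ref{theorem:same} with $\mathbf Q = \mathbf Q_{E,M}^\ast$ and $\mathbf Q' = \mathbf Q_{\it WE}^\ast$: since $\mathbf Q_{E,M}^\ast \subseteq \mathbf Q_{\it WE}^\ast \subseteq \mathbf Q_{\it MI}^\ast$ (joint exogeneity implies weak exogeneity implies mean independence) and $\mathbf Q_0(P, \mathbf Q_{E,M}^\ast) \neq \emptyset$, we get
\[
\Theta_0(P, \mathbf Q_{E,M}^\ast) = \Theta_0(P, \mathbf Q_{\it WE}^\ast) = \Theta_0(P, \mathbf Q_{\it MI}^\ast)~.
\]
Finally, squeeze $\mathbf Q_{\it WE,M}^\ast$ in between: from $\mathbf Q_{E,M}^\ast \subseteq \mathbf Q_{\it WE,M}^\ast \subseteq \mathbf Q_{\it WE}^\ast$ we get the monotonicity of identified sets $\Theta_0(P, \mathbf Q_{E,M}^\ast) \subseteq \Theta_0(P, \mathbf Q_{\it WE,M}^\ast) \subseteq \Theta_0(P, \mathbf Q_{\it WE}^\ast)$ whenever the middle set is over a nonempty $\mathbf Q_0$, which holds by hypothesis; combined with the displayed equality of the two outer sets, this forces $\Theta_0(P, \mathbf Q_{\it WE,M}^\ast) = \Theta_0(P, \mathbf Q_{\it WE}^\ast)$, as claimed.

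\textbf{Main obstacle.} The one non-routine step is exactly the coupling construction: one must exhibit, starting from an arbitrary $Q_0$ consistent with weak exogeneity and generalized monotonicity, a distribution in $\mathbf Q_{E,M}^\ast$ that still rationalizes $P$. The reason this works is that by \eqref{eq:p-q} a model imposing Assumption \ref{as:exog} rationalizes $P$ iff it matches the pairwise probabilities $p_{yd|z}$, and generalized monotonicity (Assumption \ref{as:encouragement}) depends only on the law of $(D_z : z \in \mathcal Z)$; both are preserved when we replace the cross-world dependence of $Q_0$ by an independent coupling. Everything else is bookkeeping with the inclusions $\mathbf Q_{E,M}^\ast \subseteq \mathbf Q_{\it WE,M}^\ast \subseteq \mathbf Q_{\it WE}^\ast \subseteq \mathbf Q_{\it MI}^\ast$ and the evident monotonicity of $\mathbf Q \mapsto \Theta_0(P, \mathbf Q)$.
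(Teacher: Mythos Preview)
Your overall strategy is exactly the paper's: starting from $Q_0 \in \mathbf Q_0(P, \mathbf Q_{\it WE,M}^\ast)$, produce some element of $\mathbf Q_0(P, \mathbf Q_{E,M}^\ast)$, then invoke Theorem \ref{theorem:same} with $\mathbf Q = \mathbf Q_{E,M}^\ast$ together with the chain $\mathbf Q_{E,M}^\ast \subseteq \mathbf Q_{\it WE,M}^\ast \subseteq \mathbf Q_{\it WE}^\ast \subseteq \mathbf Q_{\it MI}^\ast$. The only problem is your explicit construction of $Q_0^\perp$: it is more elaborate than needed and, as written, not well-defined. You condition on the full vector $(D_z : z \in \mathcal Z)$ but then draw $Y_d$ from ``its $Q_0$-conditional law given $D_z = d$'' without specifying which $z$; for different $z$ these conditional laws generally differ, and drawing from any single one (say $z = z^\ast(d)$) need not preserve the pairwise law of $(Y_d, D_{z'})$ for other $z'$. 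Concretely, that recipe gives $Q_0^\perp\{Y_d = y, D_{z'} = d\} = Q_0\{Y_d = y \mid D_{z^\ast(d)} = d\} \cdot Q_0\{D_{z'} = d\}$, which equals $p_{yd|z'}$ only if $Q_0\{Y_d = y \mid D_{z'} = d\} = Q_0\{Y_d = y \mid D_{z^\ast(d)} = d\}$ --- something weak exogeneity does not guarantee.

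The paper's construction is simpler and sidesteps this entirely: there is no need to decouple the $Y_d$'s from one another or from $(D_z : z)$; you only need to decouple the pair $((Y_d : d), (D_z : z))$ from $Z$. Take $\tilde Q = Q_1 \times Q_Z$, where $Q_1$ is the $Q_0$-marginal of $((Y_d : d), (D_z : z))$ and $Q_Z$ is the $Q_0$-marginal of $Z$. Then $\tilde Q$ satisfies Assumption \ref{as:exog} by construction, it satisfies Assumption \ref{as:encouragement} because the marginal of $(D_z : z)$ is unchanged, and it rationalizes $P$ because the \emph{entire} joint law of $((Y_d), (D_z))$ --- hence every pairwise law of $(Y_d, D_z)$ --- is preserved, so $\tilde Q\{Y_d = y, D_z = d\} = Q_0\{Y_d = y, D_z = d\} = p_{yd|z}$. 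With $\mathbf Q_0(P, \mathbf Q_{E,M}^\ast) \neq \emptyset$ established this way, the remainder of your argument goes through verbatim.
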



\section{Examples of Models That Satisfy Assumption \ref{as:encouragement}} 
\label{sec:ex}

We now consider some restrictions on potential treatments that have been considered previously in the literature.  In each case, we show $\mathbf Q \subseteq \mathbf Q_{E,M}^\ast$; in particular, these restrictions satisfy generalized monotonicity.  We emphasize that frequently $\mathbf Q \subsetneqq \mathbf Q_{E,M}^\ast$. In what follows, it is implicitly understood that Assumption \ref{as:norestrictY} is satisfied, so that the model imposes no restriction on potential outcomes.  Thus, in each example Theorem \ref{theorem:idpower} applies and such restrictions do not provide any identifying power for average potential outcomes or ATEs.  In Appendix C, we consider three additional examples: an RCT with a ``close substitute'' as considered in \cite{kline2016evaluating}; the monotonicity and ``irrelevance'' assumptions considered in \cite{kirkeboen2016field}; and an additive random utility model for a binary treatment.

\begin{example} \label{ex:rct}
Consider a multi-arm randomized controlled trial (RCT) with noncompliance, where $Z=d$ denotes random assignment to treatment  $d$,  $D_d=d$ denotes that the subject would comply with  assignment if assigned to  treatment  $d$, and $|\mathcal D|= |\mathcal Z|$. More generally, not necessarily in the context of an RCT,  one can interpret $Z=d$ as encouragement to treatment $d$ and interpret $D_d=d$  as the   subject would take treatment $d$ if encouraged to do so. In this example, $Q$ satisfies Assumption \ref{as:exog} because $Z$ is randomly assigned. We may generalize the ``no-defier'' restriction of \cite{angrist1996identification} as: for each $d \in \mathcal D$, \[ Q\{D_{d} \ne d ,   ~D_{d^{\prime}}=d ~ \mbox{for some}  ~ d^{\prime} \ne d \}=0~,\] i.e., there is zero probability that a subject would not take treatment $d$ if assigned to (encouraged to take) $d$ but would take $d$ if assigned (encouraged) to some other treatment $d^{\prime} \ne d$. If $Q$ satisfies this generalized no-defier restriction, then Assumption \ref{as:encouragement}  holds with $z^\ast(d)=d$ for all $d$.  This no-defier restriction in particular holds in the context of an RCT with ``one-sided non-compliance,'' where we assume
\[ Q\{D_z \in \{0,z\}\}=1~, \] for all $z \in \mathcal Z$. Here, non-compliance is one-sided because one can fall back to the control group if assigned to $d$ but cannot choose $d$ if assigned to the control group.  
 \end{example}

\begin{remark} \label{remark:implication2}
As explained in Example \ref{ex:rct}, in a multi-arm RCT with one-sided noncompliance, $P$ will necessarily be consistent with Assumptions \ref{as:exog} and \ref{as:encouragement}, i.e., $\mathbf Q(P,\mathbf Q_{E,M}) \neq \emptyset$.  Following Remark \ref{remark:implication}, Theorem \ref{theorem:same} therefore implies that the identified set for $\theta(Q)$ under  Assumption \ref{as:exog}  alone will be  \eqref{eq:bounds1} for such a multi-arm RCT.  A further implication is that the identified set under Assumption \ref{as:exog} on $\mathbb{E}[Y_j-Y_k]$ for a multi-arm RCT with one-sided noncompliance depends only on the treatment arms for random assignment to treatments $j$ and $k$.
\end{remark}

\begin{example}\label{ex:chengsmall}
\cite{cheng2006bounds} consider an RCT with noncompliance where $\mathcal{D}=\mathcal{Z}=\{0,1,2\}$. Assumption \ref{as:exog} continues to hold because $Z$ is randomly assigned. In such a setting, they develop bounds on average effects within subgroups defined by potential treatments which, following  the terminology of \cite{frangakis2002principal}, they call ``principal strata.''  While  \cite{bai2025inference} derives the identified sets for those parameters given their assumptions, we now use our analysis to consider instead identification of average potential outcomes and ATEs given their assumptions. Their  ``Monotonicity I'' assumption  is equivalent to one-sided noncompliance in the preceding example. Their ``Monotonicity II'' assumption states that subjects who would comply with assignment to treatment $2$ would also comply with assignment to treatment $1$, so that
\[ Q\{ D_1=1 \mid D_2=2\}=1~. \]
They argue that such an assumption is plausible in a medical context when treatment $1$ has fewer side effects than treatment $2$, and in their application to treatments for alcohol dependence, in which complying with treatment $1$ (compliance enhancement therapy) requires less effort by subjects than complying with treatment $2$ (cognitive behavioural therapy). Because their Monotonicity I restriction implies our Assumption \ref{as:encouragement}, so does imposing both their Monotonicity I and II restrictions.
\end{example}

\begin{example}\label{ex:unordered}
Suppose $Q$ satisfies Assumption \ref{as:exog}. \cite{heckman2018unordered} define ``unordered monotonicity" as the assumption that,  for any $d \in \mathcal{D}$, and any $z, z^{\prime} \in \mathcal{Z}$,
\begin{equation} \label{eq:unordered}
    Q \{\mathbbm{1}\{D_z=d\} \ge \mathbbm{1}\{D_{z^{\prime}}=d\}\} = 1 \text{ or } Q \{\mathbbm{1}\{D_z=d\} \le \mathbbm{1}\{D_{z^{\prime}}=d\}\} = 1~.
\end{equation}
Assumption \ref{as:encouragement} holds for any $Q$ that satisfies \eqref{eq:unordered}. To see this, note that Assumption \ref{as:encouragement} can be expressed as the requirement that for each $d \in \mathcal D$, there exists $z^\ast(d) \in \mathcal Z$ such that \[ Q \{\mathbbm 1 \{D_{z^\ast(d)} = d\} \geq \mathbbm 1 \{D_z = d\}\} = 1 \text{ for all } z \in \mathcal Z~, \]
which is immediately implied by \eqref{eq:unordered}. Note, however, that $\mathcal Z^\ast(d)$ may not be a singleton unless some inequalities in \eqref{eq:unordered} are strict.
\end{example}

\begin{remark}
Although unordered monotonicity implies Assumption \ref{as:encouragement}, the converse is generally false. For example, suppose $\mathcal{Z}=\{0,1,2,3\}$ and $\mathcal{D}=\{0,1\}$. Suppose $ \mathbbm{1}\{D_3=1\} \ge \mathbbm{1}\{D_{z}=1\} $ w.p.1 under $Q$ for $z \ne 3$ and $ \mathbbm{1}\{D_0=0\} \ge \mathbbm{1}\{D_{z}=0\} $ w.p.1 under $Q$ for $z \ne 0$, but $Q\{ D_1=1, D_2=0\}$ and $Q\{ D_1=0, D_2=1\}$ are both strictly positive.  Then Assumption \ref{as:encouragement} holds with $z^\ast(0) = 0$ and  $z^\ast(1) = 3$, but unordered monotonicity fails. In particular, $\mathbbm{1} \{D_1 = 1\}$ and $\mathbbm{1} \{D_2 = 1\}$ are not ordered, thus violating \eqref{eq:unordered}. We thus conclude that if $\mathbf Q$ is defined as the set of distributions that satisfy instrument exogeneity and unordered monotonicity, then $\mathbf Q \subsetneqq \mathbf Q_{E,M}^\ast$.
\end{remark}

\begin{example} \label{ex:arum}
Suppose under $Q$, $(D_z: z \in \mathcal Z)$ is determined by
\begin{equation} \label{eq:ARUM} D_z = \argmax_{d \in \mathcal{D}}~ ( g(z, d) + U_{d} )~,
\end{equation}
for $g: \mathcal Z \times \mathcal D \to \Re$ where $\Re$ is the set of real numbers and a random vector $(U_d: d \in \mathcal D)$, whose distribution is absolutely continuous with respect to the Lebesgue measure on $\Re^{|\mathcal D|}$ and $Z \indep ((U_d: d \in \mathcal D), (Y_d : d \in \mathcal D))$.  Hence, $Q$ satisfies Assumption \ref{as:exog} by construction.  Let $\mathbf{Q}$ denote the set of distributions that are consistent with $(D_z: z \in \mathcal Z)$ being determined by \eqref{eq:ARUM} for some $g$ and $(U_d: d \in \mathcal D)$ satisfying these requirements. The model $\mathbf Q$ is called an additive random utility model (ARUM). A sufficient condition for $Q \in \mathbf Q$ to satisfy Assumption \ref{as:encouragement} is that for each $d \in \mathcal D$ there exists $z^\ast(d) \in \mathcal Z$ such that
\begin{equation} \label{eq:uniform}
    g(z^\ast(d), d) - g(z^\ast(d), d') > g(z, d) - g(z, d') \text{ for all } d' \neq d \text{ and } z \neq z^\ast(d)~.
\end{equation}
We refer to the requirement in \eqref{eq:uniform} as uniform targeting of treatment $d$.  The terminology is intended to reflect that there is a value of the instrument that maximizes the gains (in terms of $g$) of choosing treatment $d$ versus any other treatment $d'$ uniformly across these other possible values of the treatment.  In this sense, that value of the instrument targets treatment $d$ uniformly. We now argue by contradiction that \eqref{eq:uniform} implies \eqref{eq:encouragement}; hence, if \eqref{eq:uniform} holds for all $d \in \mathcal D$, then $Q$ satisfies Assumption \ref{as:encouragement}. To this end, suppose that, with positive probability, $D_{z^\ast(d)} = d' \neq d$ but $D_{z'} = d$ for $z' \neq z^\ast(d)$. Then,
\begin{align*}
    g(z^\ast(d), d') + U_{d'} & \geq g(z^\ast(d), d) + U_d~, \\
    g(z', d) + U_d & \geq g(z', d') + U_{d'}~.
\end{align*}
These two inequalities imply
\[ g(z^\ast(d), d) - g(z^\ast(d), d') \leq g(z', d) - g(z', d')~, \]
which violates \eqref{eq:uniform}. A particular example when $|\mathcal Z| \geq |\mathcal D|$ that satisfies \eqref{eq:uniform} with $z^\ast(d)=d$ after a suitable relabelling is
\begin{equation} \label{eq:arumexample}
g(z, d) = \alpha_d +\beta_d \mathbbm 1 \{z = d\}~,
\end{equation}
with $\beta_d > 0$, so that $Z=d$ strictly increases the latent value of treatment $d$ while leaving the values of the remaining options unchanged.  
\end{example}


\begin{remark}\label{remark:HV}
 The ARUM for a binary treatment is equivalent to the Heckman-Vytlacil nonparametric selection model for a binary treatment considered, e.g., in \cite{heckman1999local,heckman2005structural}, and shown by \cite{vytlacil2002independence} to be equivalent to the monotonicity and exogeneity assumptions of \cite{imbens1994identification}.  
\cite{heckman2001instrumental} show that the Heckman-Vytlacil nonparametric selection model for a binary treatment results in an identified set for the ATE of the form in \eqref{eq:bounds2} and that the model has no identifying power beyond instrument exogeneity for the ATE.   Example C.1 in  Appendix C shows that an ARUM for a binary treatment will satisfy Assumption  \ref{as:encouragement} and therefore the results of this paper nest the results of \cite{heckman2001instrumental}.  Example  \ref{ex:arum}, on the other hand, extends their results to nonparametric selection models for a multi-valued treatment.  For a partial identification analysis of a class of parameters that includes the ATE under a nonparametric selection model for a binary treatment and sometimes imposing additional restrictions, see, e.g., \cite{mogstad2018using}, \cite{han2024computational}, and \cite{marx2024sharp}.
\end{remark}

\begin{example}\label{ex:lee2023}
\cite{lee2023treatment} also consider the ARUM defined by \eqref{eq:ARUM} 
without imposing the uniform targeting of \eqref{eq:uniform} for each treatment. Instead, they impose  an assumption that they refer to as ``strict one-to-one targeting'', in which the set of treatments can be partitioned into a set of treatments $\mathcal{D}^\dagger$ that are ``targeted'' and a set of treatments $\mathcal{D} \setminus \mathcal{D}^\dagger $ that are ``not targeted'' such that 
\begin{enumerate}
    \item For $d \in \mathcal{D} \setminus \mathcal{D}^\dagger$, $g(z, d)$ is the same for all $z \in \mathcal Z$;
    \item  For $d \in \mathcal{D}^\dagger$, there exists $z^\dagger(d)$ such that $g(z^\dagger(d), d) > g(z', d)$ for all $z' \neq z^\dagger(d)$ and such  that $g(z', d)$ takes the same value for all $z' \neq z^\dagger(d)$;
    additionally,  $z^\dagger(d) \neq z^\dagger(d')$ for $d, d^{\prime} \in \mathcal{D}^\dagger$, $d\ne d'$.  
 \end{enumerate}
The terminology ``one-to-one'' stems from the second requirement above.  They further impose that there exists a treatment that is known to be non-targeted. This class of ARUMs is equivalent to imposing  \eqref{eq:arumexample} for targeted treatments, imposing $g(z, d)=\alpha_d$ for non-targeted treatments, and imposing that there is at least one non-targeted treatment. In such a setting, \cite{lee2023treatment} analyze the identification of a particular class of average effects within subgroups defined by potential treatments.  We now use our analysis to consider instead the identification of average potential outcomes and ATEs given their assumptions.  Suppose strict one-to-one targeting holds, and additionally suppose that there is at least one targeted treatment. In Appendix B.1, we argue that Assumption \ref{as:encouragement} holds when  $|\mathcal{Z}| > |\mathcal{D}^\dagger|$, so that there are more values of the instrument than there are targeted treatments. We further argue that \eqref{eq:uniform} does not hold for some treatments unless $|\mathcal{D}|=|\mathcal{Z}|=2$. Such models therefore provide another class of ARUMs, distinct from the one with uniform targeting described in Example \ref{ex:arum}, for which Assumption \ref{as:encouragement} holds.  In Example \ref{ex:arum-violation} below, we show, however, that Assumption \ref{as:encouragement} does not hold when $|\mathcal D| \geq 3$, $|\mathcal{Z}| = |\mathcal{D}^\dagger|$, and the support of $(U_d: d \in \mathcal D)$ is $\Re^{|\mathcal D|}$.
\end{example}

\section{Examples of Models That Do Not Satisfy Assumption \ref{as:encouragement}} \label{sec:ex_not}

We now consider models that do not satisfy generalized monotonicity (Assumption \ref{as:encouragement}). For each model, we show that the identified sets for average potential outcomes are not given by \eqref{eq:bounds1}. For the first two examples, we further show that they do in fact provide identifying power beyond instrument exogeneity.

\begin{example} \label{eg:alt} 
Suppose $\mathcal{Y}=\mathcal{D}=\mathcal{Z}=\{0,1\}$. Let $\mathbf{Q}$ denote all distributions $Q$ that satisfy Assumption \ref{as:exog} and
\[ Q\{ D_0 = D_1\} =0~, \]
which, in the language of \cite{angrist1996identification}, is imposing that all individuals are either compliers or defiers.  For any $P$ such that $\mathbf{Q}_0(P, \mathbf{Q}) \neq \emptyset$, $\theta(Q)$ is identified relative to $\mathbf Q$, i.e., $\Theta_0(P, \mathbf Q)$ is a singleton. To see this, note that for any $Q\in \mathbf{Q}_0(P, \mathbf{Q})$ and $y \in \mathcal Y$,
\begin{align*}
   Q \{Y_1 = y\} & = Q \{Y_1 = y, D_0 = 0, D_1 = 1\} + Q \{Y_1 = y, D_0 = 1, D_1 = 0\} \\
   & = Q \{Y_1 = y, D_1 = 1\} + Q \{Y_1 = y, D_0 = 1\} \\
   & = Q \{Y_1 = y, D_1 = 1 \mid Z = 1\} + Q \{Y_1 = y, D_0 = 1 \mid Z = 0\} \\
   & = P \{Y = y, D = 1 \mid Z = 1\} + P \{Y = y, D = 1 \mid Z = 0\}~,
\end{align*}
where the first two equalities follows from $Q \{D_0 = D_1\} = 0$, the third equality follows from Assumption \ref{as:exog}, and the final equality follows from  $Q \in \mathbf{Q}_0(P, \mathbf{Q})$.   A similar argument establishes identification of $Q\{Y_0 = y\}$. In contrast, we show in Appendix B.2 that there exists a  $P$ for which $\mathbf Q_0(P,\mathbf Q) \neq \emptyset$ and \eqref{eq:bounds1} is not a singleton.  The identified set for $\theta(Q)$ is therefore not given by \eqref{eq:bounds1}.  We further show in Appendix B.2 that $\Theta_0(P, \mathbf Q_E^\ast)$ is not a singleton for the same $P$.  Thus, the model $\mathbf Q$ does have identifying power beyond instrument exogeneity for $\theta(Q)$.

Furthermore, recall as discussed in Remark \ref{remark:contradict} that if $\mathbf Q$ has identifying power for $\theta(Q)$, then it cannot contain a submodel satisfying instrument exogeneity and generalized monotonicity that is consistent with $P$. In this example, the model $\mathbf{Q}$ contains such a submodel if and only if $P$  satisfies either (i) $P\{D=0\mid Z=0\}=P\{D=1\mid Z=1\}=1$ (in which case all individuals are compliers) or  (ii) $P\{D=1\mid Z=0\}=P\{D=0\mid Z=1\}=1$ (in which case all individuals are defiers). In order for $\mathbf Q$ to have identifying power for $\theta(Q)$, it therefore must be the case that $0 < P\{D=0\mid Z=0\}, P\{D=0\mid Z=1\}< 1$, which is indeed satisfied by the counterexample in Appendix B.2. Further note that  \eqref{eq:bounds1} is a singleton in this example if and only if $P$ satisfies either (i) or (ii).
\end{example}

\begin{example} \label{ex:ordered}
Consider an ordered choice model for treatment. Suppose that $|\mathcal{Z}|\ge 3$, and let $\mathbf{Q}$ denote the set of all distributions that satisfy Assumption \ref{as:exog} and
\begin{equation} \label{eq:ordered}
    Q\{D_j \ge D_k\} =1 \text{ for all } j \ge k~.
\end{equation}
For example, $D$ might represent quantity of some treatment, and $Z$ might represent levels of subsidy for the treatment. The restriction in \eqref{eq:ordered} is equivalent to the monotonicity assumption considered in \cite{angrist1995two}. See \cite{vytlacil2006ordered} for the connection between this restriction and ordered discrete-choice selection models.
Without loss of generality, let $\mathcal D = \{0, \ldots, \bar D\}$. In this case, \eqref{eq:encouragement} is satisfied for  $d \in \{0,\bar{D}\}$ for all $Q \in \mathbf{Q}_0(P, \mathbf{Q})$; to see this, take $z^\ast(0)=\min\{\mathcal{Z}\}$ and $z^\ast(\bar{D})=\max\{\mathcal{Z}\}$. By a straightforward modification of the arguments underlying Theorem \ref{theorem:ev}, one can show that the identified sets for $\mathbb{E}_Q[Y_0]$ and $\mathbb{E}_Q[Y_{\bar{D}}]$ are given by \eqref{eq:bounds1} for any $P$ such that $\mathbf{Q}_0(P, \mathbf{Q}) \neq \emptyset$. The ordered monotonicity assumption in \eqref{eq:ordered} therefore has no identifying power beyond instrument exogeneity for $\mathbb{E}_Q[Y_0]$ and $\mathbb{E}_Q[Y_{\bar{D}}]$.  In contrast, \eqref{eq:encouragement} need not hold for $d \in \mathcal{D} \setminus \{0,\bar{D}\}$ and $Q \in \mathbf{Q}_0(P, \mathbf{Q})$.  In Appendix B.3, we show there exists a $P$ for which $\mathbf Q_0(P, \mathbf Q) \neq \emptyset$ and $\Theta_0(P, \mathbf Q)$ is not given by \eqref{eq:bounds1}. We further show $\Theta_0(P, \mathbf Q) \subsetneqq \Theta_0(P, \mathbf Q_E^\ast)$ for the same $P$. Thus, the ordered monotonicity assumption in \eqref{eq:ordered} does have identifying power beyond instrument exogeneity for $\mathbb{E}_Q[Y_d]$ for $d \in \mathcal{D} \setminus \{0,\bar{D}\}$. 
\end{example}

\begin{example} \label{ex:arum-violation}
In Example \ref{ex:lee2023}, we considered ARUMs satisfying the strict  one-to-one targeting assumption of \cite{lee2023treatment}. As discussed there, if $|\mathcal D| = 2$ or if $|\mathcal{D}| \ge 3$ and $|\mathcal{Z}| > |\mathcal{D}^\dagger|$, then Assumption \ref{as:encouragement} holds and the results in Section \ref{sec:main} are  applicable. Now consider the case in which $|\mathcal{D}| \ge 3$ and $|\mathcal{Z}| = |\mathcal{D}^\dagger|$. Denote by $\mathbf Q$ the ARUM model defined by \eqref{eq:ARUM} under the additional assumption that the support of $(U_d: d \in \mathcal D)$ is $\Re^{|\mathcal{D}|}$. Then, as we show in Appendix B.4, while  \eqref{eq:encouragement} will hold for targeted treatments,  \eqref{eq:encouragement} cannot hold for any non-targeted treatment, and thus Assumption \ref{as:encouragement} is violated. We show that, while the identified set for $\mathbb{E}_Q[Y_d]$ is given by \eqref{eq:bounds1} for targeted treatments when  $\mathbf Q_0(P, \mathbf Q) \neq \emptyset$, there exists a $P$ for which $\mathbf Q_0(P, \mathbf Q) \neq \emptyset$ and the identified set for $\mathbb{E}_Q[Y_d]$ is not given by  \eqref{eq:bounds1} for non-targeted treatments.
\end{example}

\section*{Disclosure Statement}

The authors report there are no competing interests to declare.

\newpage
\appendix
\section{Proofs of Main Results}

\subsection{Proof of Lemma \ref{lem:maxp}}
The necessity of \eqref{eq:zd-id} has been proved in the main text. On the other hand, fix a $d \in \mathcal D$, suppose \eqref{eq:zd-id} holds for some $z \in \mathcal Z$. Fix a particular $z^\ast \in \mathcal Z^\ast(d, Q)$ that satisfies \eqref{eq:encouragement}. If $z = z^\ast$ then of course Assumption \ref{as:encouragement} holds for $z$. Suppose $z \neq z^\ast$. Then, 
\begin{align*}
0 \le & ~ P\{D=d \mid Z=z\} - P\{D=d \mid Z=z^\ast\}    \\
=& ~ Q\{D_z=d\} - Q\{D_{z^\ast}=d\} \\
=& ~ Q\{D_{z^\ast} \neq d, D_z = d\} - Q\{D_z \ne  d, D_{z^\ast} = d \}
 \\
=& ~ - Q\{D_z \ne  d, D_{z^\ast} = d \}~,
\end{align*}
where the first inequality is using that $z$ satisfies \eqref{eq:zd-id}, the second line is using Assumption \ref{as:exog}, and the last line is using that $z^\ast$ satisfies \eqref{eq:encouragement}.  Thus $Q\{D_z \ne  d, D_{z^\ast} = d \} =0$.  Assumption \ref{as:encouragement} implies 
\begin{align*}
    &Q\{D_z \neq d, D_{z'} = d \text{ for some } z' \in \mathcal Z\} \\ 
    &= Q \{  D_z \neq d, D_{z^\ast} = d\} + Q\{ D_z \neq d, D_{z^\ast} \neq d, D_{z'} = d ~\text{for some}~ z' \in \mathcal Z\setminus\{z^\ast\} \} \\
    &= 0 ~,
\end{align*}
and $z$ satisfies Assumption \ref{as:encouragement} as well. \qed

\subsection{Proof of Theorem \ref{theorem:idpower}}
The desired result follows immediately from Theorems \ref{theorem:ev} and \ref{theorem:same}. \qed

\subsection{Proof of Theorem \ref{theorem:ev}}

This section is organized as follows. In Section \ref{sec:auxillary}, we introduce additional notation that is helpful in formally proving our result, including defining subgroups of individuals, called treatment response types, who are defined by what treatment they would take at each value of the instrument. Because all variables are discrete, we will directly work with the probability mass function. We derive a lemma that characterizes a sufficient condition for a given distribution of potential outcomes and treatments $Q$ to rationalize the distribution of the data $P$. The lemma states that whether a $Q$ that satisfies Assumption \ref{as:exog} rationalizes $P$ depends on the probability of each treatment response type and on the probability of that type's potential outcomes corresponding to treatments they would choose for some value of the instrument (so that they ``comply with'' this treatment at least for some values of the instrument), but does not depend on the probability of that type's potential outcomes corresponding to treatments they would not choose for any value of the instrument (so that they are ``never-takers'' of this treatment). If $Q$ satisfies Assumptions \ref{as:exog} and \ref{as:encouragement}, then the set of treatments for which a treatment response type is a ``never-taker'' are precisely the set of treatments that they would not take even when maximally encouraged to do so. Therefore, the implication of the lemma is that if $Q$ satisfies Assumptions \ref{as:exog} and \ref{as:encouragement} and rationalizes $P$, then any other $Q^\ast $ satisfying Assumption  \ref{as:exog} and \ref{as:encouragement} will also  rationalize $P$ if $Q$ and $Q^\ast$ differ \emph{only} in the probability of potential outcomes corresponding to treatments that a given response type would not take even when maximally encouraged to do so.

In Section \ref{sec:theorem:ev}, we use the notation and lemma introduced in Section  \ref{sec:auxillary} to prove the theorem.  Let $\mathbf{Q} $  satisfy the assumptions of the theorem.
We first show that $\Theta_0(P, \mathbf{Q})$ is a subset of the bounds in \eqref{eq:bounds1}. We then show that the bounds in \eqref{eq:bounds1} are a subset of $\Theta_0(P, \mathbf{Q})$ using the following proof strategy.  By assumption, $\Theta_0(P, \mathbf{Q})$ is non-empty, so that there exists  a distribution $Q \in \mathbf{Q}$ that rationalizes $P$.  For each value $\theta_0$ in the bounds of \eqref{eq:bounds1}, we construct an alternative distribution $Q^\ast \in \mathbf{Q}$ such that $\theta(Q^\ast) = \theta_0$  with $Q$ and $Q^\ast$ differing only in the probability of outcomes corresponding to treatments that a given response type would not take even when maximally encouraged to do so.  That the constructed $  Q^\ast$ lies in $ \mathbf{Q}$ follows from the assumption that $\mathbf Q$ satisfies Assumption \ref{as:norestrictY} and that $Q$ and $Q^\ast$ have the same distribution of potential treatment choices with $Q \in \mathbf{Q}$.  That the  constructed $Q^\ast$ rationalizes $P$ follows from  $Q \in \mathbf{Q}$ and the previously described lemma.  That we are able to construct such a $Q^\ast$ with $\theta(Q^\ast) = \theta_0$ for every $\theta_0$ in the bounds of  \eqref{eq:bounds1} establishes that the bounds \eqref{eq:bounds1} are a subset of $\Theta_0(P, \mathbf{Q})$, completing the proof.



\subsubsection{Auxillary Results} \label{sec:auxillary}
To present the proof of Theorem \ref{theorem:ev}, we first introduce some further notation. Because all variables are discrete, we will directly work with the probability mass function. Recall from the discussion in Section \ref{sec:setup} that if $Q$ satisfies Assumption \ref{as:exog}, then $P = QT^{-1}$ if and only if $$p_{yd|z} = Q \{Y_d = y, D_z = d\}~.$$  Following \cite{heckman2018unordered}, we define a treatment response
type as a vector $r^t \in   \mathcal D^{|\mathcal Z|}$, 
\begin{align*}
r^t & = (d_z : z \in \mathcal Z) \in  \mathcal D^{|\mathcal Z|}~.
\end{align*}
Treatment response types are also called principal strata \citep{frangakis2002principal}.  We 
analogously define an outcome response type as a vector $r^o \in   \mathcal Y^{|\mathcal D|}$, 
\begin{align*}
r^o & = (y_d : d \in \mathcal D) \in   \mathcal Y^{|\mathcal D|}~. 
\end{align*}
Because all variables are discrete, we define the probability mass function as
\[ q \big ( r^o, r^t \big ) = Q\{(Y_d : d \in \mathcal D) = r^o, (D_z : z \in \mathcal Z) =  r^t\}~. \]
For the rest of the proof, without loss of generality, we suppose $\mathcal D = \{0, 1, \dots, |\mathcal D| - 1\}$ and $\mathcal Z = \{0, 1, \dots, |\mathcal Z| - 1 \}$. Let $r_j^o$ denote the $(j + 1)$th entry of $r^o$ and $r_j^t$ denote the $(j + 1)$th entry of $r^t$. In other words, $r_j^o$ denotes the value of the potential outcome for the outcome response type $r^o$ when the treatment equals $j$, and $r_j^t$ denotes the value of the potential treatment for the treatment response type $r^t$ when the instrument equals $j$. In this notation, if $Q$ satisfies Assumption \ref{as:exog}, then it follows from \eqref{eq:p-q} that $P = QT^{-1}$ if and only if 
\begin{equation} \label{eq:p-q-ev}
p_{yd|z}=  \sum_{(r^o,r^t) : r^o_d = y, r^t_z = d} q(r^o, r^t) \qquad \forall ~ y \in \mathcal Y, ~ d \in \mathcal{D}, ~ z \in \mathcal{Z}~.
\end{equation}

Below we derive a lemma that simplifies determining whether $q(r^o, r^t)$ satisfies \eqref{eq:p-q-ev} and will be used subsequently to derive our characterization of the identified set.  To this end, we require some further notation. Let 
\begin{align*}
    \mathcal{N}(r^t) & = \{ d \in \mathcal{D} ~:~ r^t_z  \ne d ~ \mbox{for all} ~ z \in \mathcal{Z}\}~, \\
    \mathcal{N}(r^t)^c & = \{ d \in \mathcal{D} ~:~ r^t_z  = d ~ \mbox{for some} ~ z \in \mathcal{Z}\}~,
\end{align*}
For a given treatment response type $r^t$, $\mathcal{N}(r^t)$ is the set of treatments for which that treatment response type is a ``never-taker,'' and $\mathcal{N}(r^t)^c$ is the set of treatments for which that treatment response type will ``comply with'' the treatment for some value of $z$. Using this notation, partition outcome and treatment response types $(r^o, r^t)$ as $(r^o_n(r^t), r^o_c(r^t), r^t)$ where
\begin{align*} r^o_n(r^t) &=  (  r^o_d :  d \in \mathcal{N}(r^t) )~,\\
r^o_c(r^t) &=  (  r^o_d :  d  \in \mathcal{N}(r^t)^c )~. \end{align*}
For a given treatment response type $r^t$, $r^o_n(r^t)$ are those outcomes that are never observed for that response type, and $ r^o_c(r^t)$ are the remaining outcomes that are observed given some value of $Z$. Here, the subscripts $n$ and $c$ stand for ``never-taker'' and ``complier.''

\begin{remark}
Here we illustrate how our notation specializes under Assumption \ref{as:encouragement}. Note Assumption \ref{as:encouragement} can be expressed as restricting $q(r^o, r^t) = 0$ unless the treatment response type $r^t$ satisfies the condition therein; in other words, it restricts the support of the treatment response type. In particular, if for some $d \in \mathcal{D}$, $r^t_{z^\ast(d)} \ne d$ while $r^t_{z^{\prime}} =d$ for some $z' \neq z^\ast(d)$, then $q(r^o,r^t)=0$ for all $r^o$. For any $r^t$ in the support,
\begin{align*} 
 \mathcal{N}(r^t) &= \{ d \in \mathcal{D} ~:~ r^t_{z^\ast(d)} \ne d \}~,\\ \mathcal{N}(r^t)^c &= \{ d \in \mathcal{D} ~:~ r^t_{z^\ast(d)} = d \}~, 
\intertext{and}
    r^o_n(r^t) &= (  r^o_{d} :  d\in \mathcal{D}, ~ r^t_{z^\ast(d)} \ne d )~,\\
    r^o_c(r^t) &= (   r^o_{d} :  d\in \mathcal{D}, ~ r^t_{z^\ast(d)} = d )~. 
\end{align*}
Indeed, $z^\ast(d)$ is the instrument that maximally encourages to treatment $d$, so if $r^t_{z^\ast(d)} \ne d$, then $r^t_z \ne d$ for all $z \in \mathcal Z$. As a result, someone with that treatment response type $r^t$ never takes $d$, and hence $d \in \mathcal N(r^t)$. Otherwise, $d \in \mathcal N(r^t)^c$, or this person is a ``complier'' for treatment $d$ at least when $z = z^\ast(d)$. The outcome response type $r^o$ is then partitioned into $r^o_n(r^t)$ and $r^o_c(r^t)$ according to whether $d \in \mathcal N(r^t)$ or not.
\end{remark}

For notational convenience, we further define the probability mass $q(r^o_c(r^t), r^t)$ as $q(r^o_n(r^t), \allowbreak r^o_c(r^t), r^t)$ summed over
$r^o_n(r^t)$:
$$q( r^o_c(r^t), r^t) = \begin{cases} q( r^o, r^t) & \mbox{if} ~  \mathcal{N}(r^t) = \emptyset ~ \mbox{so that}~
r^o_c(r^t) = r^o  \\
\sum_{r^o_n(r^t) \in \mathcal{Y}^{|\mathcal N(r^t)|}}  
    q(r^o_n(r^t), r^o_c(r^t), r^t) & \mbox{if}  ~    \mathcal{N}(r^t) \ne \emptyset
     ~ \mbox{so that}~
r^o_c(r^t)  \ne r^o~.
\end{cases}$$
In defining the probability mass $q(r^o_c(r^t), r^t)$, we sum over all possible values of $r^o_n(r^t)$, because these are the outcomes of treatments that are never taken by the treatment response type $r^t$, and hence will not be relevant for the observed data. Using this notation, we have the following lemma that asserts whether $q( r^o, r^t)$ satisfies \eqref{eq:p-q-ev} depends only on $ q( r^o_c(r^t), r^t)$. This lemma implies that whether a distribution of potential outcomes and treatments $Q$ that satisfies Assumption \ref{as:exog} rationalizes the distribution of the data $P$ depends only on the probability of each treatment response type and the probability of that type's potential outcomes that would be observed for some value of the instrument.

\begin{lemma}\label{lemma:ev1}
    Suppose $q$ satisfies \eqref{eq:p-q-ev}.  Then, $q^\ast$ satisfies
    \eqref{eq:p-q-ev} if, for each $ r^t \in \mathcal D^{|\mathcal Z|} $,
    \begin{equation} \label{eq:lemma}
    q^\ast(r^o_c(r^t), r^t)=   q(  r^o_c(r^t), r^t) ~~ \forall ~ r^o_c(r^t)~.
    \end{equation}
\end{lemma}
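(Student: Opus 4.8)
The plan is to verify directly that $q^\ast$ satisfies \eqref{eq:p-q-ev} by fixing arbitrary $y \in \mathcal Y$, $d \in \mathcal D$, $z \in \mathcal Z$ and computing the right-hand side of \eqref{eq:p-q-ev} for $q^\ast$, namely $\sum_{(r^o, r^t) : r^o_d = y, \, r^t_z = d} q^\ast(r^o, r^t)$, showing it equals the corresponding sum for $q$, which by hypothesis equals $p_{yd|z}$. The first observation is that if $r^t_z = d$ for some $z \in \mathcal Z$, then $d \in \mathcal N(r^t)^c$, so the coordinate $r^o_d$ is one of the entries of $r^o_c(r^t)$, not of $r^o_n(r^t)$. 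Hence the constraint ``$r^o_d = y$'' is a constraint on $r^o_c(r^t)$ alone and is unaffected by the value of $r^o_n(r^t)$.

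Next I would reorganize the sum by grouping together all outcome response types $r^o$ that share the same $r^t$ and the same ``complier part'' $r^o_c(r^t)$, summing freely over the ``never-taker part'' $r^o_n(r^t)$. Concretely, for a fixed $r^t$ with $r^t_z = d$, the inner sum $\sum_{r^o : r^o_d = y} q^\ast(r^o, r^t)$ can be rewritten as $\sum_{r^o_c(r^t) : (r^o_c(r^t))_d = y} \, \sum_{r^o_n(r^t) \in \mathcal Y^{|\mathcal N(r^t)|}} q^\ast(r^o_n(r^t), r^o_c(r^t), r^t)$, which by the definition of $q^\ast(r^o_c(r^t), r^t)$ as the marginal over the never-taker coordinates equals $\sum_{r^o_c(r^t) : (r^o_c(r^t))_d = y} q^\ast(r^o_c(r^t), r^t)$. (When $\mathcal N(r^t) = \emptyset$ this is immediate since $r^o_c(r^t) = r^o$.) Now apply the hypothesis \eqref{eq:lemma}: $q^\ast(r^o_c(r^t), r^t) = q(r^o_c(r^t), r^t)$ for every $r^t$ and every $r^o_c(r^t)$. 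Therefore the inner sum for $q^\ast$ equals the inner sum for $q$, and summing over all $r^t$ with $r^t_z = d$ gives
\[
\sum_{(r^o,r^t) : r^o_d = y,\, r^t_z = d} q^\ast(r^o, r^t) = \sum_{(r^o,r^t) : r^o_d = y,\, r^t_z = d} q(r^o, r^t) = p_{yd|z}~,
\]
where the last equality is because $q$ satisfies \eqref{eq:p-q-ev}. Since $y, d, z$ were arbitrary, $q^\ast$ satisfies \eqref{eq:p-q-ev}.

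I anticipate the only real subtlety — and hence the step to state carefully rather than the ``hard part'' in any deep sense — is the bookkeeping that the constraint $r^o_d = y$ in the summation indeed falls entirely on the complier coordinates whenever $r^t_z = d$, so that collapsing the never-taker coordinates via the definition of $q(r^o_c(r^t), r^t)$ is legitimate and does not interact with the constraint. One should also note that $q^\ast$ is implicitly assumed to be a valid probability mass function (nonnegative, summing to one); the lemma as stated gives a sufficient condition for \eqref{eq:p-q-ev}, and it is worth remarking that \eqref{eq:lemma} in particular forces $\sum_{r^o_c(r^t), r^t} q^\ast(r^o_c(r^t), r^t) = \sum_{r^o_c(r^t),r^t} q(r^o_c(r^t),r^t) = 1$, so the total-mass constraint is automatically consistent. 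Beyond that, the argument is a direct unwinding of definitions and requires no further machinery.
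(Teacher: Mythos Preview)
Your proposal is correct and follows essentially the same approach as the paper: both proofs rewrite the right-hand side of \eqref{eq:p-q-ev} by first summing over $r^t$ with $r^t_z=d$, then observing that for such $r^t$ the coordinate $r^o_d$ lies in $r^o_c(r^t)$ so that the constraint $r^o_d=y$ acts only on the complier part, and finally collapsing the free sum over $r^o_n(r^t)$ into $q(r^o_c(r^t),r^t)$. Your additional remarks about the total-mass constraint are not in the paper's proof but are correct side observations.
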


\begin{proof}
We can rewrite \eqref{eq:p-q-ev} as
\begin{align*}
    p_{yd|z}=     &~  \sum_{r^t : r^t_{z}=d } ~~  \sum_{r^o : r^o_d = y}   q(r^o, r^t) \\
   =&       \sum_{r^t : r^t_{z}=d }   ~~  \sum_{r^o_c(r^t): r^o_d = y } ~~ 
   \left( \sum_{r^o_n(r^t)} q(r^o_n(r^t),r^o_c(r_t), r^t) \right)\\
  =&      \sum_{r^t : r^t_{z}=d }   ~~  \sum_{r^o_c(r^t): r^o_d = y } ~~  q(r^o_c(r_t), r^t) ~,
\end{align*}
where the second equality uses that $r_c^o(r^t)$ is nonempty because $r^t_{z} = d$ and that $r^o_d$ is an element of $r^o_c(r^t) $ for $r^t$ such that $r^t_{z} = d$. The result now follows.
\end{proof}

\subsubsection{Proof of the Theorem} \label{sec:theorem:ev}
\underline{$\Theta_0(P,\mathbf{Q}) \subseteq \eqref{eq:bounds1}$}

We first show that \eqref{eq:bounds1} provides valid bounds on $\theta(Q)$ under the stated assumptions, that is, $\Theta_0(P,\mathbf{Q})$ is a subset of the bounds of \eqref{eq:bounds1}.
Suppose $Q \in \mathbf{Q}_0(P,\mathbf{Q})$. For each $d \in \mathcal{D}$, 
\begin{align*}  \mathbb{E}_Q[Y_d] &= \mathbb{E}_Q[Y_d \mathbbm 1\{ D_{z^\ast(d)}  = d\}  ] + \mathbb{E}_Q[Y_d \mathbbm 1\{ D_{z^\ast(d)} \ne d\}]\\
  &= \beta_{d|z^\ast(d)} + \mathbb{E}_Q[Y_d \mathbbm 1\{ D_{z^\ast(d)} \ne d\} ]~,
\end{align*}
where the second equality is using Assumption \ref{as:exog}.  We have 
$$ \mathbb{E}[Y_d \mathbbm 1\{ D_{z^\ast(d)} \ne d\}] \in [y^L Q\{D_{z^\ast(d)} \ne d\}, ~y^U Q\{D_{z^\ast(d)} \ne d\} ] ~,$$
while Assumption \ref{as:exog} implies that   $$Q\{D_{z^\ast(d)} \ne d\}= 1 - \sum_{y \in \mathcal{Y}} p_{yd|z^\ast(d)}~.$$  We thus have that 
$$ \mathbb{E}[Y_d] \in [ \beta_{d|z^\ast(d)} +y^L (1 - \sum_{y \in \mathcal{Y}} p_{yd|z^\ast(d)})  ,~ \beta_{d|z^\ast(d)}  +y^U (1 - \sum_{y \in \mathcal{Y}} p_{yd|z^\ast(d)}) ] ~,$$ for each $d \in \mathcal{D}$, and thus \eqref{eq:bounds1} provides valid bounds on $\theta(Q)$ under the stated assumptions.

\noindent \underline{$\eqref{eq:bounds1} \subseteq \Theta_0(P,\mathbf{Q})$}

We now show that the bounds of \eqref{eq:bounds1} are the identified set for $\theta(Q)$, that is, the bounds of \eqref{eq:bounds1} are a subset of $\Theta_0(P, \mathbf{Q})$. 
Let $q$ denote latent probabilities corresponding to a fixed $Q \in \mathbf{Q}_0(P, \mathbf{Q})$.   There exists such a $q$ by the assumption that $\mathbf{Q}_0(P, \mathbf{Q})$ is non-empty.
We show that for each $\theta_0$ in the right-hand side of \eqref{eq:bounds1}, we can construct an alternative distribution of potential outcomes and treatments $Q^\ast$ that is contained in $\mathbf{Q}_0(P, \mathbf{Q})$ and for which $\theta(Q^{\ast}) = ( \mathbb{E}_{Q^{\ast}}[Y_j] : j \in \mathcal{D} ) $ is equal to $\theta_0$. In particular, for each $\theta_0$ in the right-hand side of \eqref{eq:bounds1} we will construct $q^\ast$ corresponding to $Q^\ast$ that
\begin{enumerate}[\rm (a)]
    \item satisfies $\sum_{r^o} q^\ast(r^o,r^t)=\sum_{r^o} q(r^o,r^t)$ and hence $Q^\ast \in \mathbf{Q}$ because the distribution of $(D_z: z \in \mathcal Z)$ is unchanged and by assumption that $\mathbf Q$ satisfies Assumption \ref{as:norestrictY},
    \item satisfies \eqref{eq:lemma} and hence $P = Q^{\ast} T^{-1}$ due to Lemma \ref{lemma:ev1}, and
    \item satisfies $\theta({Q^\ast}) = \theta_0$.
\end{enumerate}
Properties (a) and (b) allow us to conclude that $Q^{\ast} \in \mathbf{Q}_0(P, \mathbf{Q})$, that is, the constructed distribution is consistent with $P$ and the model $\mathbf{Q}$. These properties will follow from our iterative construction of $q^\ast$, which preserves the marginal distribution of potential treatments but modifies the marginal distributions of potential outcomes for outcomes that are never observed for a given treatment response type, that is, correspond to a never-taken treatment for a given treatment response type. Because the marginal distribution of potential treatments is preserved, property (a) follows. Because only the marginal distributions of potential outcomes for never-taken treatments are modified, property (b) follows. Property (c) follows from being able to flexibly modify the marginal distributions of potential outcomes for never-taken treatments, so that any $\theta$ can be achieved.


\noindent \underline{Part 1: construct $Q^\ast$}

We now construct an alternative $q^\ast$ as follows.
Fix some vector of weights $\alpha = (\alpha_0,\alpha_1,...,\alpha_{|\mathcal D|-1})^{\prime} \in [0,1]^{|\mathcal D|}$ to be specified below.
For each treatment response type $r^t$, let $q^\ast_0(r^o,r^t)= q(r^o,r^t)$ for all $r^o$. Let $K(r^t) = |\mathcal{N}(r^t)|$ be the number of treatments for which treatment response type $r^t$ is a never-taker. Note that if $K(r^t) = 0$, then $\mathcal N(r^t) = \emptyset$, so $r^t_{z^\ast(d)} = d$ for all $d \in \mathcal{D}$ under $r^t$. For such an $r^t$, we set $q^\ast(r^o, r^t) = q^\ast_0(r^o,r^t) = q(r^o, r^t)$ for all $r^o \in \mathcal{Y}^{|\mathcal D|}$. 

If $K(r^t) \ge 1$, enumerate the set of never-taken treatments $\mathcal{N}(r^t)$ as $\{ j[1],...,j[K(r^t)]\}$,   and  for $k=1$ to $K(r^t)$, define $q_k^\ast$ iteratively as follows:
\begin{equation} \label{eq:silly}
    \begin{split}
        q^\ast_{k}((r^o_{-j[k]}, r^o_{j[k]}=y^L),r^t) &= (1-\alpha_{j[k]})  \sum_{r^o_{j[k]} \in \mathcal{Y}} q^\ast_{k-1}((r^o_{-j[k]}, r^o_{j[k]}), r^t) \\
        q^\ast_{k}((r^o_{-j[k]}, r^o_{j[k]}=y),r^t) &= 0 \qquad \mbox{for}~ y \not \in \{y^L, y^U\}  \\
        q^\ast_{k}((r^o_{-j[k]}, r^o_{j[k]}=y^U),r^t) &= \alpha_{j[k]} \sum_{r^o_{j[k]} \in \mathcal{Y}} q^\ast_{k-1}((r^o_{-j[k]}, r^o_{j[k]}), r^t)~,
\end{split}
\end{equation}
for all $r^o_{-j[k]}$, where we partition $r^o = (r^o_{-j[k]}, r^o_{j[k]})$. 
Intuitively, in step $k$, for each $r^o_{-j[k]}$ and $r^t$ we reassign the probabilities of all outcome responses to never-taken treatment $j[k]$ to outcome responses $y^L$ and $y^U$, splitting between $y^L$ and $y^U$ according to weight $\alpha_{j[k]}$.

With this construction, the marginal distribution of $Y_{j[k]}$ for treatment response type $r^t$ is only modified in step $k$. This statement implies that for each fixed $k$, for step $\ell \leq k - 1$, and for any outcome $y \in \mathcal{Y}$,
\begin{equation} \label{eq:<k}
    \sum_{r_{-j[k]}^o} q^\ast_{\ell}((r^o_{-j[k]}, r^o_{j[k]}=y), r^t) = \sum_{r_{-j[k]}^o} q^\ast_0((r^o_{-j[k]}, r^o_{j[k]}=y),r^t) = \sum_{r_{-j[k]}^o} q((r^o_{-j[k]}, r^o_{j[k]}=y),r^t)~.
\end{equation}
%
On the other hand, for each fixed $k$, because the marginal distribution of $Y_{j[k]}$ for treatment response type $r^t$ is not further modified after step $k$, \eqref{eq:silly} and \eqref{eq:<k} imply
\begin{align} 
\nonumber    \sum_{r_{-j[k]}^o} q^\ast_{K(r^t)}((r^o_{-j[k]}, r^o_{j[k]}=y^L),r^t) & = \sum_{r_{-j[k]}^o} q^\ast_{k}((r^o_{-j[k]}, r^o_{j[k]}=y^L),r^t)\\
\nonumber    & = (1 - \alpha_{j[k]}) \sum_{r^o_{j[k]} \in \mathcal{Y}} \sum_{r_{-j[k]}^o} q((r^o_{-j[k]}, r^o_{j[k]}),r^t)\\
\nonumber            & = (1 - \alpha_{j[k]}) \sum_{r^o} q(r^o, r^t) ~,\\
\nonumber    \sum_{r_{-j[k]}^o} q^\ast_{K(r^t)}((r^o_{-j[k]}, r^o_{j[k]}=y),r^t) & = 0  \qquad \mbox{for}~ y \not \in \{y^L, y^U\}  ~,\\
\nonumber    \sum_{r_{-j[k]}^o} q^\ast_{K(r^t)}((r^o_{-j[k]}, r^o_{j[k]}=y^U),r^t) & = \sum_{r_{-j[k]}^o} q^\ast_{k}((r^o_{-j[k]}, r^o_{j[k]}=y^U),r^t) \\
\nonumber    & = \alpha_{j[k]} \sum_{r^o_{j[k]} \in \mathcal{Y}} \sum_{r_{-j[k]}^o} q((r^o_{-j[k]}, r^o_{j[k]}),r^t) \\
\label{eq:marginal}    & = \alpha_{j[k]} \sum_{r^o} q(r^o, r^t) ~.
\end{align}
These equations state that under $q^\ast_{K(r^t)}$, the constructed distribution in the final step $K(r^t)$, the probability that $r^o_{j[k]} = y$ for each $r^t$ is zero if $y$ is not $y^L$ or $y^U$, is $\alpha_{j[k]}$ times the true probability of $r^t$ under $q$ if $y = y^U$, and is $1-\alpha_{j[k]}$ times the true probability of $r^t$ under $q$ if $y = y^L$.

Finally, set
$$q^\ast(r^o,r^t)= q^\ast_{K(r^t)}(r^o,r^t)  \qquad   \forall ~ r^o~.$$

\noindent \underline{Part 2: verify property (a), $Q^\ast \in \mathbf Q$}

With this construction, $q^\ast(r^o,r^t)$ is non-negative and from \eqref{eq:marginal} we have
$$\sum_{r^o} q^\ast(r^o,r^t) =\sum_{r^o} q(r^o,r^t) ~~ \mbox{for all} ~~ r^t~.$$ 
Because we assume $\mathbf Q$ satisfies Assumption \ref{as:norestrictY} and $Q \in \mathbf{Q}$, this implies that $Q^\ast \in \mathbf{Q}$.

\noindent \underline{Part 3: verify property (b), $P = Q^\ast T^{-1}$}

Furthermore, for each $r^t$ and for all $r^o_c(r^t)$, the construction of $q^\ast$ in \eqref{eq:silly} implies
\begin{align*}
    q^\ast(r^o_c(r^t), r^t) &= \sum_{r^o_n(r^t) \in \mathcal{Y}^{\vert \mathcal{N}(r^t) \vert}} q^\ast(r^o_n(r^t), r^o_c(r^t), r^t) \\
    &= \sum_{r^o_{j[1]} \in \mathcal{Y}} \cdots \sum_{r^o_{j[K(r^t)]} \in \mathcal{Y}} q_{K(r^t)}^\ast((r^o_{j[1]}, \dots, r^o_{j[K(r^t)]}), r^o_c(r^t), r^t) \\
    &= \sum_{r^o_{j[1]} \in \mathcal{Y}} \cdots \sum_{r^o_{j[K(r^t)]} \in \{y^L, y^U\}} q_{K(r^t)}^\ast(r^o_{-j[K(r^t)]}, r^o_{j[K(r^t)]}, r^t) \\
    &= \sum_{r^o_{j[1]} \in \mathcal{Y}} \cdots \sum_{r^o_{j[K(r^t)]} \in \mathcal{Y}} q_{K(r^t)-1}^\ast(r^o_{-j[K(r^t)]}, r^o_{j[K(r^t)]}, r^t) \\
    &= \sum_{r^o_{j[1]} \in \mathcal{Y}} \cdots \sum_{r^o_{j[K(r^t)]} \in \mathcal{Y}} q_0^\ast((r^o_{j[1]}, \dots, r^o_{j[K(r^t)]}), r^o_c(r^t), r^t) \\
    &= \sum_{r^o_n(r^t) \in \mathcal{Y}^{\vert \mathcal{N}(r^t) \vert}} q(r^o_n(r^t), r^o_c(r^t), r^t) \\
    &= q(r^o_c(r^t), r^t)~.
\end{align*}
Therefore for each $r^t$, $q^\ast(r^o_c(r^t), r^t)=   q(  r^o_c(r^t), r^t) ~~ \forall ~ r^o_c(r^t)$, and hence by Lemma \ref{lemma:ev1}, $q^\ast$ satisfies \eqref{eq:p-q-ev} so that $P = Q^\ast T^{-1}$.
Thus $Q^\ast \in \mathbf{Q}_0(P, \mathbf{Q})$.

\noindent \underline{Part 4: verify property (c), $\theta(Q^\ast) = \theta_0$}

Note for each $d \in \mathcal{D}$,
$$\mathbb{E}_{Q^\ast}[Y_d \mathbbm 1\{ D_{z^\ast(d)}  = d\}  ]  = \mathbb{E}_P[ Y \mathbbm{1}\{D=d\} \mid Z=z^\ast(d)] = \beta_{d|z^\ast(d)}~.$$
Further note since Assumption \ref{as:encouragement} holds for $Q$, we have that for each $d \in \mathcal{D}$ and for each $r^t$ if $r^t_{z^\ast(d)}=d$ then $r^o_d$ is a component of $r_c^o(r^t)$ and $d \in \mathcal{N}(r^t)^c$, while if $r^t_{z^\ast(d)} \ne d$ then $r^o_d$ is a component of $r_n^o(r^t)$ and $d \in \mathcal{N}(r^t)$.
Then we also have that for each $d \in \mathcal{D}$,
\begin{align*} 
\mathbb{E}_{Q^{\ast}}[Y_d \mathbbm 1\{ D_{z^\ast(d)} \ne d\} ] &=
 \sum_{y \in \mathcal{Y}} ~~ \sum_{r^t : r^t_{z^\ast(d)} \ne d} ~~ \sum_{r^o : r^o_d =y} ~~ y ~q^\ast(r^o,r^t)\\
&= \sum_{y \in \mathcal{Y}} ~~ \sum_{r^t : r^t_{z^\ast(d)} \ne d} ~~ \sum_{r^o_{-d}} ~~ y ~q^\ast((r^o_{-d}, r^o_d=y),r^t)\\
&=  (\alpha_d y^U + (1-\alpha_d) y^L)   \sum_{r^t : r^t_{z^\ast(d)} \ne d} ~~ \sum_{r^o }    q(r^o,r^t)   \\
&= (\alpha_d y^U + (1-\alpha_d) y^L)  ~ Q\{D_{z^\ast(d)} \ne d\}\\
&=  (\alpha_d y^U + (1-\alpha_d) y^L)   (1 - \sum_y p_{yd|z^\ast(d)})~,
\end{align*} 
where the third equality is using that \eqref{eq:marginal} holds for $r^t$ such that $r^t_{z^\ast(d)}\ne d$, so that $d = j[k']$ for some $k'$ in constructing $q^\ast(\cdot, r^t)$ for that $r^t$, and the last equality is using that $Q$ satisfies \eqref{eq:p-q-ev}. Thus, for each $d \in \mathcal{D}$,
\begin{align*} \mathbb{E}_{Q^\ast}[Y_d] &= \mathbb{E}_{Q^\ast}[Y_d \mathbbm 1\{ D_{z^\ast(d)}  = d\}  ] + \mathbb{E}_{Q^\ast}[Y_d \mathbbm 1\{ D_{z^\ast(d)} \ne d\}]\\
  &= \beta_{d|z^\ast(d)} + (\alpha_d y^U + (1-\alpha_d) y^L)   (1 - \sum_y p_{yd|z^\ast(d)})~.
\end{align*}
For any $\theta_0$ contained in \eqref{eq:bounds1}, we can thus choose  $\alpha = (\alpha_0,\alpha_1,...,\alpha_{|\mathcal D|-1})^{\prime} \in [0,1]^{|\mathcal D|}$ such that $\theta({Q^\ast}) = \theta_0$. \qed

\subsection{Proof of Corollary \ref{cor:ev2}}
The results follows immediately from Theorem \ref{theorem:ev} because $\mathbb{E}[Y_j  ] - \mathbb{E}[Y_k]$ is a function of $\theta(Q)$. \qed

\subsection{Proof of Lemma \ref{lem:mean_iv}}
To see it, note for any $d \in \mathcal D$ and $z \in \mathcal Z$,
\begin{align*}
    \mathbb E_Q[Y_d] = \mathbb E_Q[Y_d \mid Z = z] & = \mathbb E_Q[Y_d \mathbbm 1 \{D = d\} \mid Z = z] + \mathbb E_Q[Y_d \mathbbm 1 \{D \neq d\} \mid Z = z] \\
    & = \mathbb E_Q[Y \mathbbm 1 \{D = d\} \mid Z = z] + \mathbb E_Q[Y_d \mathbbm 1 \{D \neq d\} \mid Z = z] \\
    & = \beta_{d | z} + \mathbb E_Q[Y_d \mathbbm 1 \{D \neq d\} \mid Z = z] \\
    & \leq \beta_{d | z} + y^U P \{D \neq d \mid Z = z\} \\
    & = \beta_{d | z} + y^U(1 - \sum_{y \in \mathcal Y} p_{yd|z})~.
\end{align*}
Because the inequality holds for all $z \in \mathcal Z$, the upper end for each $d \in \mathcal D$ of \eqref{eq:meaniv2} is a valid upper bound for $\mathbb E[Y_d]$. On the other hand, they can be simultaneously attained for all $d \in \mathcal D$ by setting $Y_d = y^U$ whenever $D \neq d$ and $Z = z$, without affecting the distribution of $(Y, D, Z)$. A similar argument can be applied to the lower ends. In addition, any values in between can also be attained simultaneously for all $d \in \mathcal D$ by setting $Y_d$ to be a convex combination of $y^L$ and $y^U$ whenever $D \neq d$ and $Z = z$ without affecting the distribution of $(Y, D, Z)$, and therefore \eqref{eq:meaniv2} is indeed the identified set for $\theta(Q)$ under mean independence. \qed

\subsection{Proof of Lemma \ref{lemma:ev2}}

Suppose $Q \in \mathbf{Q}_0(P, \mathbf{Q}) \ne \emptyset$.
Consider the upper endpoints of \eqref{eq:bounds1}  and \eqref{eq:meaniv2}.  
For any $d \in \mathcal{D}$, $z \in \mathcal{Z}$,   
\begin{align*}
\mathbb{E}_P[Y &\mathbbm{1}\{D=d\} ]\mid Z = z^\ast(d)]  + y^U \mathbb{E}_P[ 1- \mathbbm{1}\{D=d\} \mid Z = z^\ast(d)]\\  & ~~~~ -
\mathbb{E}_P[Y \mathbbm{1}\{D=d\} \mid Z = z] - y^U \mathbb{E}_P[ 1- \mathbbm{1}\{D=d\} \mid Z = z] \\
&= \mathbb{E}_Q[Y_d \mathbbm{1}\{D_{z^\ast(d)}=d\}]+
 y^U \mathbb{E}_Q[ 1- \mathbbm{1}\{D_{z^\ast(d)}=d\}  ] \\
& \hspace{3em} -
\mathbb{E}_Q[Y_d \mathbbm{1}\{D_{z}=d\}] -
 y^U \mathbb{E}_Q[ 1- \mathbbm{1}\{D_{z}=d\}  ]\\
&= \mathbb{E}_Q[( Y_d-y^U) ( \mathbbm{1}\{D_{z^\ast(d)}=d\} - \mathbbm{1}\{D_{z}=d\})]   \\
&= \mathbb{E}_Q[ ( Y_d - y^U)  ( \mathbbm{1}\{D_{z^\ast(d)}=d, D_{z} \ne d\} - \mathbbm{1}\{D_{z^\ast(d)}\ne d, D_{z} = d\})]\\
&= \mathbb{E}_Q[ ( Y_d - y^U)   \mathbbm{1}\{D_{z^\ast(d)}=d, D_{z} \ne d\}]\\
 & \le 0~,
\end{align*}
where the first equality uses Assumption \ref{as:exog} and the fourth equality uses that $Q\{ D_{z^\ast(d)}\ne d,  D_z= d\}=0$ for all $Q$ satisfying Assumption \ref{as:encouragement}. Since this inequality holds for all  $z \in \mathcal{Z}$, we have that the upper endpoint of the interval in \eqref{eq:bounds1} is weakly smaller than the upper endpoint of the interval in \eqref{eq:meaniv2}.  Conversely, the upper endpoint of \eqref{eq:bounds1} is contained  in the set over which the upper endpoint of  \eqref{eq:meaniv2} is minimizing over, and thus the  upper endpoint of \eqref{eq:bounds1} is weakly larger than the upper endpoint of  \eqref{eq:meaniv2}.  We conclude that the upper endpoints are the same. Parallel arguments show the equivalence of the lower endpoints. \qed

\subsection{Proof of Theorem \ref{theorem:same}}
Because by assumption $\mathbf Q \subseteq \mathbf Q' \subseteq \mathbf Q_{\it MI}^\ast$, we have
$$\Theta_0(P,\mathbf{Q}) \subseteq \Theta_0(P, \mathbf{Q}') \subseteq \Theta_0(P, \mathbf{Q}_{\it MI}^\ast)~.$$
By Lemma \ref{lemma:ev2}, if $\mathbf Q \subseteq \mathbf Q_{E, M}^\ast$, $\mathbf Q$ satisfies Assumption \ref{as:norestrictY}, and $\mathbf{Q}_0(P, \mathbf{Q}) \ne \emptyset$, then we have
$$\Theta_0(P,\mathbf{Q}) = \Theta_0(P, \mathbf{Q}_{\it MI}^\ast)~.$$
The result now follows by a sandwich argument. \qed

\subsection{Proof of Corollary \ref{cor:exog}}
The result follows by taking $\mathbf Q = \mathbf Q_{E, M}^\ast$ and $\mathbf Q' = \mathbf Q_E^\ast$, noting that $\mathbf Q_E^\ast \subseteq \mathbf Q_{\it MI}^\ast$ because Assumption \ref{as:exog} implies Assumption \ref{as:mean_iv}. \qed

\subsection{Proof of Corollary \ref{cor:marginal}}
To begin, note by assumption $\mathbf Q_0(P, \mathbf Q_{\it WE, M}^\ast) \neq \emptyset$, so there exists $Q \in \mathbf Q_{\it WE, M}^\ast$ such that $P = QT^{-1}$. We then have $Q\{Z = z\} = P\{Z = z\}$ for all $z \in \mathcal Z$, and because $Q$ satisfies Assumption \ref{as:marginal-exog}, we know \eqref{eq:p-q} holds. Let $Q_1$ denote the marginal distribution of $((Y_d: d \in \mathcal D), (D_z: z \in \mathcal Z))$ under $Q$, and $Q_Z$ denote the marginal distribution of $Z$ under $Q$, and define $\tilde Q = Q_1 \times Q_Z$. Then, $\tilde Q$ satisfies Assumption \ref{as:exog} by construction, and it satisfies Assumption \ref{as:encouragement} because the marginal distribution of $(D_z: z \in \mathcal Z)$ under $\tilde Q$ is the same as that under $Q$. In summary, $\tilde Q \in \mathbf Q_{E, M}^\ast$. Furthermore, $P = \tilde Q T^{-1}$ because (1) $\tilde Q\{Z = z\} = Q\{Z = z\} = P \{Z = z\}$ for all $z \in \mathcal Z$; and (2) $\tilde Q\{Y_d = y, D_z = d\} = Q\{Y_d = y, D_z = d\}$ for all $d \in \mathcal D$ and $z \in \mathcal Z$, and hence \eqref{eq:p-q} is still satisfied. As a result, we know $\mathbf Q_0(P, \mathbf Q_{E, M}^\ast) \neq \emptyset$.

Next, take $\mathbf Q = \mathbf Q_{E, M}^\ast$ and $\mathbf Q' = \mathbf Q_{\it WE, M}^\ast$, and note that $\mathbf Q_{E, M}^\ast \subseteq \mathbf Q_{\it WE, M}^\ast \subseteq \mathbf Q_{\it MI}^\ast$ because Assumption \ref{as:exog} implies Assumption \ref{as:marginal-exog}, which in turn implies Assumption \ref{as:mean_iv}. Because we know $\mathbf Q_0(P, \mathbf Q_{E, M}^\ast) \neq \emptyset$ from the previous paragraph, we then obtain from Theorem \ref{theorem:same} that
\[ \Theta_0(P, \mathbf Q_{E, M}^\ast) = \Theta_0(P, \mathbf Q_{\it WE,M}^\ast) = \Theta_0(P, \mathbf Q_{\it MI}^\ast)~. \]
Similarly, taking $\mathbf Q' = \mathbf Q_{\it WE}^\ast$, we have
\[ \Theta_0(P, \mathbf Q_{E, M}^\ast) = \Theta_0(P, \mathbf Q_{\it WE}^\ast) = \Theta_0(P, \mathbf Q_{\it MI}^\ast)~. \]
The desired conclusion now follows. \qed

\section{Details of Examples} 

\subsection{Details of Example \ref{ex:lee2023}}\label{sec:lee2023}

We show that,  except in the special case where the treatment and the instrument are both binary,    the strict one-to-one targeting assumption of  \cite{lee2023treatment} with one or more targeted treatments implies that \eqref{eq:uniform} does not hold for some treatments. To see this, suppose that the strict one-to-one targeting assumption of  \cite{lee2023treatment} holds with  $| \mathcal{D}^\dagger| \ge 1$. For each $d \in \mathcal{D}^\dagger$, their assumptions include   that there exists some $z^\dagger(d)$ and some $ \overline{U}(d) $, $\underline{U}(d)$ with $ \overline{U}(d) > \underline{U}(d)$ such that
\begin{align}  \label{eq:stricttarget}
g(z,d) = & 
\begin{cases} \overline{U}(d) & \mbox{if}~ z=z^\dagger(d) \\
\underline{U}(d) & \mbox{if}~ z \ne z^\dagger(d)~.
\end{cases}
\end{align}
On the other hand, for each $d \in \mathcal{D} \setminus \mathcal{D}^\dagger$, they impose that $g(z,d) = \underline{U}(d)$ for all $z \in \mathcal{Z}$, and they impose that there is at least one such non-targeted treatment. For any  $d \in \mathcal{D} \setminus \mathcal{D}^\dagger$,  \eqref{eq:uniform}  requires that there exists $z^\ast(d)\in\mathcal{Z}$ such that
\begin{equation}  \label{eq:lee2023ineq}
g(z, d') >   g(z^\ast(d), d')  ~~  \text{ for all } d' \neq d \text{ and } z \neq z^\ast(d)~.
\end{equation}

Suppose $|\mathcal{Z}| \ge 3$, and fix some targeted treatment $d' \in \mathcal{D}^\dagger$.  Suppose $z^\ast(d) \ne z^\dagger(d')$.  Then, for $z \in \mathcal{Z} \setminus \{z^\ast(d),z^\dagger(d')\}$, \eqref{eq:lee2023ineq} requires $\underline{U}(d')> \underline{U}(d')$, a contradiction. Now suppose $z^\ast(d) = z^\dagger(d')$.  Then, for $z \in \mathcal{Z} \setminus \{z^\ast(d)\}$, \eqref{eq:lee2023ineq} requires   $\underline{U}(d')> \overline{U}(d')$, a contradiction. Thus, $|\mathcal{Z}| \ge 3$ implies that \eqref{eq:uniform} does not hold for non-targeted treatments. 

Now suppose  $|\mathcal{Z}| = 2$, which we label as $\mathcal Z = \{0, 1\}$, and suppose $|\mathcal{D}| \ge 3$.   Without loss of generality suppose  $1 \in  \mathcal D^\dagger$ and $z^\dagger(1) = 1$.  If $z^\ast(d)=1$, then \eqref{eq:lee2023ineq} requires  $\underline U(1) > \overline U(1)$, a contradiction.  Now suppose $z^\ast(d)=0$. Then \eqref{eq:lee2023ineq} requires  $g(1, d') > g(0, d')$ for all $d' \neq d$. Consider the following two cases:
\begin{itemize}
    \item If $|\mathcal D^\dagger| = 1$, then  $g(1, d') > g(0, d')$ holding for any $d' \in (\mathcal D \setminus \mathcal D^\dagger) \setminus \{d\}$ requires $\underline U(d') > \underline U(d')$, a contradiction.
    \item If $|\mathcal D^\dagger| > 1$, then there exists $d'' \in D^\dagger \setminus \{1\}$. By assumption $z^\dagger(d'') \neq z^\dagger(1)$ so that $z^\dagger(d'') = 0$. Then $g(1, d') > g(0, d')$ holding for  $d' = d''$ requires $\underline U(d'') > \overline U(d'') $, again a contradiction.
\end{itemize}
Thus, $|\mathcal{Z}| = 2$ with $|\mathcal{D}| \ge 3$  implies that \eqref{eq:uniform} does not hold for some treatments. 

We have shown \eqref{eq:uniform} does not hold for some treatments when either $Z$ or $D$ takes at least three values. Now suppose   $|\mathcal{D}| = |\mathcal{Z}|=2$.  Let $D=0$ denote the nontargeted treatment and $D=1$ the targeted treatment, and let $z^{\dagger}(1)=1$. Consider  $z^\ast (0) = 0$ and $z^\ast (1) = 1$.  Then evaluating  \eqref{eq:uniform} at either $d=0$ or $d=1$ results in $ \overline U(1) > \underline U(1),$ and thus \eqref{eq:uniform} holds when $|\mathcal{D}| = |\mathcal{Z}|=2$. We conclude that the strict one-to-one targeting of  \cite{lee2023treatment} implies that \eqref{eq:uniform} does not hold for some $d \in \mathcal{D}$ except in the special case where $|\mathcal{D}| = |\mathcal{Z}|=2$.

We now show that the strict one-to-one targeting of  \cite{lee2023treatment} implies that Assumption \ref{as:encouragement} holds when $|\mathcal{Z}| > |\mathcal{D}^\dagger|$.   Let $\mathcal{Z}^\dagger \subseteq \mathcal{Z}$ denote the set of instruments that target some treatment, $$\mathcal{Z}^\dagger = \{ z \in \mathcal{Z} : z = z^\dagger(d) ~ \mbox{for some}~d \in \mathcal{D}^\dagger\}~.$$ Their strict one-to-one targeting assumption combined with $|\mathcal{Z}| > |\mathcal{D}^\dagger|$ implies that there are values of the instrument that do not target any treatment; in other words, $\mathcal{Z}^\dagger \subsetneqq  \mathcal{Z}$. Following \cite{lee2023treatment},  we label the treatment that is known not to be targeted   as treatment $0$, so that  $g(z,0) = \underline U(0)$ for all $z \in \mathcal Z$, and impose their normalization that $\underline U(0)=0$. Consider \eqref{eq:encouragement} for $d=0$. Note that 
$$Q\{D_{z^\ast(0)} \ne 0 ,   ~D_{z^{\prime}}=0 ~ \mbox{for some}  ~ z^{\prime} \ne z^\ast(0) \}
=~Q \left \{ \bigcup_{d^\ast  \ne  0, z^{\prime} \ne z^\ast(0)}  D_{z^\ast(0)} = d^\ast ,   ~D_{z^{\prime}}=0 \right \} ~.
$$ 
We wish to investigate whether there exists some  $z^\ast(0) \in \mathcal{Z}$ such that the above probability is zero.  Consider $z^\ast(0)$ equal to any value in $\mathcal{Z} \setminus \mathcal{Z}^\dagger$, i.e., any value of the instrument that does not target any treatment.
For any fixed $d^\ast  \ne  0, z^{\prime} \ne z^\ast(0)$, consider the event $\{ D_{z^\ast(0)} = d^\ast ,   ~D_{z^{\prime}}=0\}$. Since $z^\ast(0)$ does not target any treatment and thus does not target $d^\ast$, $D_{z^\ast(0)}=d^\ast$ implies
\begin{equation}\label{eq:a'}
U_0- U_{d^\ast} \le  \underline{U}(d^\ast) ~.
\end{equation}
If $z^{\prime}$ targets $d^\ast$, then $D_{z^{\prime}}=0$
implies
\begin{equation}\label{eq:b}
U_0 - U_{d^\ast}\ge  \overline{U}(d^\ast)  ~.
\end{equation} 
If $z^{\prime}$ does not target $d^\ast$, then $D_{z^{\prime}}=0$ implies
\begin{equation}\label{eq:c}
U_0 - U_{d^\ast}\ge  \underline{U}(d^\ast)~.
\end{equation} 
Thus, the event $\{ D_{z^\ast(0)} = d^\ast ,   ~D_{z^{\prime}}=0 \}$ either requires \eqref{eq:a'} and \eqref{eq:b} to jointly hold, which is a contradiction since $\overline{U}(d^\ast)> \underline{U}(d^\ast)$ , or requires \eqref{eq:a'} and \eqref{eq:c} to jointly hold, which is a zero probability event given our assumption that the distribution of $(U_d: d \in \mathcal D)$ is absolutely continuous w.r.t.\ Lebesgue measure. Thus $Q\{D_{z^\ast(0)} \ne 0 ,   ~D_{z^{\prime}}=0 ~ \mbox{for some}  ~ z^{\prime} \ne z^\ast(0) \}$ is a probability of a finite union of zero probability events, and thus, by Boole's inequality, equals zero so that \eqref{eq:encouragement} holds for $d=0$. A parallel argument shows that \eqref{eq:encouragement} holds for any non-targeted treatment, and related argument shows that  \eqref{eq:encouragement} holds for any targeted treatment.  Thus, under the strict one-to-one targeting of  \cite{lee2023treatment}, when there are more values of the instrument than targeted treatments, Assumption \ref{as:encouragement} holds even though \eqref{eq:uniform} is violated for some $d \in \mathcal{D}$. 

\subsection{Details of Example \ref{eg:alt}} \label{sec:alt}
Let $\mathbf{Q}$ denote all distributions $Q$ for which Assumption \ref{as:exog} holds and such that $Q\{ D_0 = D_1\} =0$.  Then for   $Q \in \mathbf{Q}_0(P, \mathbf{Q})$,
\begin{align*}
p_{y1|1} &= Q\{Y_1=y, D_1 =1, D_0 =0\}\\
p_{y0|0} &= Q\{Y_0=y, D_1 =1, D_0 =0\}\\
p_{y0|1} &= Q\{Y_0=y, D_1 =0, D_0 =1\}\\
p_{y1|0} &= Q\{Y_1=y, D_1 =0, D_0 =1\}
\end{align*}  
and
\begin{align*}
Q\{Y_0=1\} &= Q\{Y_0=1, D_1 =1, D_0 =0\} + Q\{Y_0=1, D_1 =0, D_0 =1\}\\
    &= p_{10|0}  + p_{10|1}\\
    Q\{Y_1=1\} &= Q\{Y_1=1, D_1 =1, D_0 =0\} + Q\{Y_1=1, D_1 =0, D_0 =1\}\\
    &= p_{11|1}  + p_{11|0}~.
\end{align*}  
Therefore, if $\mathbf{Q}$ is consistent with $P$, then $\theta(Q)$ is identified as
\begin{equation} \label{eq:point}
    \Theta_0(P, \mathbf Q) = \left \{ \begin{pmatrix}
    p_{10|0} + p_{10|1} \\ p_{11|0} + p_{11|1}
\end{pmatrix} \right \}~.
\end{equation}

In contrast, the identified set that follows from imposing Assumption \ref{as:exog} alone, $\Theta_0(P, \mathbf Q_E^\ast)$, is shown by \cite{balke1997bounds} to be
\begin{equation}\label{eq:bp_bounds_y0}
    \max \begin{Bmatrix} p_{10|1} \\ p_{10|0} \\ p_{10|0} + p_{11|0} - p_{00|1} - p_{11|1} \\ p_{01|0} + p_{10|0} - p_{00|1} - p_{01|1} \end{Bmatrix} \leq Q\{Y_0=1\} \leq \min \begin{Bmatrix}
        1-p_{00|1}\\ 1-p_{00|0} \\ p_{01|0} + p_{10|0} + p_{10|1} + p_{11|1} \\ p_{10|0} + p_{11|0} + p_{01|1} + p_{10|1}
    \end{Bmatrix}
\end{equation}
and 
\begin{equation}\label{eq:bp_bounds_y1}
    \max \begin{Bmatrix} p_{11|0} \\ p_{11|1} \\ -p_{00|0} - p_{01|0} + p_{00|1} + p_{11|1} \\ -p_{01|0} - p_{10|0} + p_{10|1} + p_{11|1} \end{Bmatrix} \leq Q\{Y_1=1\} \leq \min \begin{Bmatrix}
        1-p_{01|1}\\ 1-p_{01|0} \\ p_{00|0} + p_{11|0} + p_{10|1} + p_{11|1} \\ p_{10|0} + p_{11|0} + p_{00|1} + p_{11|1}
    \end{Bmatrix}~. 
\end{equation}
It follows from  $ \mathbf Q \subseteq \mathbf Q_E^\ast$ that $\Theta_0(P, \mathbf Q) \subseteq \Theta_0(P, \mathbf Q_E^\ast)$, and thus \eqref{eq:point} is contained in \eqref{eq:bp_bounds_y0}--\eqref{eq:bp_bounds_y1}.

Next, we show that there exists a $P$ for which $\mathbf Q_0(P, \mathbf Q) \neq \emptyset$, $\Theta_0(P, \mathbf Q)$ is not given by \eqref{eq:bounds1} and $\Theta_0(P, \mathbf Q) \subsetneqq \Theta_0(P, \mathbf Q_E^\ast)$.  We do so by providing a numerical example. Consider the $P$ specified in Table \ref{tab:distP_Ex51} and the $Q$ specified in Table \ref{tab:distQ_Ex51}, where we write $q(y_0 y_1, d_0 d_1) = Q\{Y_d = y_d, D_z = d_z, ~(d,z) \in \mathcal D \times \mathcal Z\}$ and omit any $q(\cdot)=0$. One can check that $Q \in \mathbf{Q}$ and rationalizes $P$, so that $Q \in  \mathbf Q_0(P, \mathbf Q) \neq \emptyset$. In this example,   $Q\{Y_0=1\} = 0.4274$, and thus the identified set for $Q\{Y_0=1\}$ relative to $ \mathbf Q$ is the singleton $\{  0.4274\}$.  In contrast, evaluating  \eqref{eq:bp_bounds_y0} at $P$ gives the  identified set for $Q\{Y_0=1\}$ relative to $\mathbf Q_E^\ast$ as $[0.3336, 0.5212]$.  We thus conclude that $\Theta_0(P, \mathbf Q) \subsetneqq \Theta_0(P, \mathbf Q_E^\ast)$ for some $P$ that can be rationalized by $Q \in \mathbf{Q}$.  Now consider evaluating the bounds of \eqref{eq:bounds1} at the same $P$.  Doing so results in bounds on $Q\{Y_0=1\}$ given by $[0.1618, 0.5445]$ if setting $z^\ast(0) = 0$ and given by $[0.2656, 0.8829]$ if setting $z^\ast(0) = 1$.  Therefore, no matter $z^\ast(0) = 0$ or $1$, the bounds \eqref{eq:bounds1}   is not the identified set for $Q\{Y_0=1\}$ relative to either $\mathbf Q_E^\ast$ or $\mathbf Q$.   

\begin{table}[ht!]
    \centering
    \small
    \begin{tabular}{|c|c|c|c|}
    \hline
         $p_{ 00|0 }$ & $p_{ 10|0 }$ & $p_{ 01|0 }$ & $p_{ 11|0 }$ \\
         0.4555 & 0.1618 & 0.3077 & 0.0750 \\
         \hline
         $p_{ 00|1 }$ & $p_{ 10|1 }$ & $p_{ 01|1 }$ & $p_{ 11|1 }$ \\
         0.1171 & 0.2656 & 0.0188 & 0.5985 \\
    \hline
    \end{tabular}
    \caption{Distribution $P$ in Appendix \ref{sec:alt}.}
    \label{tab:distP_Ex51}
\end{table}

\begin{table}[ht!]
    \centering
    \small
    \begin{tabular}{|c|c|c|c|}
    \hline
         $q( 00,01 )$ & $q( 00,10 )$ & $q( 01,01 )$ & $q( 01,10 )$ \\
         0.0039 & 0.0428 & 0.4516 & 0.0743 \\
         \hline
         $q( 10,01 )$ & $q( 10,10 )$ & $q( 11,01 )$ & $q( 11,10 )$ \\
         0.0149 & 0.2649 & 0.1469 & 0.0007 \\
    \hline
    \end{tabular}
    \caption{Distribution $Q$.}

    \label{tab:distQ_Ex51}
\end{table}

\subsection{Details of Example \ref{ex:ordered}}\label{sec:exordered}
Suppose $\mathcal Y = \{0, 1\}$, $\mathcal D = \{0, 1, 2\}$, and $\mathcal Z = \{0, 1, 2\}$. Then, the linear program approach in \cite{balke1993nonparametric,balke1997bounds} leads to the following identified set for $\mathbb E_Q[Y_1] = Q\{Y_1 = 1\}$ relative to $\mathbf Q$ being defined as in Example \ref{ex:ordered}:
\begin{equation} \label{eq:1-ordered}
    \left[ \max\begin{Bmatrix}
        p_{11|0}\\ p_{11|1}\\p_{11|2}\\ p_{11|0}-p_{11|1}+p_{11|2}
    \end{Bmatrix}, \quad \min\begin{Bmatrix}
           1-p_{01|2}\\1-p_{01|1}\\1-p_{01|0}\\1-p_{01|0}+p_{01|1}-p_{01|2}
    \end{Bmatrix} \right]~.
\end{equation}
We will show that for some $P$ such that $\mathbf Q_0(P, \mathbf Q) \neq \emptyset$, we have $\Theta_0(P, \mathbf Q)$ strictly smaller than \eqref{eq:bounds1} and $\Theta_0(P, \mathbf Q^\ast_1)$. For this purpose we are only concerned with the validity of \eqref{eq:1-ordered} instead of its sharpness. For the lower bounds, first note for $z \in \mathcal Z$,
\[ Q\{Y_1 = 1\} = Q\{Y_1 = 1 \mid Z = z\} \geq Q\{Y_1 = 1, D = 1 \mid Z = z\} = Q \{Y = 1, D = 1 \mid Z = z\}~, \]
and therefore the first three rows follow. To show the last row, note it's equivalent to
\[ Q\{Y_1 = 1, D_1 = 1\} + Q\{Y_1 = 1, D_0 = 0\} + Q\{Y_1 = 1, D_0 = 2\} \geq Q\{Y_1 = 1, D_2 = 1\}~. \]
It therefore suffices to show that
\begin{equation} \label{eq:ordered-implies}
    \{D_2 = 1\} \implies \{D_1 = 1\} \cup \{D_0 = 0\} \cup \{D_0 = 2\}~.
\end{equation}
Suppose $D_2 = 1$ but $D_0 \neq 0$ and $D_0 \neq 2$. Then $D_0 = 1$. But $D_0 \leq D_1 \leq D_2$, so $D_1 = 1$. \eqref{eq:ordered-implies} now follows. The lower bounds in \eqref{eq:1-ordered} have all been shown to hold, and the upper bounds can be proved similarly.

Next, we show that there exists a $P$ for which $\mathbf Q_0(P, \mathbf Q) \neq \emptyset$, $\Theta_0(P, \mathbf Q)$ is not given by \eqref{eq:bounds1} and $\Theta_0(P, \mathbf Q) \subsetneqq \Theta_0(P, \mathbf Q_E^\ast)$. We do so by providing a numerical example. Consider the $P$ specified in Table~\ref{tab:distP_Ex52} and the four $Q$ distributions specified in Table~\ref{tab:distQ_Ex52_exogmin}, \ref{tab:distQ_Ex52_exogmax}, \ref{tab:distQ_Ex52_exoord_min} and \ref{tab:distQ_Ex52_exoord_max}, which we denote as $Q_{\text{ex},\text{min}}$, $Q_{\text{ex},\text{max}}$, $Q_{\text{ex},\text{om},\text{min}}$ and $Q_{\text{ex},\text{om},\text{max}}$ respectively, where we write $q(y_0 y_1 y_2, d_0 d_1 d_2) = Q\{Y_d = y_d, D_z = d_z, ~(d,z) \in \mathcal D \times \mathcal Z\}$ and omit any $q(\cdot)=0$. One can check that all the four $Q$s are in $\mathbf Q_0(P, \mathbf Q^\ast_E)$, i.e., they all rationalize $P$ and satisfy Assumption \ref{as:exog}. Moreover, $Q_{\text{ex},\text{om},\text{min}} \in \mathbf Q_0(P, \mathbf Q)$ and $Q_{\text{ex},\text{om},\text{max}} \in \mathbf Q_0(P, \mathbf Q)$ so that $\mathbf Q_0(P, \mathbf Q) \neq \emptyset$. Evaluating \eqref{eq:1-ordered} at $P$ gives $[0.2117, 0.8205] =: I_{\text{ex}, \text{om}}$. In contrast, if one evaluates \eqref{eq:bounds1} by setting $z^\ast(1) = 0, 1, 2$ at the same $P$, the resulting bounds for $\mathbb E_Q[Y_1]$ are $[0.1664, 0.9255] =: I_{\eqref{eq:bounds1}, 0}$, $[0.0712, 0.9311] =: I_{\eqref{eq:bounds1}, 1}$ and $[0.1165, 0.8261] =: I_{\eqref{eq:bounds1}, 2}$ respectively. In all cases, we see $I_{\text{ex}, \text{om}} \subsetneqq I_{\eqref{eq:bounds1}, z^\ast(1)}$ for all $z^\ast(1) \in \{0,1,2\}$ so $\Theta_0(P, \mathbf Q)$ is not given by \eqref{eq:bounds1}. Furthermore, $Q_{\text{ex},\text{min}} \notin \mathbf Q_0(P, \mathbf Q)$ and $Q_{\text{ex},\text{max}} \notin \mathbf Q_0(P, \mathbf Q)$ because, for example, $q_{\text{ex},\text{min}}(000,021) > 0$ and $q_{\text{ex},\text{max}}(000,210) > 0$. At the same time, $\mathbb E_{Q_{\text{ex},\text{min}}}[Y_1] = 0.1664 \notin I_{\text{ex}, \text{om}}$ and $\mathbb E_{Q_{\text{ex},\text{max}}}[Y_1] = 0.8261 \notin I_{\text{ex}, \text{om}}$. Therefore, $\Theta_0(P, \mathbf Q) \subsetneqq \Theta_0(P, \mathbf Q_E^\ast)$. 

\begin{table}[ht!]
    \centering
    \small
    \begin{tabular}{|c|c|c|c|c|c|}
    \hline
        $p_{ 00|0 }$ & $p_{ 10|0 }$ & $p_{ 01|0 }$ & $p_{ 11|0 }$ & $p_{ 02|0 }$ & $p_{ 12|0 }$ \\
        0.3808 & 0.2427 & 0.0745 & 0.1664 & 0.0345 & 0.1011 \\
        \hline
        $p_{ 00|1 }$ & $p_{ 10|1 }$ & $p_{ 01|1 }$ & $p_{ 11|1 }$ & $p_{ 02|1 }$ & $p_{ 12|1 }$ \\
        0.2830 & 0.1947 & 0.0689 & 0.0712 & 0.2014 & 0.1808 \\
        \hline
        $p_{ 00|2 }$ & $p_{ 10|2 }$ & $p_{ 01|2 }$ & $p_{ 11|2 }$ & $p_{ 02|2 }$ & $p_{ 12|2 }$\\
        0.0802 & 0.0976 & 0.1739 & 0.1165 & 0.2444 & 0.2874 \\ 
    \hline
    \end{tabular}
    \caption{Distribution $P$ in Appendix \ref{sec:exordered}.}
    \label{tab:distP_Ex52}
\end{table}

\begin{table}[ht!]
    \centering
    \small
    \begin{tabular}{|c|c|c|c|c|c|c|c|}
    \hline
        $q( 000,002 )$ & $q( 000,011 )$ & $q( 000,021 )$ & $q( 001,002 )$ & $q( 001,020 )$ & $q( 001,021 )$ & $q( 001,121 )$ & $q( 001,211 )$ \\
        0.0139 & 0.0304 & 0.0054 & 0.2238 & 0.0802 & 0.0271 & 0.0735 & 0.0375 \\
        \hline
        $q( 010,101 )$ & $q( 010,111 )$ & $q( 010,122 )$ & $q( 100,000 )$ & $q( 100,022 )$ & $q( 100,110 )$ & $q( 100,202 )$ & $q( 101,202 )$ \\
        0.0453 & 0.0712 & 0.0499 & 0.0966 & 0.1461 & 0.0010 & 0.0345 & 0.0636 \\ 
    \hline
    \end{tabular}
    \caption{Distribution $Q_{\text{ex},\text{min}}$.}
    \label{tab:distQ_Ex52_exogmin}
\end{table}

\begin{table}[ht!]
    \centering
    \small
    \begin{tabular}{|c|c|c|c|c|c|c|c|}
    \hline
        $q( 001,021 )$ & $q( 001,121 )$ & $q( 001,211 )$ & $q( 010,000 )$ & $q( 010,001 )$ & $q( 010,002 )$ & $q( 010,010 )$ & $q( 010,122 )$ \\
        0.0305 & 0.0745 & 0.0689 & 0.0220 & 0.0064 & 0.0085 & 0.0260 & 0.1664 \\
        \hline
        $q( 010,202 )$ & $q( 011,002 )$ & $q( 011,022 )$ & $q( 011,210 )$ & $q( 110,000 )$ & $q( 110,001 )$ & $q( 110,011 )$ & $q( 110,022 )$ \\
        0.0345 & 0.2116 & 0.0758 & 0.0322 & 0.0976 & 0.0971 & 0.0130 & 0.0350 \\
    \hline
    \end{tabular}
    \caption{Distribution $Q_{\text{ex},\text{max}}$.}
    \label{tab:distQ_Ex52_exogmax}
\end{table}

\begin{table}[ht!]
    \centering
    \small
    \begin{tabular}{|c|c|c|c|c|c|c|c|}
    \hline
        $q( 000,000 )$ & $q( 000,001 )$ & $q( 000,002 )$ & $q( 000,022 )$ & $q( 000,111 )$ & $q( 000,122 )$ & $q( 000,222 )$ & $q( 001,002 )$ \\
        0.0802 & 0.0079 & 0.0430 & 0.0181 & 0.0209 & 0.0536 & 0.0345 & 0.1066 \\
        \hline
        $q( 001,022 )$ & $q( 001,222 )$ & $q( 010,001 )$ & $q( 010,111 )$ & $q( 010,122 )$ & $q( 100,000 )$ & $q( 100,001 )$ & $q( 100,011 )$ \\
        0.0797 & 0.1011 & 0.0453 & 0.0712 & 0.0952 & 0.0976 & 0.0971 & 0.0480 \\ 
    \hline
    \end{tabular}
    \caption{Distribution $Q_{\text{ex},\text{om},\text{min}}$.}
    \label{tab:distQ_Ex52_exoord_min}
\end{table}

\begin{table}[ht!]
    \centering
    \small
    \begin{tabular}{|c|c|c|c|c|c|c|c|}
    \hline
        $q( 000,001 )$ & $q( 000,111 )$ & $q( 000,122 )$ & $q( 010,000 )$ & $q( 010,001 )$ & $q( 010,012 )$ & $q( 010,111 )$ & $q( 010,122 )$ \\
        0.0079 & 0.0689 & 0.0056 & 0.0802 & 0.1114 & 0.0430 & 0.0051 & 0.1613 \\
        \hline
        $q( 010,222 )$ & $q( 011,002 )$ & $q( 011,022 )$ & $q( 011,222 )$ & $q( 100,001 )$ & $q( 110,000 )$ & $q( 111,012 )$ & $q( 111,022 )$ \\
        0.0345 & 0.0835 & 0.0548 & 0.1011 & 0.0971 & 0.0976 & 0.0231 & 0.0249 \\
    \hline
    \end{tabular}
    \caption{Distribution $Q_{\text{ex},\text{om},\text{max}}$.}
    \label{tab:distQ_Ex52_exoord_max}
\end{table}

\subsection{Details of Example \ref{ex:arum-violation}}\label{sec:arum-violation}
Consider the ARUM defined by \eqref{eq:ARUM} with strict  one-to-one targeting  and  $|\mathcal{D} | = 3$, $|\mathcal{Z}| = |\mathcal{D}^\dagger|=2$.  In this case, there are two targeted treatments and one non-targeted treatment.  Following \cite{lee2023treatment}, label that non-targeted treatment as treatment $0$ and impose the normalization that $g(z, 0) = 0$ for all $z \in \mathcal Z$. Label $Z=0$ as the instrument value that targets treatment $1$ and label $Z=1$ as the instrument value that targets treatment $2$, so that \eqref{eq:stricttarget}  holds for $d=1,2$ for some $ \overline{U}(d), \underline{U}(d)$ with  $ \overline{U}(d) > \underline{U}(d)$  and with $z^\dagger(1)=0$,  $z^\dagger(2)=1$. Let $\mathbf Q$ denote the set of all distributions for which $(D_z: z \in \mathcal Z)$ is determined by \eqref{eq:ARUM} with these restrictions and additionally imposing that the support of $(U_0,U_1,U_2)$ is $ \Re^3$. Recall $(U_0, U_1, U_2) \indep Z$ by assumption.

Let $U_{10}=U_1-U_0$ and $U_{20}=U_2-U_0$. In this model, the treatment value is completely determined by the vector of realizations $(U_{10}, U_{20})$.  For instance,  $D_z = 2$ if and only if
\begin{align*}
    U_{20} & \geq - g(z, 2) \\
    U_{20} - U_{10} & \geq g(z, 1) - g(z, 2)~,
\end{align*}
and a similar characterization holds for $D_z = 1$.   See Figure \ref{fig:012}, which is taken from Figure 1 in \cite{lee2023treatment}. 
\begin{figure}[ht!]
    \centering
    \resizebox{0.5\columnwidth}{!}{
\begin{tikzpicture}
    \fill[blue!20] (0, 0) -- (0, -4) -- (4, -4) -- (4, 4);
    \fill[pink] (0, 0) -- (-4, 0) -- (-4, -4) -- (0, -4);
    \fill[green!20] (0, 0) -- (-4, 0) -- (-4, 4) -- (4, 4);
    \draw[] (0, 0) node[below right]{$(-g(z, 1), -g(z, 2))$} to (4, 4);
    \draw[->] (-4, 0) to (4, 0) node[below]{$u_{10}$};
    \draw[->] (0, -4) to (0, 4) node[right]{$u_{20}$};
    \draw[] node at (2, -2) {$D_z = 1$};
    \draw[] node at (-2, 2) {$D_z = 2$};
    \draw[] node at (-2, -2) {$D_z = 0$};
\end{tikzpicture}
}
    \caption{Treatment under each value of $(u_{10}, u_{20})$ for a given $z$.}
    \label{fig:012}
\end{figure}

We first show that \eqref{eq:encouragement} holds  for the targeted treatments by verifying that \eqref{eq:uniform}  holds for the targeted treatments.  Consider  \eqref{eq:uniform} for $d =1$. It holds with $z^\ast(1)=z^{\dagger}(1)=0$ because
\begin{align*}
\overline{U}(1) > &~ \underline{U}(1)~,\\
\overline{U}(1) - \underline{U}(2) > &~ \underline{U}(1) -\overline{U}(2)~, 
\end{align*}
which in turn holds because $ \overline{U}(d) > \underline{U}(d)$ for $d \in \{1,2\}$. Thus \eqref{eq:uniform} holds for $d=1$, which, as shown in Example \ref{ex:arum}, implies that \eqref{eq:encouragement} holds for $d=1$. By a parallel argument, \eqref{eq:encouragement} holds for $d=2$. 

We now show that there does not exist a value of $z^{\ast}(0)$ such that \eqref{eq:encouragement} holds for the non-targeted treatment, treatment $0$.   Suppose $z^{\ast}(0)=0$. Then 
\begin{align*}  Q\{D_{0} \ne 0 ,   ~D_{1}=0 \} 
\ge & ~  Q\{D_{0} =1,   ~D_{1}=0 \} \\
 = & ~Q\{-\underline{U}(1) \ge U_{10} \ge -\overline{U}(1) , U_{20} \le -\overline{U}(2), U_{10}-U_{20} \ge \underline{U}(2)-\overline{U}(1) \}\\
 > &~0~,
\end{align*}
where the last line is using that the support of the distribution of $(U_{10},U_{20})= \Re^2$ by assumption and that   strict targeting of treatment $1$ requires  $-\underline{U}(1)> -\overline{U}(1)$.  Thus \eqref{eq:encouragement} cannot hold for $d=0$ with $z^{\ast}(0)=0$. A parallel argument shows that \eqref{eq:encouragement} cannot hold for $d=0$ with $z^{\ast}(0)=1$. 

We conclude that,  when $|\mathcal{D}|=3$ and $|\mathcal{D}^\dagger|=|\mathcal{Z}|=2$, one-to-one strict targeting with the regularity condition that the support of $(U_0,U_1,U_2)$ is $\Re^3$ implies that \eqref{eq:encouragement} holds for the targeted treatments but not for the non-targeted treatments, and thus  Assumption \ref{as:encouragement}  cannot hold.  This argument  can be adapted for any ARUM with $|\mathcal{D}| \ge 3$ and $|\mathcal{Z}|= |\mathcal{D}^\dagger|$ to show that, while \eqref{eq:encouragement} holds for the targeted treatments, there does not exist a value of $z^{\ast}(d)$ such that \eqref{eq:encouragement} holds 
for any non-targeted treatment $d$, and thus that Assumption \ref{as:encouragement}  cannot hold.

\begin{figure}[ht!]
    \centering
    \resizebox{0.5\columnwidth}{!}{
\begin{tikzpicture}
    \fill[blue!20] (1, -1) -- (1, -4) -- (4, -4) -- (4, 2);
    \fill[pink] (0, -1) -- (-4, -1) -- (-4, -4) -- (0, -4);
    \fill[green!20] (0, 0) -- (-4, 0) -- (-4, 4) -- (4, 4);
    \fill[cyan!20] (0, 0) -- (0, -1) -- (1, -1) -- (4, 2) -- (4, 4);
    \fill[magenta!20] (0, 0) -- (-4, 0) -- (-4, -1) -- (0, -1);
    \fill[yellow!20] (0, -1) -- (0, -4) -- (1, -4) -- (1, -1);
    \draw[] (0, 0) node[below left]{$z = 0$} to (4, 4);
    \draw[] (1, -1) node[below right]{$z = 1$} to (4, 2);
    \draw[] (0, 0) to (-4, 0);
    \draw[] (0, 0) to (0, -4);
    \draw[] (1, -1) to (-4, -1);
    \draw[] (1, -1) to (1, -4);
    \draw[] node at (2.5, -2) {$(1, 1)$};
    \draw[] node at (-2, 2) {$(2, 2)$};
    \draw[] node at (-2, -2.5) {$(0, 0)$};
    \draw[] node at (2, 1) {$(1, 2)$};
    \draw[] node at (-2, -0.5) {$(0, 2)$};
    \draw[] node at (0.5, -2.5) {$(1, 0)$};
\end{tikzpicture}
}
    \caption{Values of $(D_0, D_1)$ for each value of $(u_{10}, u_{20})$.}
    \label{fig:012-shift}
\end{figure}

Next, consider the identified sets for the average potential outcomes.
Since \eqref{eq:encouragement} is satisfied for  the targeted treatments,  a straightforward modification of the arguments underlying Theorem \ref{theorem:ev} show that the identified sets for $\mathbb{E}_Q[Y_d]$  for $d \in \{1,2\}$ is given  by \eqref{eq:bounds1} for any $P$ such that $\mathbf{Q}_0(P, \mathbf{Q}) \neq \emptyset$.

We now derive the identified set for the average potential outcome of the non-targeted treatment. First, note that $-g(0, 1) = -\overline U(1) < - \underline U(1) = -g(1, 1)$ and $-g(0, 2) = -\underline U(2) > -\overline U(2) = -g(1, 2)$. Therefore, it can be verified from Figure \ref{fig:012-shift} that for all $Q \in \mathbf Q$,
\begin{equation} \label{eq:arum-violation-rt}
    Q\{(D_0, D_1) \in \{(0, 0), (1, 0), (1, 1), (0, 2), (1, 2), (2, 2)\}\} = 1~.
\end{equation}
Let $\mathbf Q'$ denote the set of all distributions that satisfies \eqref{eq:arum-violation-rt}. Note that all $Q \in \mathbf Q$ satisfies \eqref{eq:arum-violation-rt}, so $\mathbf Q \subseteq \mathbf Q'$. On the other hand,  by assigning appropriate probabilities to each set in the partition in Figure \ref{fig:012-shift}, we immediately see that each $Q \in \mathbf Q'$ can be rationalized by a $Q \in \mathbf Q$. Therefore, $\mathbf Q = \mathbf Q'$. Using linear programming as in \cite{balke1993nonparametric,balke1997bounds}, we obtain the following identified set for $\mathbb E_Q[Y_0] = Q \{Y_0 = 1\}$ relative to $\mathbf Q$:
\begin{equation}\label{eq:ARUM-3arm-Y0}
    \left [ \max\begin{Bmatrix}
        p_{10|0} \\ p_{10|1}
    \end{Bmatrix}, \quad \min\begin{Bmatrix}
           1-p_{00|1}\\1-p_{00|0}
    \end{Bmatrix} \right ]~.
\end{equation}
The identified set in \eqref{eq:ARUM-3arm-Y0} equals \eqref{eq:meaniv2} for $d=0$ with $Y$ and $Z$ binary.  Thus, the identified set  for $\mathbb E_Q[Y_0]$ relative to $\mathbf{Q}$ corresponds to the identified set relative to $\mathbf{Q}^{\ast}_3$, the set of distributions that satisfy mean independence, \ref{as:mean_iv}.  By the same sandwich argument used to prove Theorem \ref{theorem:same}, the identified set  for $\mathbb E_Q[Y_0]$ relative to $\mathbf{Q}$ corresponds to the identified set relative to $\mathbf{Q}^{\ast}_{1}$, and thus imposing this ARUM has no identifying power for $\mathbb E_Q[Y_0]$ beyond instrument exogeneity.

Finally, we show that there exists a $P$ for which $\mathbf Q_0(P, \mathbf Q) \neq \emptyset$ and \eqref{eq:ARUM-3arm-Y0} is strictly smaller than \eqref{eq:bounds1}, so that $\Theta_0(P, \mathbf Q)$ is not given by \eqref{eq:bounds1}. We do so by providing a numerical example. Consider the $P$ specified in Table \ref{tab:distP_Ex53} and the $Q_{\text{arum},\text{min}}$ and $Q_{\text{arum},\text{max}}$ specified in Tables \ref{tab:distQ_Ex53_arummin} and \ref{tab:distQ_Ex53_arummax} respectively, where we write $q(y_0 y_1 y_2, d_0 d_1) = Q\{Y_d = y_d, D_z = d_z, ~(d,z) \in \mathcal D \times \mathcal Z\}$ and omit any $q(\cdot)=0$. One can check that both $Q_{\text{arum},\text{min}}$ and $Q_{\text{arum},\text{max}}$ rationalize $P$ and satisfy Assumption \ref{as:exog}. One can further check that both $Q_{\text{arum},\text{min}}$ and $Q_{\text{arum},\text{max}}$ satisfy the restriction in \eqref{eq:arum-violation-rt}, so that $\mathbf Q_0(P, \mathbf Q) \neq \emptyset$. Evaluating \eqref{eq:ARUM-3arm-Y0} at $P$ gives the identified set for $\mathbb E_Q[Y_0]$ relative to $\mathbf Q$ as $[0.2518, 0.8167]$. One can further check that the two endpoints are attained by $\mathbb E_{Q_{\text{arum},\text{min}}}[Y_0] = 0.2518$ and $\mathbb E_{Q_{\text{arum},\text{max}}}[Y_0] = 0.8167$. On the other hand, if one evaluates \eqref{eq:bounds1} by setting $z^\ast(0) = 0,1$ at the same $P$, the resulting bounds for $\mathbb E_Q[Y_0]$ equal $[p_{10|0}, 1 - p_{00|0}] =  [0.2518, 0.8937]$ and $[p_{10|1}, 1 - p_{00|1}] = [0.2372, 0.8167]$ respectively. In both cases, \eqref{eq:ARUM-3arm-Y0} is strictly contained in \eqref{eq:bounds1}. Hence, $\Theta_0(P, \mathbf Q)$ is not given by \eqref{eq:bounds1}.

\begin{table}[ht!]
    \centering
    \small
    \begin{tabular}{|c|c|c|c|c|c|}
  \hline
  $p_{ 00|0 }$ &
  $p_{ 10|0 }$ &
  $p_{ 01|0 }$ &
  $p_{ 11|0 }$ &
  $p_{ 02|0 }$ &
  $p_{ 12|0 }$ \\
  0.1063 &
  0.2518 &
  0.2946 &
  0.3183 &
  0.0020 &
  0.0270 \\
  \hline
  $p_{ 00|1 }$ &
  $p_{ 10|1 }$ &
  $p_{ 01|1 }$ &
  $p_{ 11|1 }$ &
  $p_{ 02|1 }$ &
  $p_{ 12|1 }$ \\
  0.1833 &
  0.2372 &
  0.0140 &
  0.1399 &
  0.1701 &
  0.2555 \\ 
  \hline
    \end{tabular}
    \caption{Distribution $P$ in Appendix \ref{sec:arum-violation}.}
    \label{tab:distP_Ex53}
\end{table}

\begin{table}[ht!]
    \centering
    \small
    \begin{tabular}{|c|c|c|c|c|c|}
    \hline
         $q( 000,10 )$ & $q( 000,11 )$ & $q( 000,12 )$ & $q( 000,22 )$ & $q( 001,02 )$ & $q( 001,12 )$ \\
         0.0049 & 0.0140 & 0.1535 & 0.0020 & 0.1063 & 0.1222 \\
         \hline
         $q( 001,22 )$ & $q( 010,10 )$ & $q( 010,11 )$ & $q( 100,00 )$ & $q( 100,02 )$ & \\ 
         0.0270 & 0.1784 & 0.1399 & 0.2372 & 0.0146 & \\
    \hline
    \end{tabular}
    \caption{Distribution $Q_{\text{arum},\text{min}}$.}
    \label{tab:distQ_Ex53_arummin}
\end{table}

\begin{table}[ht!]
    \centering
    \small
    \begin{tabular}{|c|c|c|c|c|c|}
    \hline
         $q( 000,00 )$ & $q( 010,10 )$ & $q( 100,00 )$ & $q( 100,10 )$ & $q( 100,11 )$ & $q( 100,02 )$ \\       
         0.1063 & 0.0770 & 0.0837 & 0.0521 & 0.0140 & 0.1681 \\
         \hline
         $q( 100,22 )$ & $q( 101,12 )$ & $q( 101,22 )$ & $q( 110,10 )$ & $q( 110,11 )$ & \\
         0.0020 & 0.2285 & 0.0270 & 0.1014 & 0.1399 & \\
    \hline
    \end{tabular}
    \caption{Distribution $Q_{\text{arum},\text{max}}$.}
    \label{tab:distQ_Ex53_arummax}
\end{table}

 \section{Additional Examples of Models That Satisfy Assumption \ref{as:encouragement}} 
\label{sec:ap_ex}

In Section \ref{sec:ex}, we considered examples of restrictions on potential treatments previously considered in the literature that satisfy generalized monotonicity.  We now consider three additional such examples.

\begin{example}\label{ex:binaryarum}
Consider the ARUM of Example \ref{ex:arum} when $|\mathcal{D}|=2$, and let $\mathbf{Q}$ denote the set of distributions defined in that example. Then, Assumption  \ref{as:encouragement} holds for all $Q \in \mathbf{Q}$.  To see this, consider $Q \in \mathbf Q$.  Label $\mathcal{D}=\{0,1\}$, and let $g_{10}(z) = g(z,1)-g(z,0)$ and $U_{10} = U_1-U_0$. The assumptions of Example \ref{ex:arum} on $(U_1,U_0)$ imply that the distribution of $U_{10}$ is absolutely continuous with respect to Lebesgue measure and that $U_{10} \indep Z$. Ignoring ties that occur with probability zero, \eqref{eq:ARUM} can be rewritten as
\begin{equation} \label{eq:binarychoice}
D_z = \mathbbm{1}\{     g_{10}(z) + U_{10} \ge 0 \}~.
\end{equation}
Let $\overline{\mathcal Z} = \argmax_{ z \in \mathcal{Z} }\{g_{10}(z)\}$, and let  $\underline{\mathcal Z} = \argmin_{ z \in \mathcal{Z} }\{g_{10}(z)\}$. Then, $Q$ satisfies Assumption \ref{as:encouragement} with $\mathcal Z^\ast(1) = \overline{\mathcal Z}$ and  $\mathcal Z^\ast(0) = \underline{\mathcal Z}$.  To contrast with Example \ref{ex:arum}, note that \eqref{eq:uniform} holds if and only if $\overline{\mathcal Z}$ and $\underline{\mathcal{Z}} $ are both singletons. 
\end{example}


\begin{example}
\cite{kline2016evaluating} considers an RCT with a ``close substitute'' to study the effects of preschooling on educational outcomes. In their setting, $D \in \mathcal{D} = \{0,1,2\}$, where $D=0$ denotes home care (no preschool), $D = 2$ denotes a preschool program called Head Start, and $D = 1$ denotes  preschools other than Head Start, namely the close substitute. Let $Z \in \mathcal{Z}=\{0,1\}$ denote an indicator variable for an offer to attend Head Start. Assumption \ref{as:exog} holds because $Z$ is randomly assigned. \cite{kline2016evaluating} impose the restriction that
\begin{equation} \label{eq:kw}
Q \{ D_1 = 2 \mid D_0 \ne D_1\}=1~.
\end{equation}
The condition in \eqref{eq:kw} states that if the choice of a family changes upon receiving a Head Start offer, then they must choose Head Start when receiving the offer. In other words, it cannot be the case that upon receiving a Head Start offer, a family switches from no preschool to preschools other than Head Start, or the other way around. Assumption \ref{as:encouragement} then holds with $z^\ast(0)=z^\ast(1)=0$ and $z^\ast(2)=1$.  To see this, note that \eqref{eq:kw} implies $Q\{  D_0 \ne D_1, D_1 \ne 2\}=0$ and thus
\begin{align*}
    Q\{ D_0 \ne 0, D_1 =0\} & =   Q\{ D_0 \ne D_1 , D_1 =0\} =0~,   \\
    Q\{ D_0 \ne 1, D_1 =1\} & =   Q\{ D_0 \ne D_1 , D_1 =1\} =0~,   \\
    Q\{ D_1 \ne 2, D_0 =2\} & \le  Q\{ D_0 \ne D_1, D_1 \ne 2\} =0~.
\end{align*}
Note in this example Assumption \ref{as:encouragement} still holds although $|\mathcal Z| < |\mathcal D|$. See \cite{bai2025sharp} for results on the sharp testable implications of the assumptions for this example and Example \ref{ex:klm}.
\end{example}

\begin{example} \label{ex:klm}
\cite{kirkeboen2016field} study the effects of fields of study on earnings. In their setting, $\mathcal D = \{0, 1, 2\}$ represent three fields of study, ordered by their (soft) admission cutoffs from the lowest to the highest. The instrument is $Z \in \{0,1,2\}$, with $Z = 1$ when the student crosses the (soft) admission cutoff for field 1, $Z = 2$ when the student crosses the (soft) admission cutoff for field 2, and $Z = 0$ otherwise. The authors assume that $Z$ is exogenous in the sense that $Q$ satisfies Assumption \ref{as:exog} and impose the following monotonicity conditions:
\begin{align}
    \label{eq:monotonicity1} Q \{D_1 = 1 \mid D_0 = 1\} & = 1~, \\
    \label{eq:monotonicity2} Q \{D_2 = 2 \mid D_0 = 2\} & = 1~.
\end{align}
The conditions in \eqref{eq:monotonicity1}--\eqref{eq:monotonicity2} require that crossing the cutoff for field 1 or 2 weakly encourages them towards that field. They further impose the following ``irrelevance'' conditions:
\begin{align}
    \label{eq:irrelevance1} Q \{\mathbbm 1\{D_1 = 2\} & = \mathbbm 1\{D_0 = 2\} \mid D_0 \neq 1, D_1 \neq 1\} = 1~, \\
    \label{eq:irrelevance2} Q \{\mathbbm 1\{D_2 = 1\} & = \mathbbm 1\{D_0 = 1\} \mid D_0 \neq 2, D_2 \neq 2\} = 1~.
\end{align}
The condition in \eqref{eq:irrelevance1} states that if crossing the cutoff for field 1 does not cause the student to switch to field 1, then it does not cause them to switch to or away from field 2. A similar interpretation applies to \eqref{eq:irrelevance2}. \cite{lee2023treatment} show the set of all distributions that satisfy \eqref{eq:monotonicity1}--\eqref{eq:irrelevance2} are equivalent to a strict one-to-one targeting model with $|\mathcal Z| = 3$ and $|\mathcal D^\dagger| = 2$; it therefore follows from Remark \ref{ex:lee2023} that any $Q$ that satisfies \eqref{eq:monotonicity1}--\eqref{eq:irrelevance2} also satisfies Assumption \ref{as:encouragement}. Here, we establish directly that \eqref{eq:monotonicity1}--\eqref{eq:irrelevance2} imply Assumption \ref{as:encouragement} with $z^\ast(0) = 0$, $z^\ast(1) = 1$, and $z^\ast(2) = 2$. To show $z^\ast(0) = 0$, we prove by contradiction that
\[ Q \{D_0 \neq 0, D_1 = 0\} = 0~. \]
Suppose with positive probability that $D_0 \neq 0$ but $D_1 = 0$. On this event, \eqref{eq:monotonicity1} implies $D_0 \neq 1$, so $D_0 = 2$. But $D_1 = 0$, which contradicts \eqref{eq:irrelevance1}. Similarly,
\[ Q \{D_0 \neq 0, D_2 = 0\} = 0~, \]
and therefore $z^\ast(0) = 0$. To show $z^\ast(1) = 1$, first note \eqref{eq:monotonicity1} implies
\[ Q \{D_1 \neq 1, D_0 = 1\} = 0~. \]
It therefore remains to argue by contradiction that
\begin{equation} \label{eq:klm-contra-1}
    Q \{D_1 \neq 1, D_2 = 1\} = 0~.
\end{equation}
Suppose with positive probability that $D_1 \neq 1$ but $D_2 = 1$. On this event, \eqref{eq:monotonicity1} implies $D_0 \neq 1$. If $D_0 = 2$,  then \eqref{eq:monotonicity2} implies $D_2 = 2$, a contradiction to $D_2 = 1$; if instead $D_0 = 0$, then because we assume $D_2 = 1$, \eqref{eq:irrelevance2} implies $D_2 \neq 1$, another contradiction. Therefore, \eqref{eq:klm-contra-1} holds, and $z^\ast(1) = 1$. $z^\ast(2) = 2$ can be established following similar arguments.
\end{example}

\newpage
\bibliography{multivalued}

\end{document}